\documentclass[11pt]{article}

\usepackage{geometry}
\geometry{margin=1in}
\usepackage[T1]{fontenc}
\usepackage{amsmath}
\usepackage{amssymb}
\usepackage{amsthm}
\usepackage{enumitem}
\usepackage{hyperref}
\usepackage{cleveref}
\usepackage{subcaption}
\usepackage{float}
\usepackage{mathtools}
\usepackage[ruled]{algorithm2e} 

\usepackage{graphicx}
\usepackage{xcolor}
\usepackage{url}
\usepackage{diagbox}
\usepackage[comma, numbers]{natbib}


\pagestyle{plain}

\usepackage{bm}

\usepackage{amsthm}

\newtheorem{definition}{Definition}[section]
\newtheorem{theorem}{Theorem}[section]
\newtheorem{lemma}{Lemma}[section]
\newtheorem{example}{Example}[section]

\newtheorem{assumption}{Assumption}[section]
\newtheorem{observation}{Observation}[section]

\newtheorem{claim}{Claim}[section]
\newtheorem{proposition}{Proposition}[section]
\newtheorem{remark}{Remark}[section]
\usepackage{algorithmic}

\def\+#1{\mathcal{#1}}
\def\-#1{\mathbb{#1}}

\newcommand{\notshow}[1]{{}}
\newcommand{\AutoAdjust}[3]{{ \mathchoice{ \left #1 #2  \right #3}{#1 #2 #3}{#1 #2 #3}{#1 #2 #3} }}
\newcommand{\Xcomment}[1]{{}}

\newcommand{\InParentheses}[1]{\AutoAdjust{(}{#1}{)}}
\newcommand{\InBrackets}[1]{\AutoAdjust{[}{#1}{]}}


\usepackage{amsmath,amsfonts,bm}









\def\eqref#1{equation~\ref{#1}}









\def\1{\mathbb{I}}

\def\eps{{\varepsilon}}










\DeclareMathAlphabet{\mathsfit}{\encodingdefault}{\sfdefault}{m}{sl}
\SetMathAlphabet{\mathsfit}{bold}{\encodingdefault}{\sfdefault}{bx}{n}











\newcommand{\E}{\mathbb{E}}

\newcommand{\R}{\mathbb{R}}



\DeclareMathOperator*{\argmax}{arg\,max}

\newcommand{\truth}{\mathrm{Truth}}

\newcommand{\gyr}[1]{{\color{purple} [Yanru: #1]}}

\title{On the Coordination of Value-Maximizing Bidders\thanks{Accepted at ICML 2026.}}
\author{
Yanru Guan\\
Peking University\\
\texttt{piscesguan@stu.pku.edu.cn}
\and
Jiahao Zhang\\
Carnegie Mellon University\\
\texttt{jiahaozhang@cmu.edu}
\and
Zhe Feng\\
Google DeepMind\\
\texttt{zhef@google.com}
\and
Tao Lin\\
Microsoft Research \& The Chinese University of Hong Kong, Shenzhen\\
\texttt{lintao@cuhk.edu.cn}
}
\date{}

\begin{document}

\maketitle


\begin{abstract}
While the auto-bidding literature predominantly considers independent bidding, we investigate the coordination problem among multiple auto-bidders in online advertising platforms.
Two motivating scenarios are: collaborative bidding among multiple bidders managed by a third-party bidding agent, and strategic bid selection for multiple ad campaigns managed by a single advertiser.
We formalize this coordination problem as a theoretical model and investigate the coordination mechanism where only the highest-value bidder competes with outside bidders, while other coordinated bidders refrain from competing. 
We demonstrate that such a coordination mechanism dominates independent bidding, improving both Return-on-Spend (RoS) compliance and the total value accrued for the participating auto-bidders or ad campaigns, for a broad class of auto-bidding algorithms. 
Additionally, our simulations on synthetic and real-world datasets support the theoretical result that coordination outperforms independent bidding. 
These findings highlight both the theoretical potential and the practical robustness of coordinated auto-bidding in online auctions.
\end{abstract}



\section{Introduction}
Online advertising is a prominent approach to monetize search engines, news feeds, social media, and e-commerce. Among them, real-time auctions have been widely used to connect advertisers and users efficiently. In modern online advertising platforms, advertisers endeavor to maximize their campaign effectiveness, quantified by conversions or other pertinent metrics, through precise bid management that considers targeted Return-on-Spend (RoS)~\citep{aggarwal2019autobidding}, which is also well-known as auto-bidding problem. To this end, a wide spectrum of bidding methodologies has been formulated, leveraging concepts from optimization theory, online learning paradigms, and game-theoretic principles~\citep{zhao2018deep, aggarwal2019autobidding, lee2013real,babaioff2021non, golrezaei2021bidding, deng2021towards,   balseiro2021landscape, gao2022bidding,stram2024mystique}.

Despite the wealth of sophisticated research in online advertising, existing literature predominantly concentrates on optimizing the value or utility from the perspective of \emph{individual} bidders. 
This focus, while valuable, often overlooks a crucial aspect of real-world advertising dynamics: the possibility of \emph{coordinated} bidding strategies. In practice, it is common for multiple bidders to form coalitions, perhaps by authorizing a third-party advertising platform to manage their bids collectively. Alternatively, a single large advertiser, such as major e-commerce players like Amazon, Temu, or Shein, might control numerous distinct advertising campaigns. These entities often have a portfolio of advertisements they are willing to display to users, necessitating a coordinated approach across their various campaigns to maximize overall effectiveness, rather than optimizing each campaign in isolation.
Consequently, a deeper understanding of how auto-bidding algorithms can be coordinated across multiple allied bidders or campaigns, and what are the benefits of coordination over independent bidding, is becoming increasingly critical.
Such insights might be useful for industry practices and unlock further efficiencies in online advertising markets.

In this work, we investigate the coordination problem among multiple auto-bidding algorithms in repeated auctions.  
We formulate a theoretical model in which $N$ auto-bidders participate in $T$ rounds of second-price auctions.
The values of these bidders are independent and identically distributed (i.i.d.), motivated by the practice that only similar bidders are chosen to compete in ad auctions. 
These bidders may form a coalition to determine their bids jointly, competing against bidders outside the coalition.
Each auto-bidder is modeled as a value-maximizing agent subject to an RoS constraint (i.e., non-negative overall utility).
We consider a simple form of coordination: only the bidder with the highest value inside the coalition submits a positive bid to compete against the outside bidders, while the other $N-1$ coalition members refrain from competing. 
In contrast, in the independent bidding scenario, all the $N$ bidders submit positive bids to compete against each other and the outside bidders. We aim to compare the coalition bidders' welfare in the coordinated and independent bidding scenarios.

The main contributions of this paper are as follows. 
Theoretically, we prove that the highest-value-bidder-compete coordination mechanism mentioned above, although straightforward, enjoys a significant advantage over independent bidding: \emph{By coordination, bidders in the coalition acquire higher values, as well as achieve lower RoS constraint violations, compared to independent bidding, for a broad class of auto-bidding algorithms.}

In particular, we identify a condition on the distributions of bidders' values and outside bids -- Assumption~\ref{ass:highest-value-assumption} -- under which each coalition bidder's total utility in $T$ periods is higher, hence the bidder's RoS constraint is less violated, under coordination than under independent bidding (Theorem \ref{thm:utility/RoS-constraint}). This conclusion holds for any auto-bidding algorithm that overbids (a common feature for value-maximizing algorithms).  Assumption~\ref{ass:highest-value-assumption} is necessary and sufficient: when it does not hold, there exists an overbidding auto-bidding algorithm for which coordination hurts the coalition bidders, reducing their utility (RoS constraint compliance). 





We then consider the total value of coalition bidders. We prove that, as long as bidders use \emph{mirror-descent} algorithms -- a broad class of state-of-the-art auto-bidding algorithms under RoS constraints -- then the total value of the coalition bidders will be improved by coordination, regardless of whether Assumption \ref{ass:highest-value-assumption} holds (Theorem \ref{thm:value-wo-ass}).
If Assumption \ref{ass:highest-value-assumption} holds, then such coordinated mirror-descent algorithms not only dominate independent bidding but also achieve the best possible coalition value among \emph{all} possible coordination mechanisms, as $T\to\infty$ (Theorem \ref{thm:value}).

Beyond the i.i.d.~setting, we extend our results to cases where coalition bidders' values are independent but not identically distributed. In this non-i.i.d.~setting, we demonstrate that -- under slightly stronger assumptions -- the same simple coordination algorithm still yields better total value and total RoS violation compared to independent bidding. But in contrast to the i.i.d.~case, we cannot guarantee that this algorithm improves the performance for every individual bidder. Designing a coordination mechanism that achieves superior individual results in the non-i.i.d.~setting remains an interesting open problem.

Finally, Section \ref{sec:experiments} demonstrates, by experiments on synthetic and real-world datasets, that our coordinated mechanism consistently outperforms independent bidding, validating our theoretical results under realistic market conditions.




\subsection{Related Work}
Bidder coordination, often referred to as cartels or collusion in economics, has been extensively studied in traditional auction theory \citep{robinson1985collusion,hendricks1989collusion,marshall2007bidder}. Prior research has studied collusive behavior in first-price auctions \citep{lopomo2011bidder,pesendorfer2000study}, second-price auctions \citep{mailath1991collusion,graham1987collusive}, collusion-proof mechanism design \citep{che2009optimal}, auction design under collusion \citep{pavlov2008auction}, and collusion detection \citep{chotibhongs2012analysis}. While these studies typically consider bidders forming collusions \emph{autonomously} and reaching equilibrium outcomes in static auction settings \citep{bergemann2016bayes,fu2025learning}, we instead introduce a \emph{central planner} that coordinates bidders to improve their total welfare in dynamic auction environments.

Recently, coordination behavior has also been observed in online advertising auctions \citep{decarolis_bid_2023}, sparking growing interest in understanding and analyzing such phenomena. \citet{decarolis2020marketing} studied the impact of coordination on the revenue and efficiency of GSP and VCG auctions. \citet{romano2022power} investigated the computational challenges faced by a media agency coordinating bidders under those mechanisms. \citet{chen_coordinated_2023} analyzed coordinated online bidding in repeated second-price auctions with budget constraints.
While those works focus on coordinating \emph{utility maximizers}, we study \emph{value maximizers}. To the best of our knowledge, this is the first work to investigate coordination among value maximizers, which exhibit different properties from utility maximizers.

Our work is also related to the large literature on auto-bidding. For example, \citet{paes2024complex} studied the dynamics of systems with multiple auto-bidders and revealed complex behaviors such as bi-stability, periodic orbits, and quasi-periodicity, while \citet{aggarwal2025multi} analyzed optimal bidding strategies for advertisers across multiple platforms. For a comprehensive overview of the literature, we refer readers to the survey by \citet{aggarwal_auto-bidding_2024}.
Our study complements the predominant individual-optimization perspective of the auto-bidding literature. 


\section{Model: Coordinated Auto-Bidding}
\label{sec:model}

\paragraph{Repeated Second-Price Auctions with Coalition.}
We consider the scenario where $N$ \emph{auto-bidders} (or \emph{bidders}) participate in $T$ rounds of repeated second-price auctions.
In each round $t\in [T] = \{1, \ldots, T\}$, one item (e.g., ad slot) is for sale, and each bidder $i \in [N] = \{1, \ldots, N\}$ draws a value $v_{i,t} \in [0, B]$ for the item i.i.d.~from a continuous distribution $F$ with full support on $[0, B]$.
Let $\bm v_t = (v_{1,t}, \ldots, v_{N,t})$ be the vector of values of all bidders at round $t$. 
Let $b_{i,t}$ denote bidder $i$'s bid at round $t$.
The $N$ bidders may form a coalition.
Our model captures two possible types of coalitions in ad auctions:
\begin{enumerate}
    \item $N$ advertisers, each with an ad, instruct their auto-bidding algorithms to coordinate with each other. 
    \item One advertiser owns $N$ ads. Instead of using $N$ auto-bidding algorithms to bid for the $N$ ads independently, the advertiser instructs the algorithms to coordinate.
\end{enumerate}
In either case, usually in practice, only similar advertisers/ads are selected to compete for an ad slot, which justifies the i.i.d.~value assumption.
In addition, we will show in Section \ref{sec:asymmetric} that our results extend to non-i.i.d.~value settings.

There is a competing bid $d_t^O \in [0, B]$ from outside of the coalition, assumed to be i.i.d.~sampled from another distribution $D$ every round. Our model also captures second-price auctions with reservation price, because $d_t^O$ can be the maximum of the reservation price and the outside competing bid. 
We denote the competing bid faced by bidder $i$ at round $t$ as 
\begin{equation}
    d_{i,t}=\max\big\{d^O_t, \max_{j\in [N]\setminus \{i\} }b_{j,t}  \,\big\}.
\end{equation}
Let $x_{i,t}:=\1\{b_{i,t}\ge d_{i,t}\}\in\{0,1\}$ indicate whether bidder $i$ wins the item (ignoring tie-breaking for now), so bidder $i$'s payment is $p_{i,t}:= x_{i,t}d_{i,t}$ and utility is $u_{i,t}:= x_{i, t} v_{i, t} - p_{i, t} = x_{i,t}(v_{i,t} - d_{i,t})$ at round $t$.


%

\paragraph{Value Maximization under RoS Constraint.}
Each auto-bidder aims to maximize its total value while adhering to the Return on Spend (RoS) constraint. The formal definition of the optimization problem for bidder $i$ is as follows:
\begin{align*}
    \max_{ \{b_{i, t}\}_{t=1}^T } & \sum \nolimits_{t=1}^T v_{i,t}\cdot x_{i,t} & \text{(bidder $i$'s value)}\\
    \text{s.t.} \hspace{0.8em} & \sum \nolimits_{t=1}^T \Big( \underbrace{ v_{i,t}\cdot x_{i,t} - p_{i,t} }_{u_{i, t}}\Big)
    \ge 0 & \text{(RoS constraint)}. 
\end{align*}
The RoS constraint ensures that for every dollar spent, at least one dollar of value is generated; it is equivalent to requiring the bidder's total utility to be non-negative.\footnote{In fact, the general RoS constraint requires $\sum_{t=1}^T v_{i, t} \cdot x_{i, t} - \tau \cdot \sum_{t=1}^T p_{i, t} \geq 0$, i.e., at least $\tau$ dollar of value will be generated for every dollar spent.  However, it is without loss of generality to assume $\tau = 1$ in the above optimization problem 
because we can rescale the bidders' values by dividing by $\tau$. 
}

\paragraph{Independent and Coordinated Auto-Bidding.}
Each auto-bidder runs an algorithm $A$ that learns to bid from history, in order to maximize the bidder's total value subject to RoS constraint. Formally, at each round \( t \), bidder $i$'s algorithm $A(\cdot)$ takes the
historical information $H_{i, t}$ available to bidder $i$ -- for example, the past values, bids, and allocations of the bidder: $H_{i, t} = \{v_{i, t'}, b_{i, t'}, x_{i, t'}\}_{t'=1}^{t-1} \cup\{v_{i, t}\}$ -- as input
and outputs the bid $b_{i, t}$ for the current round.
We refer to the scenario where the auto-bidding algorithms are run independently as \emph{independent bidding}, formalized below:  

\begin{algorithm}[H]
\caption{Independent Bidding}
\label{alg:independent-bidding}
\begin{algorithmic}
\FOR{$t=1, 2, \dots, T$}
\STATE \hspace{-0.7em} Each bidder $i \in [N]$ observes their value \( v_{i,t} \sim F \) and bids \( b_{i,t} = A(H_{i, t})\).
\STATE \hspace{-0.7em} Each bidder $i \in [N]$ obtains allocation \( x_{i,t} \) and pays \( p_{i,t} \).
\ENDFOR
\end{algorithmic}

\end{algorithm}

Next, we define a \emph{coordinated bidding} scenario where, at each round, only the highest-value bidder in the coalition places a positive bid while all other bidders bid zero.
We assume that ties are resolved naturally.\footnote{Ties happen with probability $0$ given a continuous $F$.} 
This is formalized in Algorithm \ref{alg:coordinated-bidding}. 

\begin{algorithm}[H]
\caption{Coordinated Bidding}
\label{alg:coordinated-bidding}
\begin{algorithmic}
\FOR{$t=1,2,\dots,T$}
\STATE Each bidder \( i \in [N] \) observes their value \( v_{i,t} \sim F\).
\STATE Let \( i^* = \arg\max_{i\in[N]} v_{i,t} \).
\STATE Bidder \(i^* \) bids \( b_{i^*,t} = A( H_{i^*, t}) \). 
\STATE Bidder \( i \neq i^* \) bids \( b_{i,t} = 0 \).
\STATE Each bidder $i \in [N]$ obtains allocation \( x_{i,t} \) and pays \( p_{i,t} \).
\ENDFOR
\end{algorithmic}
\end{algorithm}


\section{Coordination Reduces RoS Violation}
\label{sec:RoS}

In this section, we analyze the extent to which the RoS constraint of every auto-bidder is maintained/violated.
We will identify the necessary and sufficient condition under which coordinated bidding results in a smaller RoS violation compared to independent bidding (equivalently, every bidder in the coalition obtains a higher utility through coordination), for any auto-bidding algorithm. 



Formally, let $U_{i, T}^{C, A}, U_{i, T}^{I, A}$ be the total utility of bidder $i$ during the $T$ rounds using auto-bidding algorithm $A$ under coordinated and independent bidding scenarios, respectively: 
\begin{equation*}
    U_{i, T}^{C, A} = \sum \nolimits_{t=1}^T u_{i, t}^{C, A} \qquad  U_{i, T}^{I, A} = \sum \nolimits_{t=1}^T u_{i, t}^{I, A}. 
\end{equation*}
RoS constraint requires $U_{i, T}^{C \text{ or } I, A} \ge 0$. If $U_{i, T}^{C \text{ or } I, A} \ge - c$, we say the RoS constraint for bidder $i$ is violated by $c$.

We allow the bidders to use \emph{any} auto-bidding algorithm that \emph{(weakly) overbids}: namely, $A(\cdot)$ can be any function that outputs bid $b_{i, t} = A(H_{i, t}) \ge v_{i, t}$. Recall that we consider value maximization in second-price auctions. If bidders underbid ($b_{i, t} < v_{i, t}$), they will always obtain non-negative utility and satisfy the RoS constraint, but not necessarily maximize their values. 
So we assume overbidding. 

We introduce a condition/assumption on
the value and outside bid distributions, $F$ and $D$.
Let $(x)_+ = \max\{x, 0\}$ denote the positive part of a number. Let $v_{(N)}, v_{(N-1)}$ be the largest and second-largest values among $N$ i.i.d.~samples from $F$.
\begin{assumption} \label{ass:highest-value-assumption}
In expectation, the advantage of the second-largest value $v_{(N-1)}$ over the outside bid $d^O$ is larger than the advantage of $d^O$ over the largest value $v_{(N)}$: 
\[\Delta:=\E_{F, D}\big[ (v_{(N-1)} - d^O)_+ - ( d^O - v_{(N)})_+ \big] \ge 0.\] 
\end{assumption}

Intuitively, Assumption \ref{ass:highest-value-assumption} says that the largest two values among the coalition bidders are competitive enough against the outside bid.
This assumption is satisfied when the number $N$ of coalition bidders is large enough: 
\begin{observation}
\label{obs:large-N-assumption-holds}
For any full-support distributions $F$ and $D$, Assumption \ref{ass:highest-value-assumption} is satisfied when $N$ is large enough. 
\end{observation}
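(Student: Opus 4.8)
The plan is to prove the statement by showing that $\Delta$ converges to a strictly positive limit as $N\to\infty$, so that $\Delta\ge 0$ must hold for all sufficiently large $N$. The guiding intuition is that when $N$ is large, both of the top two coalition values $v_{(N)}$ and $v_{(N-1)}$ are pushed up to the upper endpoint $B$ of $F$'s support, so $(v_{(N-1)}-d^O)_+$ approaches $B-d^O$ while $(d^O-v_{(N)})_+$ vanishes; the limit of $\Delta$ is then $B-\E_D[d^O]$, which is positive precisely because $d^O$ does not concentrate at $B$.

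First I would establish that both order statistics converge to $B$ in probability. Since $F$ has full support on $[0,B]$, for every $\epsilon>0$ the tail mass $q_\epsilon:=1-F(B-\epsilon)$ is strictly positive. The event $\{v_{(N-1)}\le B-\epsilon\}$ says that at most one of the $N$ i.i.d.\ samples exceeds $B-\epsilon$, so
\[
\Pr\big[v_{(N-1)}\le B-\epsilon\big]=(1-q_\epsilon)^N+N\,q_\epsilon\,(1-q_\epsilon)^{N-1}\xrightarrow[N\to\infty]{}0,
\]
and the analogous (faster) bound holds for $v_{(N)}$. Hence $v_{(N)},v_{(N-1)}\to B$ in probability.

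Next I would pass to the limit inside the expectation defining $\Delta$. The two terms of the integrand are uniformly bounded, both lying in $[0,B]$, and $d^O\le B$ is independent of the $v$'s. By the continuous-mapping theorem, $(v_{(N-1)}-d^O)_+\to (B-d^O)_+=B-d^O$ and $(d^O-v_{(N)})_+\to (d^O-B)_+=0$ in probability, so the bounded convergence theorem yields
\[
\lim_{N\to\infty}\Delta=\E_{D}\big[B-d^O\big]-0=B-\E_{D}[d^O].
\]
Finally, because $D$ has full support on $[0,B]$ it places positive mass strictly below $B$, giving $\E_D[d^O]<B$ and therefore $\lim_{N\to\infty}\Delta=B-\E_D[d^O]>0$. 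Consequently there is a finite $N_0$ with $\Delta>0\ge 0$ for all $N\ge N_0$, which is exactly Assumption~\ref{ass:highest-value-assumption}.

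The only delicate point is the rigorous interchange of limit and expectation, and this is where I would spend the care. Bounded convergence suffices as stated, but if one prefers an elementary argument one can avoid it entirely: lower-bound $\E[(v_{(N-1)}-d^O)_+]\ge \E[(B-\epsilon-d^O)_+]\cdot\Pr[v_{(N-1)}\ge B-\epsilon]$ using independence of $d^O$, upper-bound $\E[(d^O-v_{(N)})_+]\le B\big(\Pr[v_{(N)}<B-\epsilon]+\Pr[d^O>B-\epsilon]\big)$, then send $N\to\infty$ (killing the second term and driving the probability in the first to $1$) and finally let $\epsilon\to 0$ to recover $B-\E_D[d^O]$. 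Either route makes the main step routine; the substantive content is simply the observation that the top two order statistics escape to $B$ while the outside bid does not.
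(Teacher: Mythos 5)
Your proof is correct and follows essentially the same route as the paper: show that the top two order statistics converge to $B$, pass the limit through the bounded expectation to get $\lim_{N\to\infty}\Delta = B-\E_D[d^O]$, and use full support of $D$ to conclude this limit is strictly positive. The only difference is cosmetic — you establish convergence in probability via an explicit binomial tail bound where the paper uses Borel--Cantelli for almost-sure convergence — and both suffice for the bounded-convergence step.
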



Assumption \ref{ass:highest-value-assumption} is also satisfied by some natural distributions $F$ and $D$ with small $N$: 
\begin{example}
The following examples satisfy Assump.~\ref{ass:highest-value-assumption}: 
\begin{itemize}[itemsep=0pt, leftmargin=15pt]
    \item $N=4,F=U[0,1],D=U[0,1]$, with $\Delta=1/6$.
    \item $N=3, F=U[0,1], D=\mathrm{Beta}(3,2)$, with $\Delta=1/40$. 
\end{itemize}
\end{example}


The main result of this section is the following Theorem \ref{thm:utility/RoS-constraint}, which shows that Assumption \ref{ass:highest-value-assumption} is the necessary and sufficient condition for coordination to reduce RoS constraint violation compared to independent bidding. 
When $\Delta \ge 0$, coordination 
improves the utility of every bidder in the coalition, for any auto-bidding algorithm that weakly overbids.
When $\Delta<0$, this guarantee fails: 
coordination can be strictly worse than independent bidding for some overbidding algorithm.

\begin{theorem}
\label{thm:utility/RoS-constraint}
If ~\cref{ass:highest-value-assumption} holds (i.e., $\Delta \ge 0$), then for any (weakly) overbidding auto-bidding algorithm $A$ and every bidder $i\in[N]$,
\[
\mathbb{E}\!\left[U_{i, T}^{C,A}-U_{i, T}^{I,A}\right]\ge \tfrac{T\Delta}{N} \ge 0.
\]
If Assumption~\ref{ass:highest-value-assumption} fails (i.e., $\Delta<0$), then there exists an overbidding algorithm $A$ such that, for every bidder $i\in [N]$, for sufficiently large $T$,
\[
\mathbb{E}\!\left[U_{i, T}^{C,A} - U_{i, T}^{I,A}\right] < 0.
\]
\end{theorem}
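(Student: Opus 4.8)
The plan is to reduce both directions to a single per-round comparison of the coalition's \emph{total} utility, and then to pass to individual bidders by symmetry. Write $u_t^C=\sum_i u_{i,t}^{C,A}$ and $u_t^I=\sum_i u_{i,t}^{I,A}$ for the coalition's total utility in round $t$; since at most one coalition member wins in either scenario, each of these is just the winner's utility (or $0$). Because the values are i.i.d.\ and both the coordination rule and the algorithm $A$ are symmetric in the bidders, $\E[U_{i,T}^{C,A}]$ is the same for every $i$ and equals $\tfrac1N\E[\sum_t u_t^C]$, and likewise for independence. Hence it suffices to control $\E[\sum_{t=1}^T(u_t^C-u_t^I)]$ and divide by $N$. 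I couple the two scenarios on the same realizations of the values $\{v_{i,t}\}$ and outside bids $\{d_t^O\}$ (legitimate, as these are exogenous); the bids may differ across scenarios, but every estimate I use invokes only the overbidding property $b_{i,t}\ge v_{i,t}$.

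For sufficiency ($\Delta\ge0$) I will establish the pointwise inequality
\[
u_t^C-u_t^I\;\ge\;(v_{(N-1)}-d_t^O)_+-(d_t^O-v_{(N)})_+
\]
by a three-case analysis on the position of $d_t^O$ relative to the two largest values. If $v_{(N-1)}\ge d_t^O$, overbidding forces the coalition to win in both scenarios; under independence the two highest-value bidders both bid $\ge v_{(N-1)}$, so the winner's payment is $\ge v_{(N-1)}$ and $u_t^I\le v_{(N)}-v_{(N-1)}$, while $u_t^C=v_{(N)}-d_t^O$, leaving a gap of exactly $v_{(N-1)}-d_t^O$. If $v_{(N-1)}<d_t^O\le v_{(N)}$, both scenarios obey $u^I_t\le v_{(N)}-d_t^O=u_t^C$, so the gap is $\ge0$. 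If $d_t^O>v_{(N)}$, any independent win is at a loss so $u_t^I\le0$, while $u_t^C\ge v_{(N)}-d_t^O$, giving a gap $\ge-(d_t^O-v_{(N)})$. In each case the right-hand side above is matched. Taking expectations, summing over $t$, and dividing by $N$ yields $\E[U_{i,T}^{C,A}-U_{i,T}^{I,A}]\ge T\Delta/N$.

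For necessity ($\Delta<0$) the key observation is that a \emph{memoryless} overbidding rule can never hurt coordination, since intra-coalition competition always inflates the independent winner's payment; so the counterexample must exploit the history. I will use that the history $H_{i,t}$ records the forced zero bids from rounds in which $i$ was not the within-coalition maximum. Define $A$ to bid truthfully ($b=v$) as long as the bidder has never submitted a zero bid, and to bid the ceiling $b=B$ once a past zero bid appears; both options are valid overbids. Under independence no bidder ever bids zero (truthful bids are a.s.\ positive), so $A$ stays truthful forever and $u_t^I=\1\{v_{(N)}\ge d_t^O\}\big(v_{(N)}-\max\{d_t^O,v_{(N-1)}\}\big)$. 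Under coordination, after a geometrically short transient the round's maximal bidder has a.s.\ already submitted a zero bid, hence bids $B\ge d_t^O$ and wins at price $d_t^O$, so $u_t^C=v_{(N)}-d_t^O$.

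A direct computation then shows that in this ``detected'' regime $u_t^C-u_t^I$ equals exactly $(v_{(N-1)}-d_t^O)_+-(d_t^O-v_{(N)})_+$, whose expectation is $\Delta$. The only correction is the event that the round's maximal bidder was maximal in every prior round, which has probability $(1/N)^{t-1}$ and (since there $A$ still bids truthfully) contributes a nonnegative term bounded by $B\,(1/N)^{t-1}$. Summing the geometric series gives $\E[\sum_{t=1}^T(u_t^C-u_t^I)]=T\Delta+O(1)$, which is negative for all large $T$; dividing by $N$ and using symmetry delivers $\E[U_{i,T}^{C,A}-U_{i,T}^{I,A}]<0$ for every $i$. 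I expect the main obstacle to be exactly this construction: pinpointing a legitimate history signal that distinguishes the two scenarios (here, the forced zero bids), and verifying that the resulting adaptive algorithm realizes the worst-case value $\Delta$ while remaining a valid overbidding algorithm in both worlds.
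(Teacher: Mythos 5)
Your proof is correct, and the negative direction is essentially the paper's own construction (truthful until a past zero bid is observed, then bid the ceiling; geometric transient; exact gap $T\Delta/N + O(1)$). The positive direction, however, takes a genuinely different decomposition. The paper argues per bidder via a truthful-bidding intermediate: Lemma~\ref{lem:U-truthful-vs-U-I-A} uses DSIC of the second-price auction to get $U_{i,T}^{\truth}\ge U_{i,T}^{I,A}$ pointwise, and Lemma~\ref{lem:U-C-A-vs-U-truthful} compares coordinated bidding to truthful bidding in expectation. You instead prove a single \emph{pointwise, coalition-level} inequality $u_t^C-u_t^I\ge (v_{(N-1)}-d_t^O)_+-(d_t^O-v_{(N)})_+$ under a coupling of the exogenous randomness, and then distribute the gain equally across bidders by exchangeability (i.i.d.\ values, symmetric algorithm and coordination rule). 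Your three-case analysis on the position of $d_t^O$ relative to $v_{(N-1)}$ and $v_{(N)}$ is sound: in the first case the independent winner's payment is at least $v_{(N-1)}$ because at least one of the two highest-value bidders is always among the winner's competitors, and the coalition cannot lose since its top bid is at least $v_{(N)}\ge d_t^O$ (though the gap there is ``at least,'' not ``exactly,'' $v_{(N-1)}-d_t^O$, since $u_t^I$ can fall below $v_{(N)}-v_{(N-1)}$; the inequality direction is what you need). What each approach buys: the paper's route yields the per-bidder bound directly without any symmetry step and isolates the clean DSIC observation, whereas yours gives a stronger pointwise coalition-level statement and, as a byproduct, is exactly the argument needed for the non-i.i.d.\ extension (Theorem~\ref{thm:asym-violation}), where only the total-utility comparison survives because the symmetry step is unavailable.
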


\begin{remark}
{\em
We can also obtain high-probability results from Theorem \ref{thm:utility/RoS-constraint}. 
Because values and outside bids are bounded by $B$, 
we can use Azuma's inequality to prove that: when $\Delta > 0$, for any overbidding algorithm $A$,  
\begin{equation*}
     U_{i, T}^{C, A} ~ \ge ~ U_{i, T}^{I, A}  + \tfrac{T\Delta}{2N} 
\end{equation*}
holds with probability at least $1 - \exp\big(-\tfrac{T \Delta^2}{32 B^2 N^2}\big)$. 
}
\end{remark}



\subsection{Proof of Theorem \ref{thm:utility/RoS-constraint}}
The proof of the second (negative) direction is in Appendix~\ref{app:negative}. Below we prove the first (positive) direction.

Let $U_{i, T}^{\truth} = \sum_{t=1}^T u_{i, t}^\truth$ be the total utility of bidder $i$ when all bidders in the coalition bid their values truthfully: $b^\truth_{i, t} = v_{i, t}$. 
We will prove Theorem \ref{thm:utility/RoS-constraint} by proving $\E[U_{i, T}^{C, A}] \ge \E[U_{i, T}^{\truth}] + \frac{\Delta T}{N}$ and $U_{i, T}^{\truth} \ge U_{i, T}^{I,A}$.
The second inequality is formalized below:  
\begin{lemma} \label{lem:U-truthful-vs-U-I-A}
For any sequence of values and outside bids $(\bm v_t, d_t^O)_{t=1}^T$, 
$U_{i, T}^{\truth} \ge U_{i, T}^{I,A}$.
\end{lemma}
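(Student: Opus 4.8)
The plan is to prove the inequality round by round: I would show that for every fixed round $t$ we have $u_{i,t}^{\truth} \ge u_{i,t}^{I,A}$, and then sum over $t \in [T]$. Since the statement fixes the value and outside-bid sequence $(\bm v_t, d_t^O)_{t=1}^T$, I work with a single realization throughout and compare the two bidding behaviors on it—truthful bidding $b_{i,t}^{\truth} = v_{i,t}$ for all bidders versus independent bidding with the overbidding algorithm $A$.

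The first ingredient is that truthful bidding never incurs negative utility in a second-price auction. Writing $d_{i,t}^{\truth} = \max\{d_t^O, \max_{j\neq i} v_{j,t}\}$ for the competing bid faced by bidder $i$ when everyone bids truthfully, bidder $i$ wins exactly when $v_{i,t}\ge d_{i,t}^{\truth}$, so its per-round utility is $u_{i,t}^{\truth} = (v_{i,t} - d_{i,t}^{\truth})_+ \ge 0$.

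The second, and key, ingredient is that overbidding can only raise the competing bid that bidder $i$ faces. Let $d_{i,t}^{I,A} = \max\{d_t^O, \max_{j\neq i} b_{j,t}\}$ be the competing bid under algorithm $A$, where $b_{j,t} = A(H_{j,t})$. Because $A$ weakly overbids, $b_{j,t} \ge v_{j,t}$ for every $j$, regardless of how the histories $H_{j,t}$ differ across the two scenarios; hence $\max_{j\neq i} b_{j,t} \ge \max_{j\neq i} v_{j,t}$ and therefore $d_{i,t}^{I,A} \ge d_{i,t}^{\truth}$.

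With these two facts the per-round comparison follows from a short case analysis. If bidder $i$ loses under $A$, then $u_{i,t}^{I,A} = 0 \le u_{i,t}^{\truth}$. If bidder $i$ wins under $A$, then $u_{i,t}^{I,A} = v_{i,t} - d_{i,t}^{I,A}$; when this is nonpositive we are done since $u_{i,t}^{\truth}\ge 0$, and when it is positive we have $v_{i,t} > d_{i,t}^{I,A} \ge d_{i,t}^{\truth}$, so $u_{i,t}^{\truth} = v_{i,t} - d_{i,t}^{\truth} \ge v_{i,t} - d_{i,t}^{I,A} = u_{i,t}^{I,A}$. Summing $u_{i,t}^{\truth}\ge u_{i,t}^{I,A}$ over $t$ yields the claim. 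I do not anticipate a serious obstacle; the only point requiring care is that the histories—and hence the realized bids—genuinely differ between the truthful and the $A$ scenarios, so the argument must rely solely on the pointwise overbidding property $A(H_{j,t}) \ge v_{j,t}$ rather than on any equality of bids across the two scenarios.
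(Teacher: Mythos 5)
Your proof is correct and follows the same round-by-round domination strategy as the paper, which simply cites the dominant-strategy incentive compatibility of the second-price auction. In fact, your version is the more careful one: because the opponents' bids differ between the truthful and the $A$ scenarios, DSIC alone (which fixes opponents' bids) does not immediately give $u_{i,t}^{\truth}\ge u_{i,t}^{I,A}$, and your two ingredients --- nonnegativity of truthful utility and the fact that overbidding opponents only raise the competing bid $d_{i,t}$ --- are exactly what is needed to close that gap.
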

\begin{proof}
This lemma directly follows from the dominant-strategy-incentive-compatibility of second-price auction, which ensures that each bidder's truthful-bidding utility $u^\truth_{i, t}$ is weakly larger than the non-truthful bidding utility $u^{I, A}_{i, t}$ at every round. 
\end{proof}

We then prove $\E[U_{i, T}^{C, A}] \ge \E[U_{i, T}^{\truth}] + \frac{T \Delta}{N}$: 

\begin{lemma} \label{lem:U-C-A-vs-U-truthful}
Under Assumption \ref{ass:highest-value-assumption}, for any auto-bidding algorithm that overbids, for every bidder $i\in[N]$, 
\begin{equation*}
\E\left[U_{i, T}^{C, A} - U_{i, T}^{\truth}\right]=\sum \nolimits_{t=1}^T \InParentheses{u_{i,t}^{C,A} - u_{i,t}^{\truth}}\ge \tfrac{T \Delta}{N}. 
\end{equation*}
\end{lemma}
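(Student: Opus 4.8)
The plan is to reduce the statement to a round-by-round comparison and exploit the symmetry of the i.i.d.\ setup. First I would write down explicit per-round utilities. Under coordination, only the highest-value bidder $i^*$ submits a positive bid, so it competes solely against $d_t^O$; since $A$ overbids we have $b_{i^*,t}=A(H_{i^*,t})\ge v_{i^*,t}=v_{(N)}$, and bidder $i^*$'s utility is $\1\{b_{i^*,t}\ge d_t^O\}\InParentheses{v_{(N)}-d_t^O}$ while every other coalition member earns $0$. Under truthful bidding, the highest-value bidder wins exactly when $v_{(N)}\ge d_t^O$ and pays $\max\{d_t^O,v_{(N-1)}\}$, so the only nonzero truthful utility at round $t$ is $\1\{v_{(N)}\ge d_t^O\}\InParentheses{v_{(N)}-\max\{d_t^O,v_{(N-1)}\}}$.

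Next I would invoke exchangeability: because values are i.i.d.\ and all bidders run the same algorithm $A$, the bidders are symmetric, so $\E[u_{i,t}^{C,A}]=\tfrac1N\,\E\big[\sum_j u_{j,t}^{C,A}\big]$, and the inner sum collapses to the single winner's utility $u_{i^*,t}^{C,A}$; the same identity holds for truthful bidding. Hence it suffices to lower-bound $\E[D_t]$, where $D_t:=u_{i^*,t}^{C,A}-u_{i^*,t}^{\truth}$ is the difference of the two winners' utilities at round $t$, show $\E[D_t]\ge\Delta$ for every $t$, and then sum over $t$ and divide by $N$. Note that the values and outside bid at round $t$ are independent of the history and thus share the same distribution across the two scenarios, so no explicit coupling of the (algorithm-dependent) bids is needed.

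The heart is a three-way case analysis on the ordering of $v_{(N-1)}, d_t^O, v_{(N)}$. If $v_{(N-1)}\ge d_t^O$ (so also $v_{(N)}\ge d_t^O$), both scenarios allocate to the same bidder, but coordination pays only $d_t^O$ while truthful pays $v_{(N-1)}$, giving $D_t=v_{(N-1)}-d_t^O\ge 0$. If $v_{(N-1)}<d_t^O\le v_{(N)}$, both win and pay $d_t^O$, so $D_t=0$. If $d_t^O>v_{(N)}$, the truthful utility is $0$, while under coordination bidder $i^*$ either loses (utility $0$) or, if $A$ overbids above $d_t^O$, wins with \emph{negative} utility $v_{(N)}-d_t^O$; since this quantity is negative, a pointwise bound gives $D_t\ge v_{(N)}-d_t^O=-\InParentheses{d_t^O-v_{(N)}}$. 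Taking expectations and assembling the three regimes yields exactly $\E[D_t]\ge \E[(v_{(N-1)}-d^O)_+]-\E[(d^O-v_{(N)})_+]=\Delta$, and summing gives $\E\big[U_{i,T}^{C,A}-U_{i,T}^{\truth}\big]\ge \tfrac{T\Delta}{N}$.

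The main obstacle is the third case: because $A$ is an arbitrary overbidding algorithm, I cannot determine whether bidder $i^*$ actually wins when its value lies below the outside bid, nor its exact bid. The resolution is that the possible loss is uniformly controlled — the utility is either $0$ (lose) or the negative number $v_{(N)}-d_t^O$ (win) — so the worst case over \emph{all} overbidding algorithms contributes at most $-\InParentheses{d^O-v_{(N)}}_+$, which is precisely the penalty term appearing in $\Delta$. This is exactly why $\Delta\ge 0$ (Assumption~\ref{ass:highest-value-assumption}) is the right hypothesis and why the guarantee is tight, matching the failure of the bound when $\Delta<0$ shown in the negative direction.
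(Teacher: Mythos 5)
Your proposal is correct and follows essentially the same route as the paper's proof: both reduce to the highest-value bidder via the probability-$1/N$ symmetry of i.i.d.\ values, and both establish the pointwise bound $u_{i^*,t}^{C,A}-u_{i^*,t}^{\truth}\ge (v_{(N-1)}-d_t^O)_+-(d_t^O-v_{(N)})_+$ by the same case analysis on the ordering of $v_{(N-1)}$, $d_t^O$, $v_{(N)}$, using only the overbidding property to control the loss when $d_t^O>v_{(N)}$. The only cosmetic difference is that you invoke exchangeability of the whole process to collapse to the winner's utility, whereas the paper conditions on the history and on the event that bidder $i$ holds the highest value; both are valid here.
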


\begin{proof}
See Appendix~\ref{app:proof-of-lemU-C-A-vs-U-truthful}.
\end{proof}

Lemmas \ref{lem:U-truthful-vs-U-I-A} and \ref{lem:U-C-A-vs-U-truthful} together prove Theorem \ref{thm:utility/RoS-constraint}. 





\section{Coordination Increases Bidders' Value}\label{sec:value}
In this section, we show that coordination can increase the total value of auto-bidders, compared to independent bidding. This holds for a broad class of auto-bidding algorithms and without any distributional assumptions. 

In particular, we consider any \emph{mirror-descent-based} auto-bidding algorithm.
They are state-of-the-art algorithms for value maximization under RoS constraint, with near-optimal $O(\sqrt{T})$ regret guarantees \citep{balseiro_best_2023, feng2023online}.
The algorithm maintains a parameter $\lambda_{i, t}$ for each bidder (which corresponds to the Lagrange multiplier of the dual optimization problem), and the bidder overbids by $b_{i, t} = (1+\frac{1}{\lambda_{i, t}}) v_{i, t}$ at each round. 
After observing a utility feedback, the bidder adjusts the parameter $\lambda_{i, t}$ in the opposite direction of the utility (i.e., positive utility decreases $\lambda_{i, t}$). 
The full algorithm is given in Algorithm \ref{alg:general-mirror-descent}. 

\begin{definition}[Legendre mirror map and Bregman divergence]
A $C^1$ function $h:(0,\infty)\to\R$ is a \emph{Legendre mirror map} if:
(i) $h$ is strictly convex on $(0,\infty)$;
(ii) $h'$ is a bijection from $(0,\infty)$ onto $\R$
(equivalently, $h'$ is strictly increasing with
$\lim_{\lambda\downarrow 0}h'(\lambda)=-\infty$ and $\lim_{\lambda\uparrow\infty}h'(\lambda)=+\infty$).
The associated Bregman divergence is
\[
D_h(\lambda,\mu)
:= h(\lambda)-h(\mu)-h'(\mu)(\lambda-\mu),
\quad \forall \lambda,\mu>0.
\]
\end{definition}

\begin{algorithm}[H]
\caption{Mirror-Descent RoS Auto-Bidder (MD-$h$)}
\label{alg:general-mirror-descent}
    \begin{algorithmic}
       \STATE Initialize multiplier $\lambda_{i, 1} = 1$, learning rate $\alpha = 1/\sqrt{T}$. 
       \FOR{$t=1,2,\cdots,T$}
       \STATE Observe value $v_{i,t}$ and set bid $b_{i,t}=(1 + \frac{1}{\lambda_{i,t}}) v_{i,t}$.
       \STATE Obtain allocation $x_{i,t}$ and pay $p_{i,t}$, compute realized utility $g_{i,t}=v_{i,t}\cdot x_{i,t}-p_{i, t}$.
       \STATE Update the multiplier: $$\lambda_{i,t+1} = \arg\min_{\lambda>0}\Big\{ \alpha \cdot g_{i,t}\cdot \lambda + D_h(\lambda,\lambda_{i,t})\Big\}. $$
       \ENDFOR
    \end{algorithmic}
\end{algorithm}


As an example, \citet{feng2023online} study the mirror-descent-based algorithm with the \emph{entropy mirror map} $h(\lambda)=\lambda\log\lambda-\lambda$, in which case the mirror-descent update is
\begin{equation} \label{eq:feng-update-rule} \lambda_{i,t+1}=\lambda_{i,t}\exp(-\alpha g_{i,t}).
\end{equation}
\citet{feng2023online}'s algorithm guarantees that each bidder's total RoS violation in $T$ rounds is at most $O(\sqrt{T} \log T)$ and total value is at most $O(\sqrt{T})$ away from the optimum. 

Let
\[V_{t}^{C, A} = \sum \nolimits_{i = 1}^N  v_{i, t} x^{C, A}_{i, t} \qquad V_{t}^{I, A} = \sum \nolimits_{i = 1}^N  v_{i, t} x^{I, A}_{i, t}\]
be the total values obtained by the $N$ bidders at round $t$ in the coordinated and independent cases, respectively.
We show that, for any mirror-descent-based algorithm $A$, coordination always improves the total value of coalition bidders. 

\begin{theorem}
\label{thm:value-wo-ass}
Suppose bidders run a mirror-descent algorithm A (Algorithm \ref{alg:general-mirror-descent}).
As $T \to \infty$, the coalition bidders' total value under coordinated bidding is weakly larger than that under independent bidding: 
\[
\lim_{T\to\infty}\E\Big[ \tfrac{1}{T}\sum \nolimits_{t=1}^T V_t^{C, A} \;-\; \tfrac{1}{T}\sum \nolimits_{t=1}^T V_t^{I, A}\Big] \ge 0.
\]    
\end{theorem}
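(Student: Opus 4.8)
The plan is to push $T\to\infty$, where the mirror-descent multipliers settle into a steady state, and thereby reduce the value comparison to a one-dimensional comparison of two limiting multipliers. Since $\alpha=1/\sqrt T\to 0$, I would first invoke the $O(\sqrt T)$ regret / $o(T)$ RoS-violation guarantees of Algorithm~\ref{alg:general-mirror-descent} to argue that, in a time-averaged sense, each bidder's multiplier concentrates on a deterministic value and the per-round utility feedback averages to zero whenever the RoS constraint binds. By the i.i.d.\ values and the symmetric algorithm/initialization ($\lambda_{i,1}=1$), all coalition bidders share a common steady-state multiplier: call it $\lambda^C$ for Algorithm~\ref{alg:coordinated-bidding} and $\lambda^I$ for Algorithm~\ref{alg:independent-bidding}. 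The crucial simplification is that, with a common multiplier $\lambda$, the highest coalition bid equals $(1+\tfrac1\lambda)v_{(N)}$ in \emph{both} regimes, so the coalition wins iff $(1+\tfrac1\lambda)v_{(N)}\ge d^O$ and collects value $v_{(N)}$ upon winning. Writing $V(\lambda):=\E[v_{(N)}\,\1\{(1+\tfrac1\lambda)v_{(N)}\ge d^O\}]$, the time-averaged values converge to $V(\lambda^C)$ and $V(\lambda^I)$; since the winning threshold $d^O\cdot\tfrac{\lambda}{\lambda+1}$ is increasing in $\lambda$, $V$ is non-increasing, and it suffices to prove $\lambda^C\le\lambda^I$.

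Next I would pin down the two multipliers through their binding RoS conditions. In the coordinated run a bidder updates only in the $\approx T/N$ rounds where it holds the top value, and there it faces only the outside bid $d^O$; its steady-state condition is $\Phi^C(\lambda^C)=0$, where $\Phi^C(\lambda):=\E[(v_{(N)}-d^O)\,\1\{(1+\tfrac1\lambda)v_{(N)}\ge d^O\}]$. In the independent run the top-value bidder instead pays $\max\{d^O,(1+\tfrac1\lambda)v_{(N-1)}\}$, the running maximum of the outside bid and the second-highest coalition bid, giving $\Phi^I(\lambda^I)=0$ with $\Phi^I(\lambda):=\E[(v_{(N)}-\max\{d^O,(1+\tfrac1\lambda)v_{(N-1)}\})\,\1\{(1+\tfrac1\lambda)v_{(N)}\ge d^O\}]$. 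The winning indicator is identical in the two formulas because $(1+\tfrac1\lambda)v_{(N)}>(1+\tfrac1\lambda)v_{(N-1)}$ always, so internal competition never changes who wins, only what the winner pays.

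The comparison then follows from two elementary facts. First, subtracting the definitions gives $\Phi^C(\lambda)-\Phi^I(\lambda)=\E[((1+\tfrac1\lambda)v_{(N-1)}-d^O)_+\,\1\{(1+\tfrac1\lambda)v_{(N)}\ge d^O\}]\ge 0$, i.e.\ $\Phi^C\ge\Phi^I$ pointwise: coordination exactly removes the expected internal overpayment. Second, both $\Phi^C$ and $\Phi^I$ are non-decreasing in $\lambda$ — raising $\lambda$ shrinks the winning region, and every dropped round has $v_{(N)}<d^O\le\text{payment}$ (a negative integrand), while in $\Phi^I$ lowering $\lambda$ additionally inflates the payment. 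Evaluating at $\lambda^I$ then gives $\Phi^C(\lambda^I)\ge\Phi^I(\lambda^I)=0=\Phi^C(\lambda^C)$, and monotonicity of $\Phi^C$ forces $\lambda^C\le\lambda^I$, hence $V(\lambda^C)\ge V(\lambda^I)$ as desired. I would close by handling the boundary case in which the coordinated RoS never binds (i.e.\ $\E[v_{(N)}-d^O]\ge0$): there $\lambda^C\to 0$, the coalition wins every round, and $V^C=\E[v_{(N)}]$, which upper-bounds $V^I$ since at most the top coalition value can be captured per round.

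The hard part will be the first step: rigorously justifying the steady-state limit. The coordinated case is essentially a single learner facing the fixed outside distribution $D$ (updating on its top-value rounds), so standard dual-averaging convergence applies. The independent case is a genuine $N$-player coupled system, and I must rule out asymmetric limits — e.g.\ a low-value but aggressive bidder outbidding a higher-value one, which would only depress value further — and show that the finite-$T$ transient and the fluctuations of the empirical multiplier distribution contribute $o(T)$ after dividing by $T$. I would control this with the $O(\sqrt T)$ regret bound, an exchangeability argument across the symmetric bidders, and an Azuma-type concentration estimate, so that both time averages provably converge to the symmetric steady-state quantities $V(\lambda^C)$ and $V(\lambda^I)$ used above.
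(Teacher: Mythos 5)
Your coordinated-side analysis matches the paper's: your $\Phi^C$ is exactly the paper's drift function $G(\lambda)=\E[(v_{(N)}-d^O)\1\{(1+\tfrac1\lambda)v_{(N)}>d^O\}]$, your $\lambda^C$ is its root $\lambda_\star$, and the convergence of the coordinated run to $V_{(N)}(\lambda_\star)$ (including the boundary case $\lambda_\star=0$) is established in the paper by essentially the argument you sketch, made rigorous via a potential-function bound on $(h'(\lambda_{i,t})-h'(\lambda_\star))^2$. The gap is on the independent side. Your comparison $\Phi^C(\lambda^I)\ge\Phi^I(\lambda^I)=0$ requires that the coupled $N$-learner system converge to the \emph{symmetric} fixed point of $\Phi^I$, and that the time-averaged independent value equal $V(\lambda^I)$. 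You correctly flag this as ``the hard part,'' but the tools you propose do not close it: the $O(\sqrt T)$ regret guarantee is a single-learner statement against a fixed environment and says nothing about convergence of multipliers in a game among adaptive learners; exchangeability only equalizes marginals and does not rule out the joint profile staying spread out or oscillating; and $\Phi^I$ is a symmetric-profile object, so it does not even describe bidder $i$'s drift when the profile is asymmetric. Your claim that the independent average converges to $V(\lambda^I)$ also presupposes the top-value bidder always wins, which fails off the diagonal.

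The paper shows this whole fixed-point analysis of $\Phi^I$ is unnecessary. Its key observation is a \emph{one-sided, profile-free} drift bound: under independent bidding, because every coalition rival overbids, bidder $i$'s realized utility is pointwise dominated by what it would earn facing only $d^O$ with the top value, so $\E[g_{i,t}\mid H_{t-1}]\le\tfrac1N G(\lambda_{i,t})$ for \emph{every} configuration of the other multipliers. Since $G<0$ below $\lambda_\star$, no bidder can spend a non-vanishing fraction of rounds with $\lambda_{i,t}<\lambda_\star-o(1)$; and on the event that all multipliers exceed that threshold, the per-round coalition value is at most $V_{(N)}(\lambda_\star)$ (the winner collects at most $v_{(N)}$ and the highest bid is capped). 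This yields $\limsup_T\E[V_T^{I,A}]\le V_{(N)}(\lambda_\star)=\lim_T\E[V_T^{C,A}]$ with no symmetric-equilibrium or convergence claim for the independent dynamics. To repair your write-up you should replace the $\Phi^I(\lambda^I)=0$ characterization with this domination argument (or supply a genuine convergence proof for the coupled system, which is substantially harder and not what the theorem needs).
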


\begin{proof}[\bf Proof sketch.]
We analyze the mirror-descent dynamics in the derivative space
$y_{i,t}:=h'(\lambda_{i,t}) \in (-\infty,+\infty)$, where the mirror-descent update becomes $y_{i,t+1} = y_{i,t} - \alpha g_{i,t}$.
Let $G_{(N)}(\lambda)$ be the expected utility of a \emph{single} bidder whose value is distributed according to the highest value $v_{(N)} \sim F_{(N)}$ among $N$ bidders, and who bids using multiplier $\lambda > 0$, competing against the outside bid $d^O \sim D$:  
\begin{align*}
    G_{(N)}(\lambda) := \E \Big[(v_{(N)} - d^O) \cdot \1\big[ (1+\tfrac{1}{\lambda}) v_{(N)} > d^O\big]\Big]. 
\end{align*}
Let $V_{(N)}(\lambda) := \E \big[ v_{(N)} \cdot \1[ (1+\tfrac{1}{\lambda}) v_{(N)} > d^O] \big]$ denote the expected value of this single bidder.

\begin{lemma}
$G_{(N)}(\lambda)$ is strictly increasing in $\lambda > 0$.  
\end{lemma}
\begin{proof}
    See Lemma~\ref{lemma:mono}.
\end{proof}

Let $\lambda_\star := \inf\{\lambda>0:G_{(N)}(\lambda)\ge0\}$. When $\lambda_\star=0$, denote $V_{(N)}(0) := \lim_{\lambda\downarrow0}V_{(N)}(\lambda)$. When $\lambda_\star>0$, let $y_\star := h'(\lambda_\star)$. 
The proof proceeds by proving that the multipliers in the coordinated case converge to $\lambda_\star$ eventually
and then upper bounding independent bidding through a single-bidder relaxation. 

We first characterize the expected update direction under coordination. Let $H_{t-1} := \bigcup_{i=1}^N H_{i,t-1}$ be the joint history up to round $t-1$.

\begin{claim}
\label{clm:coord-drift}
Under coordinated bidding, conditioning on history $H_{t-1}$,
the expected realized utility (which serves as the gradient direction for multiplier update) satisfies
\[
\E\big[ g_{i,t}\mid H_{t-1} \big] = \tfrac{1}{N} G_{(N)}(\lambda_{i,t}).
\]
\end{claim}

When $\lambda_\star>0$, since $G_{(N)}(\lambda)$ is increasing, we have $G_{(N)}(\lambda_{i, t})\gtrless 0$ if and only if $\lambda_{i, t} \gtrless \lambda_\star$, giving expected gradient direction $\E[g_{i,t} \,|\, H_{t-1}] \gtrless 0$, so the mirror-coordinate update pushes $y_{i,t}$ toward $y_\star$. 
This property allows us to view $(y_{i,t}-y_\star)^2$ as a potential function, which decreases on average whenever $\lambda_{i,t}$ deviates
significantly from $\lambda_\star$. The boundary case $\lambda_\star=0$ is handled by a one-sided exponential potential. These arguments imply: 
\begin{claim}
\label{clm:coord-stability}
For any $\delta>0$, the fraction of rounds $t\le T$ for which the active bidder $i_t^\star = \arg\max_{i}v_{i,t}$ is away from the reference multiplier $\lambda_\star$ vanishes in expectation as $T\to\infty$.
\end{claim}


The above claim implies that, asymptotically, the active bidder's multiplier in coordinated bidding converges to $\lambda_\star$. 
As a result, the coalition bidders' total value converges to the expected value obtained by a single bidder whose value is $v_{(N)} \sim F_{(N)}$ and who bids by multiplier $\lambda_\star$: 

\begin{claim}
\label{clm:coord-value}
The coordinated time-average total value converges to the single bidder value
$$
\lim_{T\to\infty} \E\Big[\tfrac 1T \sum \nolimits_{t=1}^T V_t^{C,A}\Big] ~ = ~ V_{(N)}(\lambda_\star).
$$
\end{claim}
We then upper bound the time-average total value under independent bidding.

\begin{claim}
\label{clm:ind-upper}
Under independent bidding,
\[
\limsup_{T\to\infty} \E\Big[\tfrac{1}{T}\sum \nolimits_{t=1}^T V_t^{I,A}\Big] ~ \le ~ V_{(N)}(\lambda_\star).
\]
\end{claim}

To see this, view $\bar b_t=\max_i b_{i,t}$ as the bid of a \emph{virtual single bidder} with value $v_{(N),t}$; its attained value upper bounds the total value attained by the independent coalition. After clipping each bidder's utility below by $-B$, the utility of the virtual bidder dominates the sum of clipped utilities of the coalition bidders, and the MD updates imply a deterministic $o(T)$ RoS violation bound for the virtual bidder. A Lagrangian envelope for the single-bidder problem then upper bounds its average value by $V_{(N)}(\lambda)+\lambda G_{(N)}(\lambda)$ for every $\lambda>0$, whose infimum is $V_{(N)}(\lambda_\star)$.

Combining Claim~\ref{clm:ind-upper} with Claim~\ref{clm:coord-value} proves the theorem. 
Details are in Appendix \ref{proof:value-wo-ass}. 
\end{proof}

Theorem \ref{thm:value-wo-ass} does not require any assumption on the distributions of values or outside bid, except for boundedness and continuity. We further show that, if Assumption \ref{ass:highest-value-assumption} holds, then coordinated mirror descent is not only better than independent mirror descent but also weakly better than \emph{any other coordination mechanism}. 

\begin{definition}[Coordination Mechanism]
\label{def:coord-alg}
A \emph{coordination mechanism} $\mathcal G$ is any procedure that, at each round $t$, given all realized values
$\bm v_t = (v_{1,t}, \ldots, v_{N,t})$, selects bids $(b_{1,t},\ldots,b_{N,t})$ for the coalition bidders.
The mechanism may be history-dependent, i.e., bids at round $t$
may depend arbitrarily on past observations. 
\end{definition}

\begin{theorem}
\label{thm:value}
Under Assumption \ref{ass:highest-value-assumption},
as $T \to \infty$, 
any coordinated mirror-descent algorithm $A$ is optimal in maximizing coalition bidders' total value.
Formally, for any coordination mechanism $\mathcal G$, 
\[
\lim_{T\to\infty}\E\Big[ \tfrac{1}{T}\sum \nolimits_{t=1}^T V_t^{C, A} \;-\; \tfrac{1}{T}\sum \nolimits_{t=1}^T V_t^{\mathcal G}\Big] \ge 0,
\]
where $V_t^{\mathcal G}$ is the total value obtained by the $N$ bidders at round $t$ using mechanism $\mathcal G$.
\end{theorem}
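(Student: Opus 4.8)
The plan is to sandwich every coordination mechanism between the achievability result already available from the proof of Theorem~\ref{thm:value-wo-ass} and a matching upper bound. Concretely, Claim~\ref{clm:coord-value} already gives that the coordinated mirror-descent algorithm attains $\lim_{T\to\infty}\E[\frac1T\sum_t V_t^{C,A}] = V_{(N)}(\lambda_\star)$, where $\lambda_\star$ solves $G(\lambda_\star)=0$. It therefore suffices to prove that \emph{no} RoS-feasible coordination mechanism $\mathcal G$ can asymptotically exceed this value, i.e. $\limsup_{T\to\infty}\E[\frac1T\sum_t V_t^{\mathcal G}] \le V_{(N)}(\lambda_\star)$; the theorem then follows by subtracting the two limits.

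First I would reduce an arbitrary $\mathcal G$ to a mechanism in which only the highest-value coalition member can win. In a single-item second-price auction at most one coalition member is allocated, so $V_t^{\mathcal G}\le v_{(N),t}\cdot\1[\text{coalition wins}]$, and routing the win to the top-value member while suppressing internal competition (so the winner pays only the outside bid $d^O_t$) weakly increases both value and coalition utility simultaneously. Hence it is enough to upper bound mechanisms whose per-round value is $v_{(N),t}\,\1[\text{win}]$ and utility $(v_{(N),t}-d^O_t)\,\1[\text{win}]$, where the win event is a (possibly randomized, history-dependent) function of $d^O_t$.

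Next I would run a Lagrangian/fluid duality argument with the single multiplier $\lambda_\star$. For any $\lambda\ge 0$ and any mechanism whose cumulative utility is asymptotically nonnegative, $\E[\sum_t V_t^{\mathcal G}] \le \E[\sum_t (V_t^{\mathcal G}+\lambda\, u_t^{\mathcal G})] + o(T)$. Conditioning on the history and $\bm v_t$ and integrating over the independent $d^O_t$, the coalition's best possible per-round contribution is the pointwise maximum $\Psi_\lambda(v) := \max_{b\ge 0}\{ v\,D(b) + \lambda\,\E_{d^O}[(v-d^O)\1[d^O<b]]\}$, whose first-order condition $(1+\lambda)v-\lambda b=0$ gives the maximizer $b=(1+\tfrac1\lambda)v$ — exactly the mirror-descent bidding form. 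Taking the expectation over $v_{(N),t}\sim F_{(N)}$ yields $\E[\Psi_\lambda(v_{(N)})] = V_{(N)}(\lambda) + \lambda\, G(\lambda)$. Choosing $\lambda=\lambda_\star$ and invoking $G(\lambda_\star)=0$ collapses this to $V_{(N)}(\lambda_\star)$, delivering the desired upper bound.

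Assumption~\ref{ass:highest-value-assumption} enters not in the upper bound itself but in guaranteeing that the coordinated mirror-descent algorithm is a \emph{legitimate} RoS-respecting mechanism that realizes the optimum: under $\Delta\ge0$, Theorem~\ref{thm:utility/RoS-constraint} ensures its total utility stays nonnegative up to $o(T)$, so it genuinely sits inside — and tops — the class of RoS-feasible coordination mechanisms, whereas for $\Delta<0$ coordination may violate RoS and would no longer qualify. The main obstacle I anticipate is making the duality step fully rigorous over \emph{arbitrary} history-dependent, randomized mechanisms with a hard (rather than merely in-expectation) RoS constraint: one must establish the pointwise bound $\E_{d^O}[V_t^{\mathcal G}+\lambda u_t^{\mathcal G}\mid H_{t-1},\bm v_t]\le \Psi_\lambda(v_{(N),t})$ for an arbitrary measurable win region in $d^O$-space, and control the slack between the per-round expected constraint and the realized cumulative constraint by a martingale/Azuma argument. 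A secondary technical point is the degenerate case where $G(\lambda)=0$ has no positive root — when always winning still satisfies RoS — in which the benchmark becomes $\E[v_{(N)}]$ and $\lambda_\star$ is replaced by its limit; this must be handled separately, mirroring the corresponding caveat in Claim~\ref{clm:coord-value}.
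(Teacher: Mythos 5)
There is a genuine gap, and it stems from two intertwined misreadings. First, Definition~\ref{def:coord-alg} places \emph{no} RoS constraint on the benchmark class: a coordination mechanism $\mathcal G$ is any (history-dependent, randomized) rule for choosing bids, including the rule that bids arbitrarily high every round and collects $\E[v_{(N)}]$ per round. Your upper bound is a Lagrangian certificate of the form $\E[\sum_t V_t^{\mathcal G}] \le \E[\sum_t (V_t^{\mathcal G}+\lambda_\star u_t^{\mathcal G})] + o(T)$, which is valid only for mechanisms whose cumulative utility is asymptotically nonnegative; it simply does not apply to the class the theorem quantifies over. Second, under Assumption~\ref{ass:highest-value-assumption} the equation $G(\lambda_\star)=0$ has \emph{no} positive root: since $G$ is strictly increasing and $\lim_{\mu\downarrow 0}G(\mu)=\E[v_{(N)}-d^O]\ge \E[(v_{(N-1)}-d^O)_+]-\E[(d^O-v_{(N)})_+]=\Delta\ge 0$ (Lemma~\ref{lem:utility-at-least-Delta}), one has $G(\lambda)>0$ for every $\lambda>0$, i.e.\ $\lambda_\star=0$. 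The ``degenerate case'' you defer to a secondary remark is therefore the \emph{only} case under the theorem's hypothesis, and your main line of argument (an interior multiplier $\lambda_\star>0$ with complementary slackness $G(\lambda_\star)=0$) addresses a regime that cannot occur here. You have also inverted the role of the assumption: it is not needed to certify that coordinated mirror descent is RoS-feasible (feasibility of the benchmark is irrelevant), but to guarantee that each coordinated bidder's per-round expected utility is at least $\Delta/N>0$, so that by Azuma's inequality the multipliers $\lambda^C_{i,t}$ drift to $0$ (Lemma~\ref{lemma:mul-conv}).

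The paper's actual proof is much more direct once this is seen: the multipliers collapsing to $0$ means the coordinated coalition eventually bids arbitrarily high and wins almost every round, so its time-average value converges to $V_{(N)}(0^+)=\E[v_{(N)}]$; and for \emph{any} mechanism $\mathcal G$ the trivial pointwise bound $V_t^{\mathcal G}\le v_{(N),t}$ gives $\E[V_t^{\mathcal G}]\le\E[v_{(N)}]$ (Lemma~\ref{lemma:lim_ge_ind}). No duality is needed because the coordinated algorithm attains the \emph{unconstrained} optimum. Your duality machinery would become the right tool for a different (and interesting) theorem --- comparing against RoS-feasible mechanisms when Assumption~\ref{ass:highest-value-assumption} fails and $\lambda_\star>0$ --- but as written it does not prove the stated result.
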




\begin{proof}[\bf Proof sketch.]
To prove Theorem \ref{thm:value}, we first reduce the $N$-bidder coordinated scenario to a single-bidder problem where the single bidder's value is the maximum among the coalition members. Using this reduction, we prove an interesting fact: under Assumption \ref{ass:highest-value-assumption}, the multipliers $\lambda_{i, t}^C$ of coordinated bidders converge to $0$ in the limit as $t\to\infty$, meaning that the coordinated bidders will eventually submit arbitrarily high bids. This implies that the coordinated coalition will obtain the highest possible value, among all coordination mechanisms. The full proof is presented in Appendix \ref{app:proof-of-value}.
\end{proof}

\section{Extension to Non-i.i.d. Valuations}
\label{sec:asymmetric}



This section discusses the extension where bidders have independent but non-identically distributed values. 
In each round $t \in [T]$, bidders $i \in [N]$ draw values $v_{i,t}$ from different $F_i$. 
The outside bid $d^O_t$ is still drawn from $D$.


\paragraph{Total RoS.}
In the i.i.d.~case, coordination reduces the RoS constraint violation for \emph{every} bidder in the coalition, if and only if Assumption \ref{ass:highest-value-assumption} holds. 
In the non-i.i.d.~case, we show that under the same assumption, coordination reduces the \emph{total} RoS constraint violation among the coalition bidders, namely, the total utility of the coalition is improved. 
Recall that $ U_{i, T}^{C, A} = \sum \nolimits_{t=1}^T u_{i, t}^{C, A}$ and $U_{i, T}^{I, A} = \sum \nolimits_{t=1}^T u_{i, t}^{I, A}$ are the total utilities of bidder $i$ during $T$ rounds under coordinated and independent bidding scenarios. 
\begin{theorem}[Coordination reduces total RoS violation, non-i.i.d. case]
\label{thm:asym-violation}
Under Assumption~\ref{ass:highest-value-assumption} (i.e., $\Delta \ge 0$), for any overbidding algorithm $A$,
the total utility among the $N$ coalition bidders satisfies 
\[\E\Big[\sum \nolimits_{i=1}^N U_{i, T}^{C, A}\Big] ~ \ge ~ \E\Big[\sum \nolimits_{i=1}^N U_{i, T}^{I, A}\Big] + \Delta T.\] 
\end{theorem}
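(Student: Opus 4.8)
The plan is to mirror the structure of the i.i.d.~proof of Theorem~\ref{thm:utility/RoS-constraint}, but aggregate over all bidders rather than tracking each individually, since in the non-i.i.d.~case the per-bidder guarantee is no longer available. Concretely, I introduce the truthful benchmark $U_{i,T}^{\truth} = \sum_{t=1}^T u_{i,t}^{\truth}$ where each coalition bidder bids its true value, and I decompose the target difference as
\[
\E\Big[\sum_{i=1}^N U_{i,T}^{C,A}\Big] - \E\Big[\sum_{i=1}^N U_{i,T}^{I,A}\Big]
= \underbrace{\E\Big[\sum_{i=1}^N \big(U_{i,T}^{C,A} - U_{i,T}^{\truth}\big)\Big]}_{(\mathrm{I})}
+ \underbrace{\E\Big[\sum_{i=1}^N \big(U_{i,T}^{\truth} - U_{i,T}^{I,A}\big)\Big]}_{(\mathrm{II})}.
\]
Term $(\mathrm{II})$ is handled exactly as Lemma~\ref{lem:U-truthful-vs-U-I-A}: dominant-strategy incentive compatibility of the second-price auction gives $u_{i,t}^{\truth} \ge u_{i,t}^{I,A}$ pointwise for every bidder and every round, regardless of the value distributions, so $(\mathrm{II}) \ge 0$ sample-path-wise. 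The non-i.i.d.~nature of the $F_i$ plays no role here, which is why this half transfers immediately.

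The substance is in term $(\mathrm{I})$, where I must show $\E\big[\sum_{i=1}^N (U_{i,T}^{C,A} - U_{i,T}^{\truth})\big] \ge \Delta T$. First I would observe that under coordinated bidding only $i^\star = \arg\max_i v_{i,t}$ bids (overbidding $A$), while the other $N-1$ bidders bid zero and earn zero utility; under the truthful benchmark all $N$ bidders bid their values. So for a fixed round $t$, summing over the coalition, the coordinated utility is $u_{i^\star,t}^{C,A}$ and the truthful utility is $\sum_{i} u_{i,t}^{\truth}$. The key per-round claim is
\[
\E\Big[u_{i^\star,t}^{C,A} - \sum_{i=1}^N u_{i,t}^{\truth}\Big] \ge \Delta,
\]
which, summed over $t$, yields $(\mathrm{I}) \ge \Delta T$. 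To establish it I compare the coordinated winner's utility against truthful play. Under coordination the winner faces only $d^O_t$ (the others bid zero), and because $A$ overbids, the highest-value bidder wins whenever $v_{(N)} > d^O_t$, contributing $(v_{(N)} - d^O_t)$; when $v_{(N)} \le d^O_t$, overbidding may still win and lose $(d^O_t - v_{(N)})$, so the coordinated term is bounded below by $(v_{(N)} - d^O_t)_+ - (d^O_t - v_{(N)})_+$ after accounting for the worst-case overbid (this is the same truncation analysis as in Lemma~\ref{lem:U-C-A-vs-U-truthful}). Under truthful bidding, the total coalition utility equals $(v_{(N)} - \max\{d^O_t, v_{(N-1)}\})_+$, because truthful second-price play means the highest-value coalition member wins only if it beats both $d^O_t$ and the second-highest coalition value, paying that maximum; here $v_{(N)}, v_{(N-1)}$ are the top two order statistics of $(v_{1,t},\dots,v_{N,t})$ drawn from the \emph{heterogeneous} $F_i$.

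The main obstacle — and the place where the non-i.i.d.~setting genuinely differs — is that $v_{(N)}$ and $v_{(N-1)}$ are now order statistics of independent but non-identical draws, so I cannot invoke the symmetric i.i.d.~order-statistic distribution that underlies Assumption~\ref{ass:highest-value-assumption}'s definition of $\Delta$. The resolution I would pursue is to show that the per-round lower bound reduces to the identity
\[
\E\Big[(v_{(N)} - d^O)_+ - (d^O - v_{(N)})_+ - (v_{(N)} - \max\{d^O, v_{(N-1)}\})_+\Big]
= \E\big[(v_{(N-1)} - d^O)_+ - (d^O - v_{(N)})_+\big] = \Delta,
\]
using the pointwise algebraic fact $(v_{(N)} - d^O)_+ - (v_{(N)} - \max\{d^O, v_{(N-1)}\})_+ = (\min\{v_{(N)}, v_{(N-1)}\} - d^O)_+ = (v_{(N-1)} - d^O)_+$ (since $v_{(N-1)} \le v_{(N)}$), which holds sample-path-wise and therefore does not depend on whether the values are identically distributed. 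Crucially, $\Delta$ here is interpreted as the expectation of the \emph{order statistics} of the possibly heterogeneous $v_{i,t}$, and Assumption~\ref{ass:highest-value-assumption} must be read accordingly in the non-i.i.d.~statement; with that reading, the identity shows the per-round gap is exactly $\Delta$ for truthful play and at least $\Delta$ once the overbidding correction is included. Summing over $t$ and adding the nonnegative term $(\mathrm{II})$ gives the claimed bound $\Delta T$. I would flag that the honest comparison is against the truthful benchmark, and that the overbidding slack only helps (it never decreases the coordinated winner's expected utility below the truthful-winner value), so no distributional assumption beyond $\Delta \ge 0$ is needed.
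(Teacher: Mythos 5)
Your proposal is correct and follows essentially the same route as the paper: both compare coordinated bidding to the truthful benchmark (disposing of the independent-bidding side pointwise via DSIC of the second-price auction), and both reduce the remaining term to the per-round coalition-level bound $\E[(v_{(N-1)}-d^O)_+ - (d^O-v_{(N)})_+] \ge \Delta$ with the order statistics taken over the heterogeneous $F_i$. Your pointwise identity $(v_{(N)}-d^O)_+ - (v_{(N)}-\max\{d^O,v_{(N-1)}\})_+ = (v_{(N-1)}-d^O)_+$ is a slightly cleaner packaging of the case analysis the paper carries out bidder-by-bidder, and your explicit flag that $\Delta$ must be reinterpreted via the non-identical order statistics matches what the paper does implicitly.
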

The proof of this theorem is in Appendix~\ref{app:thm-asym-vio}. 

Although Theorem \ref{thm:asym-violation} does not guarantee RoS improvement for every bidder, the total RoS improvement is also desirable in practice because the coalition bidders may redistribute the extra utility among themselves.   

\paragraph{Total value.}
In the i.i.d.~setting, we showed that coordinated mirror descent algorithms achieve higher total value than independent mirror descent algorithms without any assumption, and higher than any coordination mechanism under Assumption \ref{ass:highest-value-assumption}. 
The proof, which relied on the fact that each bidder has the highest value among the coalition with probability $1/N$,  
does not apply to the non-i.i.d.~case because bidders now differ in how often they are the highest-value coalition member. 
We therefore impose a per-bidder condition ensuring that each bidder contributes positively whenever they are selected.
Concretely, for bidder $i$, define
\[
\Delta_i := \E\big[(v_i - d^O)\,\1\{v_i = \max_{j \in [N]} v_j\}\big].
\]


\begin{assumption}[Positive per-bidder winning margin]
\label{ass:non-iid-value}
For each bidder $i \in [N]$, $\Delta_i > 0$.
\end{assumption}

\begin{theorem}
[Coordination increases total value, non-i.i.d. case]
\label{thm:asym-value}
Under Assumption \ref{ass:non-iid-value}, coordinated mirror descent algorithms
(\Cref{alg:general-mirror-descent}) achieve weakly higher total value than independent mirror descent algorithms and any other coordination mechanism $\mathcal G$: 
\[\lim_{T\to\infty} \E \Big[\tfrac{1}{T}\sum \nolimits_{t=1}^T V^{C, A}_t - \tfrac{1}{T}\sum \nolimits_{t=1}^T V^{\mathcal G}_t\Big] \ge 0.\]
\end{theorem}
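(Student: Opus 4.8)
The plan is to mirror the proof of Theorem~\ref{thm:value}, replacing the i.i.d.\ aggregate quantity $G$ by a per-bidder version and using Assumption~\ref{ass:non-iid-value} to drive \emph{every} coalition bidder's multiplier to $0$. For each bidder $i$, define the conditional expected gradient under coordination,
\[
G_i(\lambda):=\E\big[(v_i-d^O)\,\1[(1+\tfrac1\lambda)v_i>d^O]\,\1\{v_i=\max\nolimits_{j}v_j\}\big],
\]
so that, exactly as in Claim~\ref{clm:coord-drift}, the mirror-descent update satisfies $\E[g_{i,t}\mid H_{t-1}]=G_i(\lambda_{i,t})$; the probability that $i$ is the active bidder is now absorbed into the indicator $\1\{v_i=\max_j v_j\}$, so there is no $1/N$ factor. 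The same monotonicity argument as in Lemma~\ref{lemma:mono} shows $G_i$ is strictly increasing in $\lambda$: decreasing $\lambda$ only adds marginal wins at outside bids $d^O\approx(1+\tfrac1\lambda)v_i>v_i$, which carry negative utility. Taking $\lambda\downarrow 0$ sends $(1+\tfrac1\lambda)v_i\to\infty$, so by dominated convergence $G_i(0^+)=\E[(v_i-d^O)\1\{v_i=\max_j v_j\}]=\Delta_i$. Since $\Delta_i$ is the infimum of the increasing function $G_i$, Assumption~\ref{ass:non-iid-value} gives $G_i(\lambda)\ge\Delta_i>0$ for all $\lambda>0$.

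Working in the derivative space $y_{i,t}=h'(\lambda_{i,t})$ with update $y_{i,t+1}=y_{i,t}-\alpha g_{i,t}$, the bound $\E[g_{i,t}\mid H_{t-1}]=G_i(\lambda_{i,t})\ge\Delta_i>0$ means each $y_{i,t}$ has a \emph{uniformly negative} expected drift. Since $y\to-\infty$ corresponds to $\lambda\downarrow 0$, this forces $\lambda_{i,t}\to 0$ for every bidder $i$. I would make this precise by the one-sided potential argument behind Claim~\ref{clm:coord-stability}: using $(y_{i,t}-h'(\lambda_0))_+$ as a potential for an arbitrary target $\lambda_0>0$, the uniformly negative drift together with the bounded step $|g_{i,t}|\le B$ shows that the time-averaged probability $\tfrac1T\sum_{t\le T}\Pr[\lambda_{i,t}\ge\lambda_0]$ vanishes as $T\to\infty$. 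Crucially, this holds for \emph{every} $i$, which is exactly where the per-bidder assumption $\Delta_i>0$ for all $i$ is needed: in contrast to the i.i.d.\ case, a single aggregate $\Delta$ no longer suffices because different bidders are active with different frequencies.

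Because $\lambda_{i,t}\to 0$ for every bidder, whichever bidder is active at round $t$ bids $(1+\tfrac1{\lambda_{i,t}})v_{i,t}\to\infty$ and thus eventually wins against the bounded outside bid $d^O\le B$, so the coalition collects value $\max_j v_{j,t}$. Adapting Claim~\ref{clm:coord-value}, the coordinated time-average total value therefore converges to the maximum achievable per-round value,
\[
\lim_{T\to\infty}\E\Big[\tfrac1T\sum\nolimits_{t=1}^T V_t^{C,A}\Big]=\E\big[\max\nolimits_{j\in[N]}v_j\big].
\]
For the upper bound on competitors, note that a single item is sold each round, so under \emph{any} coordination mechanism $\mathcal{G}$ at most one coalition bidder wins and the collected value is at most that bidder's value, giving $V_t^{\mathcal{G}}\le\max_j v_{j,t}$ deterministically and hence $\limsup_{T}\E[\tfrac1T\sum_t V_t^{\mathcal{G}}]\le\E[\max_j v_j]$. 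Since independent mirror descent is itself a history-dependent coordination mechanism in the sense of Definition~\ref{def:coord-alg}, this single bound covers both comparisons and, combined with the previous display, yields the claim.

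The main obstacle I expect is the rigorous simultaneous convergence $\lambda_{i,t}\to0$ for all $N$ bidders and its translation into value convergence: one must control the coupled fluctuations of the $N$ multiplier processes (each driven by the shared max-value and outside-bid randomness) and ensure that the rare rounds in which some active bidder's multiplier is still large contribute vanishing value loss. The upper-bound direction, by contrast, is immediate from the single-item structure and requires no assumption. I would therefore concentrate the technical effort on the drift/potential estimate of Claim~\ref{clm:coord-stability}, carried out uniformly over $i$.
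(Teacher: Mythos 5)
Your proposal is correct and follows essentially the same route as the paper's proof: a per-bidder drift function with $G_i(\lambda)\ge G_i(0^+)=\Delta_i>0$ by monotonicity (the paper's $G_i^C$), all $N$ multipliers driven to $0$, the coordinated time-average value converging to $\E[\max_{j}v_{j}]$, and the single-item bound $V_t^{\mathcal G}\le\max_{j}v_{j,t}$ covering every coordination mechanism including independent bidding. The only difference is the concentration tool: the paper drives $\lambda_{i,t}\to 0$ via Azuma's inequality plus a union bound over bidders and rounds (Lemma~\ref{lemma:asym-mul-conv}), whereas you propose the occupation-measure potential argument --- which also works, provided you use the squared one-sided potential $(y_{i,t}-h'(\lambda_0))_+^2$ as in Eq.~(\ref{eq:indep-occup}) rather than the linear one, since the linear potential can gain $\alpha B$ per step while below the threshold and does not close the bound.
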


See Appendix~\ref{app:thm-asym-value} for the proof.

\vspace{0.4em}
The findings in this section suggest that coordination improves upon independent bidding even in the presence of asymmetry inside the coalition, demonstrating the robustness of our conclusions.  

\begin{table*}[t]
\centering
\small
\caption{Performance summary. Confidence intervals no greater than $\pm 0.0025$ are omitted. All totals are normalized by $T$.}
\label{tab:summary-table}
\begin{tabular}{cccccccc}
\hline
Figure  & Setting & $N$ & $T$ & Total Utility (I) & Total Utility (C) & Total Value (I) & Total Value (C) \\ 
\hline
\ref{fig:I1} & i.i.d. & 2 & 4000 & -0.011 & 0.220 & 0.643 & 0.666 \\
\ref{fig:I2} & i.i.d. & 4 & 4000 & -0.077  & 0.302 & 0.774 & 0.800 \\
\ref{fig:I3} & i.i.d. & 3 & 4000 & -0.049  & 0.153 & 0.712 & 0.748 \\
\ref{fig:NI1}  & non-i.i.d. & 2 & 10000 & 0.049 & 0.219& 0.715 & 0.718 \\
\ref{fig:NI2}  & non-i.i.d. & 3 & 20000 & -0.014 & 0.258 & 0.619 & 0.633 \\
\ref{fig:NI3} & non-i.i.d. & 5 & 20000 & -0.062 & 0.619 & 0.814 & 0.819 \\
\ref{fig:REAL-MERGE-K4} & Real data & 4 & 20000 & -0.040 & 0.155 $\pm$ 0.012 & 0.620 $\pm$ 0.016 & 0.928 $\pm$ 0.003 \\
\ref{fig:REAL-MERGE-K5} & Real data & 5 & 20000 & -0.065 & 0.172 $\pm$ 0.012 & 0.608 $\pm$ 0.012 & 0.958\\
\hline
\end{tabular}
\end{table*}






\section{Experiments}
\label{sec:experiments}
To complement our theoretical analysis, we conduct a series of experiments evaluating the performance of coordinated auto-bidding, 
on both synthetic data and a public real-world dataset (iPinYou \citep{ZhangYuanWang2014iPinYou,LiaoEtAl2014iPinYou}).

Unless otherwise stated, all metrics are averaged over 100 independent simulations; shaded regions in 
figures depict 95\% confidence intervals computed as $\bar{x} \pm 1.96 \cdot s$, where $\bar{x}$ and $s$ are the sample mean and sample standard deviation divided by $\sqrt{100}$. 

\subsection{Experiment Setup}
In all experiments, we run Algorithm~\ref{alg:general-mirror-descent} with mirror map
$h(\lambda)=\lambda \log \lambda - \lambda$ \cite{feng2023online} (i.e., \Cref{eq:feng-update-rule}) and learning rate
$\alpha = 1/\sqrt{T}$, where $T$ is the total number of auction rounds. The horizon $T$ may vary across experiments because different settings exhibit different convergence rates. The values of $T$ and $N$ (number of coalition bidders) used in each setting are summarized in Table~\ref{tab:summary-table}.

\subsubsection{Synthetic Data}

\paragraph{Symmetric setting.}

Each coalition bidder $i$ draws values i.i.d.\ from uniform distribution $U[0,1]$. 
The outside bid is sampled from $D=U[0,0.9]$ (Figure \ref{fig:I1}), $D=U[0,1]$ (Figure \ref{fig:I2}), or $D=\mathrm{Beta}(3,2)$ (Figure~\ref{fig:I3}), representing different market conditions faced by the coalition bidders. 

\paragraph{Asymmetric setting.}
To test the coordination mechanism under heterogeneity, we sample each bidder’s values from distinct families (Uniform, Beta, or truncated Gaussians with bidder-specific parameters). The outside-bid distribution $D$ varies with the scenario: $U[0.2,0.8]$ in Figure~\ref{fig:NI1}, $\mathrm{Beta}(3,5)$ in Figure~\ref{fig:NI2}, and $\mathrm{Beta}(2,8)$ in Figure~\ref{fig:NI3}.
Asymmetries arise from both inside the coalition and the outside market conditions.

Full distributional specifications for the synthetic experiments are provided in the figure captions (Figures \ref{fig:I1}, \ref{fig:I2}, \ref{fig:I3}, \ref{fig:NI1}, \ref{fig:NI2}, \ref{fig:NI3}). All distributions conform to Assumption \ref{ass:highest-value-assumption}.

\subsubsection{Real-World Data}






We also experimented with the public iPinYou dataset \citep{ZhangYuanWang2014iPinYou,LiaoEtAl2014iPinYou}, released for a real-time bidding competition organized by iPinYou. We use the Season~2 test split, which contains $2{,}521{,}630$ auction records from $5$ advertisers. Each record provides the bidding (winning) price and the corresponding advertiser ID.

The real-world data impose three limitations: (i) only the winning price is observed, (ii) bidder identities are unavailable, and (iii) there is no label indicating whether any bidders coordinate. To obtain bidder valuations consistent with the data while abstracting away unobserved identities and strategies, we adopt an empirical i.i.d.\ model: we aggregate all observed winning prices into a common pool and, in each round, draw each coalition bidder’s value independently and uniformly at random with replacement from this pool. This is equivalent to sampling i.i.d.\ values from the empirical price distribution induced by the dataset. 
Constructing a reliable non-i.i.d.\ model would require persistent bidder identities
or side information about coordination patterns, which are not available in the data.

We normalize all prices to $[0,1]$ for numerical stability. 
The outside bidder’s bid is drawn from the same empirical distribution and then scaled by a random factor from $U[1,2]$ to maintain competitive pressure. We report results for coalitions with $N\in\{4,5\}$ bidders under this setup.

\subsection{Results}

\paragraph{Metrics.}
Across all experiments, we evaluate two metrics for both \textsc{Independent} (Algorithm~\ref{alg:independent-bidding}) and \textsc{Coordinated} (Algorithm~\ref{alg:coordinated-bidding}) algorithms: (a) \emph{utility} (achieved value minus payment), (b) \emph{achieved value}. 
Higher utility indicates lower RoS violation; in particular, utility $\ge 0$ implies that the RoS constraint is satisfied. Figures report per-bidder trajectories, whereas Table~\ref{tab:summary-table} aggregates to coalition-level totals (sum across bidders). All statistics are averaged over 100 independent runs.

\paragraph{Plot notation.}
In the figures, labels of the form \texttt{bidder $i$ - I} denote bidder $i$ under the Independent scenario, whereas \texttt{bidder $i$ - C} denotes the Coordinated scenario. Shaded regions show 95\% confidence intervals over 100 simulations.

\paragraph{Results on synthetic data}
In the symmetric settings (Figures~\ref{fig:I1}, \ref{fig:I2}, \ref{fig:I3}), coordination sharply increases both each bidder’s utility (hence reducing RoS violation) and the coalition’s total achieved value relative to independent bidding. Although our theory only guarantees value improvement at the coalition (total) level, the experiments further show that \emph{every individual bidder} benefits from coordination by obtaining higher value. 
We observe that per-bidder curves 
closely track one another due to the i.i.d.\ value draws.


In the asymmetric settings (Figures~\ref{fig:NI1}, \ref{fig:NI2}, \ref{fig:NI3}), the aggregate gains largely persist: coordination improves total value on average and typically improves violations. However, heterogeneity induces dispersion in learning dynamics -- bidders with different value distributions exhibit different convergence rates, 
and individual value improvements are not guaranteed for every bidder even though the coalition-level total value increases.

\begin{figure}[htbp]
  \centering
  \begin{subfigure}[b]{0.48\textwidth}
    \centering
    \includegraphics[width=\linewidth]{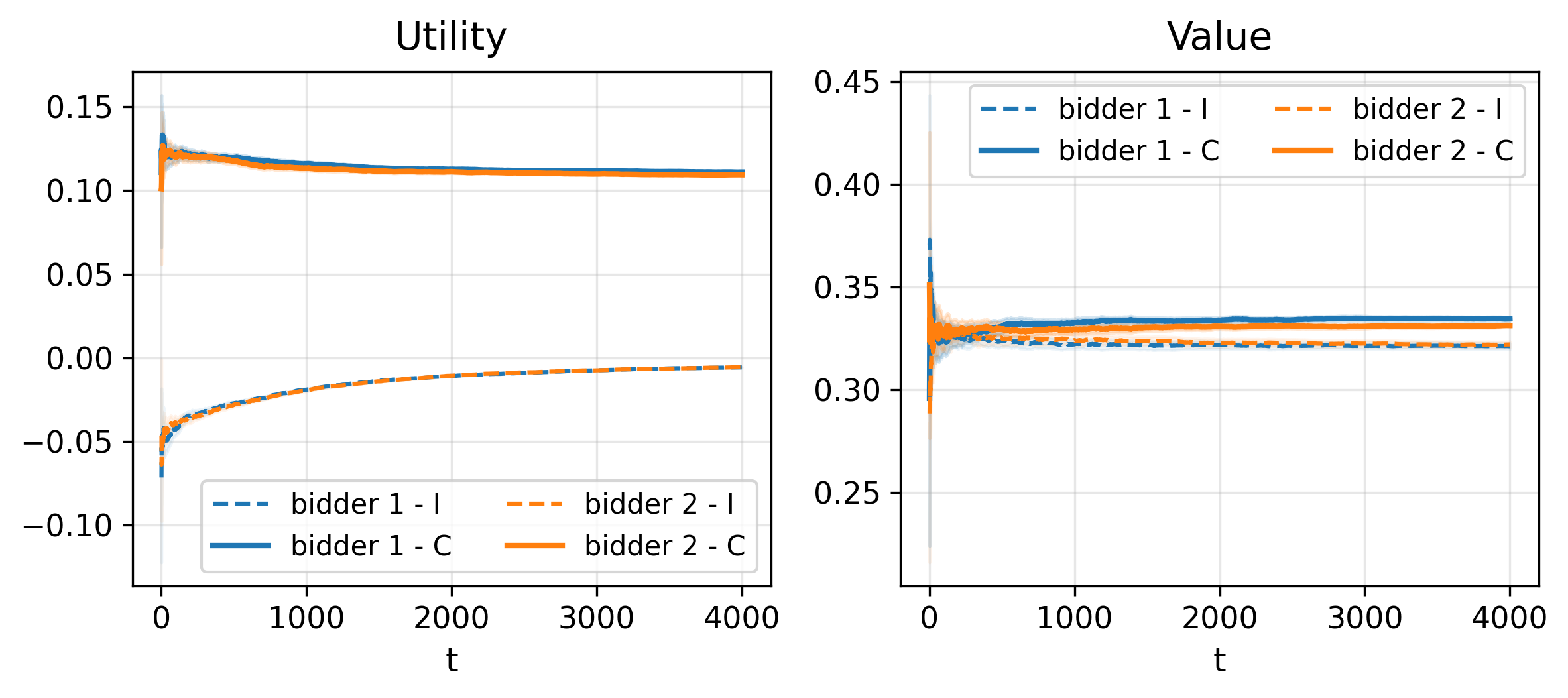}
  \caption{Experiments under i.i.d.\ auto-bidders
  ($N=2, F=U[0,1], D=U[0,0.9]$).}
  \label{fig:I1}
  \end{subfigure}
  \hfill
  \begin{subfigure}[b]{0.48\textwidth}
    \centering
    \includegraphics[width=\linewidth]{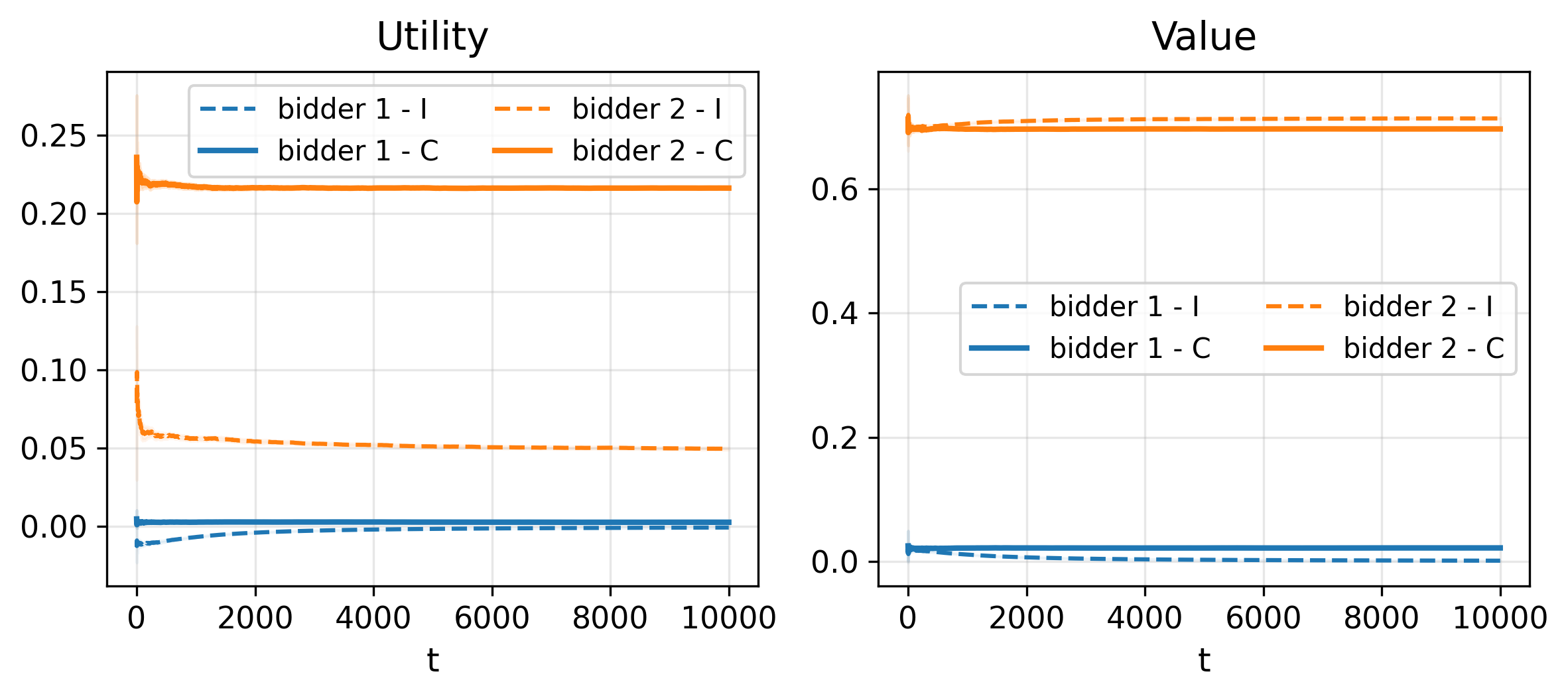}
      \caption{Experiments under non-i.i.d.\ auto-bidders
      ($N=2,F_1=\mathrm{Beta}(2,5),F_2=\mathrm{Beta}(5,2), D=U[0.2,0.8]$).}
      \label{fig:NI1}
  \end{subfigure}
  \caption{Experiments on synthetic data.}
\end{figure}


\paragraph{Results on real-world data.} Consistent with the synthetic data experiments, real-world data results (Figures \ref{fig:REAL-MERGE-K4}, \ref{fig:REAL-MERGE-K5}) show improvements in both individual RoS violation and achieved value.

\begin{figure}[htbp]
  \centering
  \includegraphics[width=0.48\linewidth]{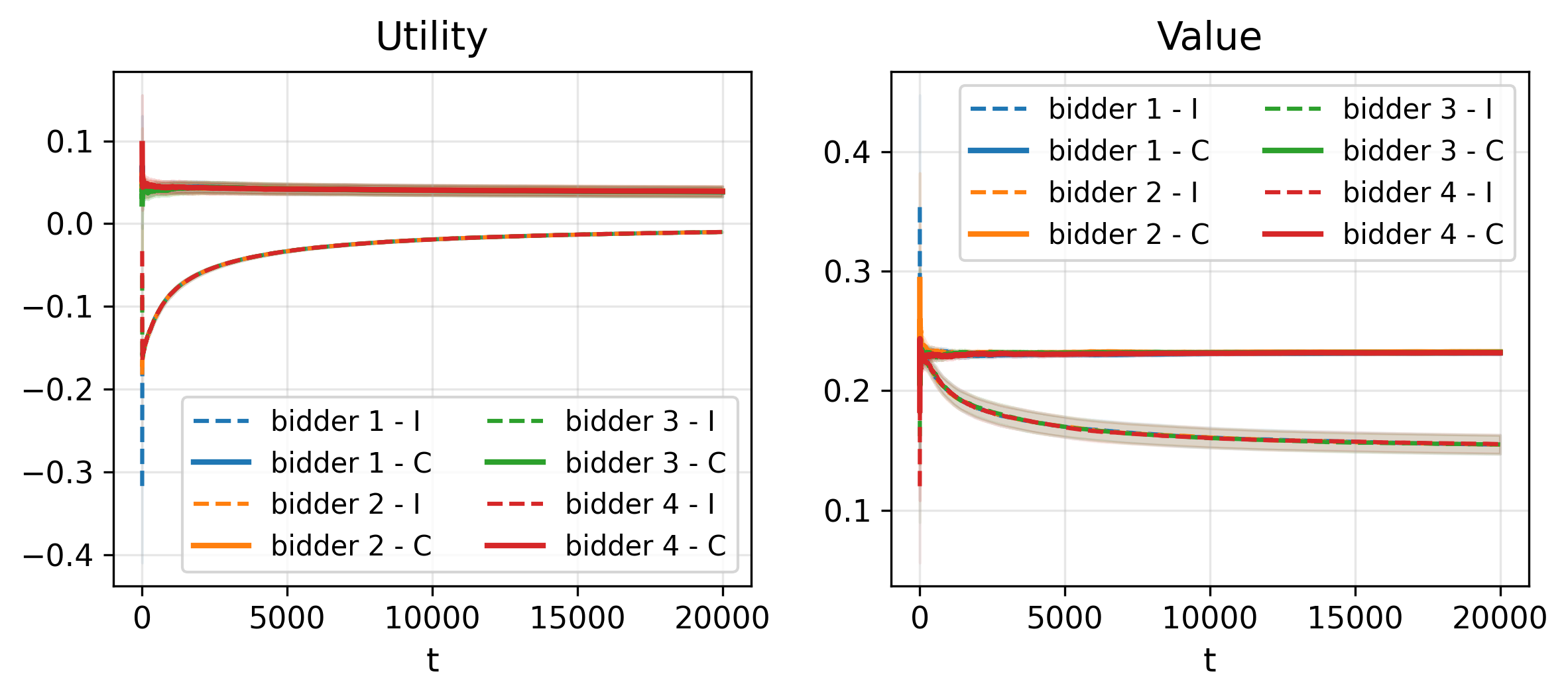}
  \caption{Experiments in real-world datasets. ($N=4$)}
  \label{fig:REAL-MERGE-K4}
\end{figure}

\section{Discussion}

In this paper, we studied coordinated auto-bidding mechanisms for value maximization under RoS constraints, to understand how cooperation among learning agents enhances efficiency in repeated auctions.
We proposed a simple yet effective coordination protocol, theoretically identified the exact condition under which coordination improves the RoS compliance for every bidder, 
and proved that coordination unconditionally increases the total value attained by the coalition compared to independent bidding.  
Our empirical results further validate these findings, showing higher achieved value and lower RoS violation. 
The demonstrated benefits of coordination over independent bidding might provide useful guidance for industrial practice.

Building on these findings, we outline several directions for future work: 
\begin{itemize}[leftmargin=1.5em]
    \item {\em Auto-bidding outside bidders:} While we assumed i.i.d.~outside bids (a reasonable assumption when outside bidders are uniformly sampled from a population every round), a natural extension 
    is to incorporate \emph{auto-bidding outside bidders}, allowing both coalition members and external participants to adjust bids dynamically.



\item {\em Alternative coordination mechanisms:}
While we proved that our highest-value-compete coodination mechanism is asymptotically optimal under Assumption \ref{ass:highest-value-assumption}, it is still interesting to examine whether alternative mechanisms are better when the assumption does not hold. Moreover, if bidders' private values are not accessible, then one may consider other coordination mechanisms, such as highest-bid-compete or mechanisms involving private information elicitation. 

\item {\em Other auction formats:}
Finally, it would be valuable to extend our analysis to other auction formats, especially non-truthful auctions such as first-price and generalized second-price auctions. Such extensions would help test the robustness of our theoretical insights and clarify how coordination interacts with strategic bidding incentives in broader market environments.
\end{itemize}

\bibliographystyle{ACM-Reference-Format}
\bibliography{ref}

@inproceedings{chen_coordinated_2023,
	series = {Proceedings of {Machine} {Learning} {Research}},
	title = {Coordinated {Dynamic} {Bidding} in {Repeated} {Second}-{Price} {Auctions} with {Budgets}},
	volume = {202},
	url = {https://proceedings.mlr.press/v202/chen23ac.html},
	booktitle = {Proceedings of the 40th {International} {Conference} on {Machine} {Learning}},
	publisher = {PMLR},
	author = {Chen, Yurong and Wang, Qian and Duan, Zhijian and Sun, Haoran and Chen, Zhaohua and Yan, Xiang and Deng, Xiaotie},
	month = jul,
	year = {2023},
	pages = {5052--5086},
}

@article{decarolis_bid_2023,
	title = {Bid {Coordination} in {Sponsored} {Search} {Auctions}: {Detection} {Methodology} and {Empirical} {Analysis}},
	volume = {71},
	url = {https://onlinelibrary.wiley.com/doi/10.1111/joie.12331},
	language = {en},
	number = {2},
	urldate = {2025-05-03},
	journal = {The Journal of Industrial Economics},
	author = {Decarolis, Francesco and Goldmanis, Maris and Penta, Antonio and Shakhgildyan, Ksenia},
	month = jun,
	year = {2023},
	pages = {570--592},
}

@article{balseiro_best_2023,
	title = {The {Best} of {Many} {Worlds}: {Dual} {Mirror} {Descent} for {Online} {Allocation} {Problems}},
	volume = {71},
	issn = {0030-364X, 1526-5463},
	shorttitle = {The {Best} of {Many} {Worlds}},
	url = {https://pubsonline.informs.org/doi/10.1287/opre.2021.2242},
	doi = {10.1287/opre.2021.2242},
	language = {en},
	number = {1},
	urldate = {2024-03-02},
	journal = {Operations Research},
	author = {Balseiro, Santiago R. and Lu, Haihao and Mirrokni, Vahab},
	month = jan,
	year = {2023},
	pages = {101--119},
}

@article{graham1987collusive,
  title={Collusive bidder behavior at single-object second-price and English auctions},
  author={Graham, Daniel A and Marshall, Robert C},
  journal={Journal of Political economy},
  volume={95},
  number={6},
  pages={1217--1239},
  year={1987},
  publisher={The University of Chicago Press}
}

@article{che2009optimal,
  title={Optimal collusion-proof auctions},
  author={Che, Yeon-Koo and Kim, Jinwoo},
  journal={Journal of Economic Theory},
  volume={144},
  number={2},
  pages={565--603},
  year={2009},
  publisher={Elsevier}
}

@inproceedings{lee2013real,
  title={Real time bid optimization with smooth budget delivery in online advertising},
  author={Lee, Kuang-Chih and Jalali, Ali and Dasdan, Ali},
  booktitle={Proceedings of the seventh international workshop on data mining for online advertising},
  pages={1--9},
  year={2013}
}

@inproceedings{stram2024mystique,
  title={Mystique: A Budget Pacing System for Performance Optimization in Online Advertising},
  author={Stram, Rotem and Abboud, Rani and Shtoff, Alex and Somekh, Oren and Raviv, Ariel and Koren, Yair},
  booktitle={Companion Proceedings of the ACM on Web Conference 2024},
  pages={433--442},
  year={2024}
}

@inproceedings{zhao2018deep,
  title={Deep reinforcement learning for sponsored search real-time bidding},
  author={Zhao, Jun and Qiu, Guang and Guan, Ziyu and Zhao, Wei and He, Xiaofei},
  booktitle={Proceedings of the 24th ACM SIGKDD international conference on knowledge discovery \& data mining},
  pages={1021--1030},
  year={2018}
}

@article{gao2022bidding,
  title={Bidding Agent Design in the LinkedIn Ad Marketplace},
  author={Gao, Yuan and Yang, Kaiyu and Chen, Yuanlong and Liu, Min and Karoui, Noureddine El},
  journal={arXiv preprint arXiv:2202.12472},
  year={2022}
}

@article{golrezaei2021bidding,
  title={Bidding and Pricing in Budget and ROI Constrained Markets},
  author={Golrezaei, Negin and Jaillet, Patrick and Liang, Jason Cheuk Nam and Mirrokni, Vahab},
  journal={arXiv preprint arXiv:2107.07725},
  year={2021}
}

@inproceedings{aggarwal2019autobidding,
  title={Autobidding with constraints},
  author={Aggarwal, Gagan and Badanidiyuru, Ashwinkumar and Mehta, Aranyak},
  booktitle={International Conference on Web and Internet Economics},
  pages={17--30},
  year={2019},
  organization={Springer}
}

@inproceedings{balseiro2021landscape,
  title={The landscape of auto-bidding auctions: Value versus utility maximization},
  author={Balseiro, Santiago R and Deng, Yuan and Mao, Jieming and Mirrokni, Vahab S and Zuo, Song},
  booktitle={Proceedings of the 22nd ACM Conference on Economics and Computation},
  pages={132--133},
  year={2021}
}

@inproceedings{deng2021towards,
  title={Towards efficient auctions in an auto-bidding world},
  author={Deng, Yuan and Mao, Jieming and Mirrokni, Vahab and Zuo, Song},
  booktitle={Proceedings of the Web Conference 2021},
  pages={3965--3973},
  year={2021}
}

@inproceedings{babaioff2021non,
  title={Non-Quasi-Linear Agents in Quasi-Linear Mechanisms},
  author={Babaioff, Moshe and Cole, Richard and Hartline, Jason and Immorlica, Nicole and Lucier, Brendan},
  booktitle={12th Innovations in Theoretical Computer Science Conference (ITCS 2021)},
  year={2021},
  organization={Schloss Dagstuhl-Leibniz-Zentrum f{\"u}r Informatik}
}

@inproceedings{feng2023online,
  title={Online Bidding Algorithms for Return-on-Spend Constrained Advertisers},
  author={Feng, Zhe and Padmanabhan, Swati and Wang, Di},
  booktitle={Proceedings of the ACM Web Conference 2023},
  pages={3550--3560},
  year={2023}
}

@article{aggarwal_auto-bidding_2024,
	title = {Auto-{Bidding} and {Auctions} in {Online} {Advertising}: {A} {Survey}},
	volume = {22},
	language = {en},
	number = {1},
	urldate = {2025-05-03},
	journal = {ACM SIGecom Exchanges},
	author = {Aggarwal, Gagan and Badanidiyuru, Ashwinkumar and Balseiro, Santiago R. and Bhawalkar, Kshipra and Deng, Yuan and Feng, Zhe and Goel, Gagan and Liaw, Christopher and Lu, Haihao and Mahdian, Mohammad and Mao, Jieming and Mehta, Aranyak and Mirrokni, Vahab and Leme, Renato Paes and Perlroth, Andres and Piliouras, Georgios and Schneider, Jon and Schvartzman, Ariel and Sivan, Balasubramanian and Spendlove, Kelly and Teng, Yifeng and Wang, Di and Zhang, Hanrui and Zhao, Mingfei and Zhu, Wennan and Zuo, Song},
	month = jun,
	year = {2024},
	pages = {159--183},
}

@inproceedings{LiaoEtAl2014iPinYou,
  title     = {iPinYou Global RTB Bidding Algorithm Competition Dataset},
  author    = {Hairen Liao and Lingxiao Peng and Zhenchuan Liu and Xuehua Shen},
  booktitle = {Proceedings of the Eighth International Workshop on Data Mining for Online Advertising},
  pages     = {1--6},
  year      = {2014}
}

@article{ZhangYuanWang2014iPinYou,
  title   = {Real-Time Bidding Benchmarking with iPinYou Dataset},
  author  = {Weinan Zhang and Shuai Yuan and Jun Wang and Xuehua Shen},
  journal = {arXiv preprint arXiv:1407.7073},
  year    = {2014}
}

@article{fu2025learning,
  title={Learning to Coordinate Bidders in Non-Truthful Auctions},
  author={Fu, Hu and Lin, Tao},
  journal={arXiv preprint arXiv:2507.02801},
  year={2025}
}

@inproceedings{aggarwal2025multi,
  title={Multi-Platform Autobidding with and without Predictions},
  author={Aggarwal, Gagan and Gupta, Anupam and Tan, Xizhi and Zhao, Mingfei},
  booktitle={Proceedings of the ACM on Web Conference 2025},
  pages={2850--2859},
  year={2025}
}

@inproceedings{paes2024complex,
  title={Complex dynamics in autobidding systems},
  author={Paes Leme, Renato and Piliouras, Georgios and Schneider, Jon and Spendlove, Kelly and Zuo, Song},
  booktitle={Proceedings of the 25th ACM Conference on Economics and Computation},
  pages={75--100},
  year={2024}
}

@article{marshall2007bidder,
  title={Bidder collusion},
  author={Marshall, Robert C and Marx, Leslie M},
  journal={Journal of Economic Theory},
  volume={133},
  number={1},
  pages={374--402},
  year={2007},
  publisher={Elsevier}
}

@article{lopomo2011bidder,
  title={Bidder collusion at first-price auctions},
  author={Lopomo, Giuseppe and Marx, Leslie M and Sun, Peng},
  journal={Review of Economic Design},
  volume={15},
  number={3},
  pages={177--211},
  year={2011},
  publisher={Springer}
}

@article{chotibhongs2012analysis,
  title={Analysis of collusive bidding behaviour},
  author={Chotibhongs, Ranon and Arditi, David},
  journal={Construction Management and Economics},
  volume={30},
  number={3},
  pages={221--231},
  year={2012},
  publisher={Taylor \& Francis}
}

@article{pavlov2008auction,
  title={Auction design in the presence of collusion},
  author={Pavlov, Gregory},
  journal={Theoretical Economics},
  volume={3},
  number={3},
  pages={383--429},
  year={2008}
}

@article{robinson1985collusion,
  title={Collusion and the Choice of Auction},
  author={Robinson, Marc S},
  journal={The RAND Journal of Economics},
  pages={141--145},
  year={1985},
  publisher={JSTOR}
}

@article{hendricks1989collusion,
  title={Collusion in auctions},
  author={Hendricks, Kenneth and Porter, Robert H},
  journal={Annales d'Economie et de Statistique},
  pages={217--230},
  year={1989},
  publisher={JSTOR}
}

@article{mailath1991collusion,
  title={Collusion in second price auctions with heterogeneous bidders},
  author={Mailath, George J and Zemsky, Peter},
  journal={Games and Economic Behavior},
  volume={3},
  number={4},
  pages={467--486},
  year={1991},
  publisher={Elsevier}
}

@article{romano2022power,
  title={The power of media agencies in ad auctions: Improving utility through coordinated bidding},
  author={Romano, Giulia and Castiglioni, Matteo and Marchesi, Alberto and Gatti, Nicola},
  journal={arXiv preprint arXiv:2204.13772},
  year={2022}
}

@article{bergemann2016bayes,
  title={Bayes correlated equilibrium and the comparison of information structures in games},
  author={Bergemann, Dirk and Morris, Stephen},
  journal={Theoretical Economics},
  volume={11},
  number={2},
  pages={487--522},
  year={2016},
  publisher={Wiley Online Library}
}

@article{pesendorfer2000study,
  title={A study of collusion in first-price auctions},
  author={Pesendorfer, Martin},
  journal={The Review of Economic Studies},
  volume={67},
  number={3},
  pages={381--411},
  year={2000},
  publisher={Wiley-Blackwell}
}

@article{decarolis2020marketing,
  title={Marketing agencies and collusive bidding in online ad auctions},
  author={Decarolis, Francesco and Goldmanis, Maris and Penta, Antonio},
  journal={Management Science},
  volume={66},
  number={10},
  pages={4433--4454},
  year={2020},
  publisher={INFORMS}
}

\appendix

\section{Omitted Proofs in Section \ref{sec:RoS}}

\subsection{Proof of Observation \ref{obs:large-N-assumption-holds}}
Fix $\varepsilon\in(0,B)$. The probability $q_\varepsilon := \Pr_{v\sim F}[v > B - \eps] = 1-F(B-\varepsilon)>0$ by full support/continuity of $F$.
Let $M_N := \max\{v_1,\dots,v_N\}=v^{(N)}$ and $S_N:=v^{(N-1)}$.
The events $\{v_n > B-\varepsilon\}$ are i.i.d.\ with probability $q_\varepsilon$ and
$\sum_{n=1}^\infty \Pr(v_n>B-\varepsilon)=\infty$, so by the second Borel--Cantelli lemma,
$v_n > B-\varepsilon$ occurs infinitely often for all sufficiently large $n$ with probability one.
By monotonicity of the running maxima and of the second maximum, this implies
$M_N\ge B-\varepsilon$ and $S_N\ge B-\varepsilon$ for all sufficiently large $N$ with probability one;
hence $M_N\to B$ and $S_N\to B$ almost surely.

Define $Z_N := (S_N-d^O)_+ - (d^O-M_N)_+ \in[-B,B]$ and $\Delta_N = \E[Z_N]$.
Since $M_N,S_N\to B$ almost surely and $d^O\in[0,B]$, we have $Z_N \to (B-d^O)_+$ almost surely.
By the dominated convergence theorem,
\[
\lim_{N\to\infty} \Delta_N ~ = ~ \lim_{N\to\infty}\mathbb{E}[Z_N] ~ = ~ \mathbb{E}[\lim_{N\to\infty}Z_N] ~ = ~ \mathbb{E}\!\left[(B-d^O)_+\right] ~ \ge ~ 0,
\]
with strict inequality whenever $\Pr(d^O<B)>0$.
Therefore there exists $N_0$ such that $\Delta_N\ge 0$ for all $N\ge N_0$.

\subsection{Proof of the Second (Negative) Result in Theorem~\ref{thm:utility/RoS-constraint}}
\label{app:negative}
Suppose Assumption \ref{ass:highest-value-assumption} does not hold. 
Let the upper bound on value and outside bid be $B = 1$. 
For every bidder $i \in [N]$, we construct the following (weakly) overbidding algorithm $A$ that chooses the current bid based on the bidder's past bids:
\[
A(H_{i,t}) \;=\;
\begin{cases}
1, & \text{if there exists some } s<t \text{ with } b_{i,s}=0,\\
v_{i,t}, & \text{otherwise.}
\end{cases}
\]
\begin{claim}
For every horizon $T\ge 1$, every bidder $i$'s expected total utility difference is 
\begin{equation}
\label{eq:exact-gap}
\mathbb{E}\!\left[U_{i, T}^{C,A}-U_{i, T}^{I,A}\right]
\;=\;
\frac{T}{N}\,\Delta
\;+\;
\frac{1- N^{-T}}{N-1}\,L.
\end{equation}
where $L \;:=\; \mathbb{E}\big[(d^O-v^{(N)})_+\big].$

In particular, if $\Delta<0$ (i.e., Assumption~\ref{ass:highest-value-assumption} fails with strict negativity), then for all
\[
T \;>\; \frac{N}{N-1}\cdot \frac{L}{-\Delta},
\]
we have $\mathbb{E}[U_{i, T}^{C,A}]<\mathbb{E}[U_{i, T}^{I,A}]$.
\end{claim}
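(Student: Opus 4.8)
The plan is to compute both expected utilities exactly under the constructed algorithm $A$, exploiting the fact that the algorithm is \emph{sticky}: a bidder bids truthfully forever unless it is ever forced to bid $0$, in which case it ``punishes'' by bidding the maximum value $1$ thereafter. First I would analyze the independent scenario. Under independent bidding, every bidder bids at every round, so no bidder is ever assigned $b_{i,s}=0$; hence the triggering condition never fires and the algorithm reduces to truthful bidding for all $t$. Therefore $\E[U_{i,T}^{I,A}] = \E[U_{i,T}^{\truth}] = T\cdot\E[(v_i - d_{i})_+]$ in the independent setting, and by the i.i.d.~and second-price structure this per-round quantity is a fixed constant I would denote and evaluate in closed form.

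Next I would analyze the coordinated scenario, which is where the trigger becomes active. Under Algorithm~\ref{alg:coordinated-bidding}, in each round only the highest-value bidder submits a positive bid while the other $N-1$ bidders are assigned $b_{i,t}=0$. The key observation is that each bidder $i$ is the highest-value bidder with probability $1/N$ by symmetry, so with probability $1-1/N = (N-1)/N$ it is forced to bid $0$ in any given round. The first time bidder $i$ is a non-winner, the trigger $b_{i,s}=0$ is permanently set, so from the next round onward bidder $i$ bids $1$ whenever it is the active (highest-value) bidder. Thus I would split bidder $i$'s rounds into a ``pre-trigger'' phase (where it still bids truthfully when active) and a ``post-trigger'' phase (where it bids $1$ when active). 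The pre-trigger phase lasts only while bidder $i$ wins every round; the probability that bidder $i$ remains untriggered through round $t$ is $N^{-t}$ (it must be the unique highest value each round, or more precisely the history must contain no forced-zero bid). Summing the geometric contributions gives the factor $\frac{1-N^{-T}}{N-1}$ appearing in the claim.

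The core computation is then to evaluate the per-round expected utility in each phase when the bidder is active. When bidder $i$ is the highest-value bidder and bids truthfully, its expected utility is $\E[(v_{(N)} - d^O)_+]$ (since the only remaining competitor is the outside bid $d^O$, as all coalition-mates bid zero). When it bids $1$, it wins whenever $1 \ge d^O$, i.e.\ always, so its realized utility is $v_{(N)} - d^O$, with expectation $\E[v_{(N)} - d^O]$. The difference between bidding $1$ and bidding truthfully in the active state is exactly $\E[v_{(N)} - d^O] - \E[(v_{(N)}-d^O)_+] = -\E[(d^O - v_{(N)})_+] = -L$. Assembling the pieces: the coordinated utility equals the truthful-coordinated baseline $\frac{T}{N}\E[(v_{(N)}-d^O)_+]$ minus an $L$-penalty accrued over post-trigger active rounds, and the truthful baselines for coordinated versus independent combine to yield the $\frac{T}{N}\Delta$ term via the definition $\Delta = \E[(v_{(N-1)}-d^O)_+ - (d^O - v_{(N)})_+]$. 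I expect the main obstacle to be the careful bookkeeping of the trigger dynamics -- specifically, verifying that the untriggered probability is exactly $N^{-t}$ and correctly matching the expected truthful-utility differential to $\Delta$ rather than to some other combination of order-statistic expectations. Once Equation~\eqref{eq:exact-gap} is established, the final threshold claim is immediate: since $L \ge 0$ and the $L$-term is positive, setting the right-hand side negative and solving for $T$ gives $T > \frac{N}{N-1}\cdot\frac{L}{-\Delta}$ whenever $\Delta < 0$, completing the proof.
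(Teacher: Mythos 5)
Your proposal is correct and follows essentially the same route as the paper's proof: the same observation that the trigger never fires under independent bidding (reducing it to truthful bidding), the same stopping-time analysis of the first round in which bidder $i$ is forced to bid zero (giving the geometric factor $\sum_{t=1}^T N^{-t} = \frac{1-N^{-T}}{N-1}$), and the same identification of the per-active-round penalty $\E[v_{(N)}-d^O]-\E[(v_{(N)}-d^O)_+]=-L$ for bidding $1$ versus truthfully. The only differences are organizational (the paper conditions on $\tau_i<t$ versus $\tau_i\ge t$ directly, while you decompose into a truthful-active baseline plus an $L$-penalty on post-trigger active rounds), plus a minor wording slip -- the trigger is set the first time bidder $i$ is not the \emph{highest-value coalition member}, not the first time it fails to win the auction -- which your subsequent computation handles correctly.
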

\begin{proof}
We first note that, under independent bidding, the first case of algorithm $A$ never triggers. Indeed, since
$v_{i,t}>0$ almost surely (by continuity of $F$) and $A$ outputs either $v_{i,t}$ or $1$, bidder $i$
never submits a zero bid. Hence the condition
$\exists s<t$ with $b_{i,s}=0$ is never satisfied, so $b_{i,t}=v_{i,t}$ for every $t\in[T]$ and every bidder $i \in [N]$, hence
\[
U_{i, T}^{I,A} = U_{i, T}^{\mathrm{Truth}}, 
\]
where $U_{i, T}^{\mathrm{Truth}}$ is bidder $i$'s total utility when all bidders in $[N]$ bid truthfully. 

We now analyze the coordinated scenario. In each round $t$, only the highest-value bidder
$i_t^*$ follows $A$, while all other bidders bid $0$. Fix a bidder $i$ and define
\[
\tau_i := \min\{t\ge 1 : i\neq i_t^*\},
\]
as the first time when bidder $i$ does not have the highest value, 
so that bidder $i$ bids $0$ for the first time in round $\tau_i$, and the trigger is active from
round $\tau_i+1$ onward. By symmetry of values and independence across rounds,
\[
\Pr[\tau_i\ge t]
=
\Pr[i_1^*=i,\ldots,i_{t-1}^*=i]
=
\Big(\frac{1}{N}\Big)^{t-1}.
\]

Let $\Delta_{i,t}:=u_{i,t}^{C,A}-u_{i,t}^{\mathrm{Truth}}$. Since
$U_{i, T}^{C,A}-U_{i, T}^{I,A}=\sum_{t=1}^T \Delta_{i,t}$, it suffices to compute $\mathbb{E}[\Delta_{i,t}]$.

- If $\tau_i<t$, bidder $i$ has already bid $0$ in the past, so when selected (i.e., having the highest value $v_t^{(N)}$) in round $t$ it bids $1$.
A direct comparison with truthful bidding shows that on this event
\begin{align*}
\Delta_{i,t}
& ~ = \underbrace{\big( v_t^{(N)} - d_t^O \big) }_{\text{$u_{i, t}^{C, A} $ when $i$ is selected}} - ~ \underbrace{\big( v_t^{(N)} - \max\big\{ v_t^{(N-1)}, d_t^O \big\} \big) \cdot \1\big[v_t^{(N)} > \max\big\{ v_t^{(N-1)}, d_t^O \big\}\big]}_{\text{$u_{i, t}^{\truth}$ when $i$ has the highest value}} \\
& ~ = ~ \big( v_t^{(N)} - d_t^O \big) ~ - ~ \big( v_t^{(N)} - \max\big\{ v_t^{(N-1)}, d_t^O \big\} \big) \cdot \1\big[v_t^{(N)} >  d_t^O \big] \\
& ~ = ~ \1\big[v_t^{(N)} >  d_t^O \big] \cdot \big( \max\big\{ v_t^{(N-1)}, d_t^O \big\} - d_t^O \big) ~ + ~ \1\big[v_t^{(N)} \le  d_t^O \big] \cdot \big( v_t^{(N)} - d_t^O \big)  \\
& ~ = ~ (v_t^{(N-1)}-d_t^O)_+-(d_t^O-v_t^{(N)})_+.
\end{align*}
By symmetry, bidder $i$ is selected with probability $1/N$, so
\[
\mathbb{E}[\Delta_{i,t}\mid \tau_i<t] ~ = ~ \frac{1}{N}\Delta.
\]

- If instead $\tau_i\ge t$, bidder $i$ has not yet bid $0$, and therefore bids truthfully when selected.
In this case the only difference relative to truthful bidding is the reduced payment under
coordination, yielding
$
\Delta_{i,t}=(v_t^{(N-1)}-d_t^O)_+$,
\[
\mathbb{E}[\Delta_{i,t}\mid \tau_i\ge t]
~ = ~ 
\frac{1}{N}\mathbb{E}[(v^{(N-1)}-d^O)_+]
~ = ~
\frac{\Delta+L}{N}.
\]

Combining the two cases,
\[
\mathbb{E}[\Delta_{i,t}]
~ = ~
\Pr[\tau_i<t]\frac{\Delta}{N} + \Pr[\tau_i\ge t]\frac{\Delta+L}{N}
~ = ~
\frac{\Delta}{N}+\frac{L}{N^t}.
\]
Summing over $t=1,\ldots,T$ gives
\[
\mathbb{E}[U_{i, T}^{C,A} - U_{i,T}^{I,A}]
~ = ~
\sum_{t=1}^T \mathbb{E}[\Delta_{i,t}]
~ = ~
\frac{T}{N}\Delta+\frac{1-N^{-T}}{N-1}L,
\]
which proves \Cref{eq:exact-gap}. If $\Delta<0$, the final inequality follows immediately for
$T$ exceeding the stated threshold.
\end{proof}

\subsection{Proof of Lemma~\ref{lem:U-C-A-vs-U-truthful}}
\label{app:proof-of-lemU-C-A-vs-U-truthful}

For any sequence of values and outside bids $(\bm v_t, d_t^O)_{t=1}^T$, consider the difference between bidder $i$'s coordinated auto-bidding utility and truthful bidding utility at round $t$: 
\begin{align*}
    \Delta_{i, t} & = u_{i,t}^{C,A} -u_{i,t}^{\truth}  \\
    & = (v_{i,t}-d_t^O)\cdot\1[b_{i,t}^{C, A} \ge d_t^O]\cdot\1[v_{i,t}=\max_{j\in[N]} v_{j,t}] - (v_{i,t}-d_{i, t}^\truth)\cdot\1[v_{i,t}\ge d_{i, t}^\truth]  \nonumber 
\end{align*}
where $d_{i, t}^\truth = \max\{ d_t^O,  \max_{j\ne i} v_{j, t}\}$ is the competing bid for bidder $i$ under truthful bidding. 
We do a case analysis: 
\begin{itemize}[leftmargin=1.5em]
    \item If $v_{i,t}\neq\max_{j\in[N]}v_{j,t}$, then bidder $i$ loses under both coordinated bidding and truthful bidding, so $u_{i,t}^{C,A} = u_{i,t}^\truth = 0$ and $\Delta_{i, t} = 0$. 
    \item If $v_{i,t}=\max_{j\in [N]}v_{j,t}$, then
    \begin{align*}
        & u_{i,t}^{C,A} = (v_{i,t}-d_t^O)\cdot\1[b_{i,t}^{C, A}\ge d_t^O] \\
        & u_{i,t}^{\truth} = (v_{i,t}-d_t^\truth)\cdot\1[v_{i,t}\ge d_t^O]. 
    \end{align*}
    Because the auto-bidder is overbidding ($v_{i, t} \le b_{i, t}^{C, A}$), there are three further cases: 
    \begin{itemize}
        \item $d_t^O < v_{i, t} \le b_{i,t}^{C, A}$: in this case, $\Delta_{i, t} = d_t^\truth-d_t^O\ge0$.
        \item $v_{i, t} \le b_{i,t}^{C, A} < d_t^O$: in this case, the bidder loses under both scenarios, so $\Delta_{i, t} = 0$.
        \item $v_{i,t} \le d_t^O\le b_{i,t}^{C, A}$: in this case, the bidder loses under truthful bidding but wins under coordinated bidding, so $\Delta_{i, t} = u_{i, t}^{C, A} = v_{i,t} - d_t^O \le 0$.
    \end{itemize}
\end{itemize}

Then, we condition on any history $H_{t-1}$, randomize over $\bm v_t$ and $d_t^O$, and analyze the expectation of $\Delta_{i, t}$.  Let $F_{(N)}$ be the distribution of the largest value $v_{(N)}$ among $N$ samples from $F$. 
\begin{align*}
& \E_{\bm v_t\sim F,d_t^O\sim D}\big[ \Delta_{i, t} \mid H_{t-1} \big]\\
& = 0 +
\Pr[v_{i, t} = \max_{j\in[N]} v_{j,t}] \cdot \E_{v_{i,t}\sim F_{(N)},d_t^O\sim D}\big[ \Delta_{i, t} \mid H_{t-1} \big] \\
& = \frac{1}{N} \bigg(\Pr_{v_{i,t}\sim F_{(N)},d_t^O\sim D}\Big[v_{i,t} > d_t^O\Big] \cdot \E\Big[d_t^\truth - d_t^O \,\big|\, v_{i,t}\ge d_t^O \Big]\\
& \qquad + \Pr_{v_{i,t}\sim F_{(N)},d_t^O\sim D}\Big[v_{i,t} \le d_t^O \le b_{i,t}^{C, A}\Big] \cdot \E\Big[v_{i,t}-d_t^O \,\big|\, v_{i,t}\le d_t^O \le b_{i,t}^{C, A}\Big] \bigg) \\
& = \frac{1}{N} \bigg( \E\Big[ \mathbb{I}\big[v_{(N)} > d_t^O \big] \cdot \big(\max\{v_{(N-1)}, d_t^O \} - d_t^O \big) \Big] + \E\Big[ \mathbb{I}\big[ v_{(N)}\le d_t^O \le b_{i,t}^{C, A}\big] \cdot\big( v_{(N)}-d_t^O \big) \Big] \bigg)
\end{align*}
Because $\max\{v_{(N-1)}, d_t^O \} - d_t^O > 0$ implies $\mathbb{I}\big[v_{(N)} > d_t^O \big] = 1$, the above is equal to 
\begin{align*}
& = \frac{1}{N} \bigg( \E\Big[ \max\{v_{(N-1)}, d_t^O \} - d_t^O \Big] + \E\Big[ \mathbb{I}\big[ v_{(N)} \le d_t^O\le b_{i,t}^{C, A}\big] \cdot \big( v_{(N)}-d_t^O \big) \Big] \bigg) \\
& \ge \frac{1}{N} \bigg( \E\Big[ \max\{v_{(N-1)}, d_t^O \} - d_t^O \Big] + \E\Big[ \mathbb{I}\big[ v_{(N)} \le d_t^O\big] \cdot\big( v_{(N)}-d_t^O \big) \Big] \bigg) \\
& = \frac{1}{N} \bigg( \E\Big[ \big(v_{(N-1)} - d_t^O\big)_+ \Big] - \E\Big[ \big( d_t^O - v_{(N)} \big)_+ \Big] \bigg). 
\end{align*}
Using Assumption~\ref{ass:highest-value-assumption}, we obtain
\begin{align*}
    \E_{\bm v_t\sim F,d_t^O\sim D}\InBrackets{\Delta_{i, t} \mid H_{t-1} } ~ \ge ~ \frac{\Delta}{N}. 
\end{align*}
Therefore,
\[
\E\Big[U_{i, T}^{C, A} - U_{i, T}^{\truth}\Big] ~ = ~ \E\left[\sum_{t=1}^T \Delta_{i,t}\right] ~ \ge ~ T\cdot \frac{\Delta}{N},
\]
which proves the lemma. 

\section{Omitted Proofs in Section \ref{sec:value}}
Denote the time-average total values among the $N$ coalition bidders by
$V_T^{C,A}:=\frac 1 T \sum_{t=1}^T V_t^{C,A}$ and
$V_T^{I,A}:=\frac 1T \sum_{t=1}^T V_{t}^{I,A}$ for the coordinated and
independent cases, respectively.

\subsection{Proof of Theorem~\ref{thm:value-wo-ass}}
\label{proof:value-wo-ass}

Let $v_{(N)} = \max\{v_1, \ldots, v_N\}$ denote the highest value among $N$ i.i.d.~samples from $F$.  Let $F_{(N)}$ be the distribution of $v_{(N)}$. 
Let $G_{(N)}(\lambda)$ be the expected utility of a single bidder with value $v_{(N)} \sim F_{(N)}$ and bidding using multiplier $\lambda > 0$, competing against the outside bid $d^O \sim D$:  
\begin{equation*}
    G_{(N)}(\lambda) := \E_{v_{(N)}\sim F_{(N)}, d^O\sim D}\Big[(v_{(N)} - d^O) \cdot \1\big[ (1+\tfrac{1}{\lambda}) v_{(N)} > d^O\big]\Big]. 
\end{equation*}

Define the corresponding single-bidder expected value as
\[
V_{(N)}(\lambda):= \mathbb E_{v_{(N)}\sim F_{(N)},\,d^O\sim D} \Big[ v_{(N)} \1 \big[ (1 + \tfrac1\lambda )v_{(N)}>d^O \big] \Big].
\]

Since \(V_{(N)}(\lambda)\) is monotone and bounded, the right limit
\(V_{(N)}(0^+):=\lim_{\lambda\downarrow0}V_{(N)}(\lambda)\) exists. Define the
stable reference multiplier by
\[
\lambda_\star := \inf \big\{\lambda>0:G_{(N)} (\lambda) \ge 0 \big\}\in[0,\infty), 
\]
and denote \(V_{(N)}(0) = V_{(N)}(0^+)\) when \(\lambda_\star=0\).

We present the monotonicity property of $G_{(N)}(\lambda)$. 

\begin{lemma}[Monotonicity of \(G_{(N)}\)]
\label{lem:G-monotone}
\label{lemma:mono}
Assume that the outside-bid distribution \(D\) admits a density \(f_D\) that is
strictly positive on \([0,B]\). Then \(G_{(N)}(\lambda)\) is continuous and strictly increasing on \((0,\infty)\). Moreover, \(\{\lambda > 0: G_{(N)}(\lambda)\ge0\}\) is nonempty. Hence, if \(\lambda_\star>0\), then \(G_{(N)}(\lambda_\star)=0\), while if
\(\lambda_\star=0\), then \(G_{(N)}(\lambda)>0\) for every \(\lambda>0\).
\end{lemma}

\begin{proof}
Continuity follows from dominated convergence and the continuity of \(D\).
Write the single-bidder's expected utility as
\(G_{(N)}(\lambda)=\mathbb E_{v\sim F_{(N)}}G(\lambda;v)\), where \(G(\lambda;v)\) is
the bidder's expected utility conditioning on having value \(v\). Namely,
\begin{equation*}
G(\lambda;v) = \mathbb E_{d^O\sim D}
\left[ (v-d^O)\,\1\left\{\left(1+\tfrac1\lambda\right)v>d^O\right\} \right] = v\,x\left(\left(1+\tfrac1\lambda\right)v\right) -p\left(\left(1+\tfrac1\lambda\right)v\right),
\end{equation*}
where \(x(b)\) is the probability of winning when the bidder bids \(b\), and
\(p(b)\) is the expected payment. Extending \(f_D\) by zero outside
\([0,B]\), we have
\begin{align*}
    x(b)=\int_0^b f_D(z)\,\mathrm{d}z,
    \qquad
    p(b)=\int_0^b z f_D(z)\,\mathrm{d}z .
\end{align*}
Since the second-price auction is truthful, Myerson's payment identity gives
\begin{equation*}
    p(b)=x(b)b-\int_0^b x(z)\,\mathrm{d}z .
\end{equation*}
Therefore,
\begin{align*}
G(\lambda;v)
& = v x\left(\left(1+\tfrac1\lambda\right)v\right) - \left(1+\tfrac1\lambda\right)v x\left(\left(1+\tfrac1\lambda\right)v\right) + \int_0^{(1+1/\lambda)v}x(z)\,\mathrm{d}z \\
&= -\frac1\lambda v x\left(\left(1+\tfrac1\lambda\right)v\right) + \int_0^{(1+1/\lambda)v}x(z)\,\mathrm{d}z .
\end{align*}
Taking the derivative with respect to \(\lambda\),
\begin{align*}
\frac{\partial G(\lambda;v)}{\partial\lambda}
&= \frac{1}{\lambda^2}v x\left(\left(1+\tfrac1\lambda\right)v\right) - \frac1\lambda v x'\left(\left(1+\tfrac1\lambda\right)v\right)
  \left(-\frac{v}{\lambda^2}\right)
+x\left(\left(1+\tfrac1\lambda\right)v\right)
  \left(-\frac{v}{\lambda^2}\right) \\
&= \frac{v^2}{\lambda^3}x'\left(\left(1+\tfrac1\lambda\right)v\right) \\
&= \frac{v^2}{\lambda^3}f_D\left(\left(1+\tfrac1\lambda\right)v\right).
\end{align*}
Integrating over \(v\sim F_{(N)}\),
\begin{align*}
\frac{\mathrm{d}G(\lambda)}{\mathrm{d}\lambda} = \mathbb E_{v\sim F_{(N)}} \left[ \frac{v^2}{\lambda^3} f_D\left(\left(1+\tfrac1\lambda\right)v\right)
\right].
\end{align*}
Because \(F_{(N)}\) has full support on \([0,B]\) and \(f_D(z)>0\) on
\([0,B]\), the integrand is positive with positive probability: for example,
\(v\in(0,B/(1+1/\lambda))\) has positive probability, and then
\((1+1/\lambda)v\in(0,B)\). Hence
\(\mathrm{d}G(\lambda)/\mathrm{d}\lambda>0\), so \(G\) is strictly increasing.

Finally, because $F_{(N)}$ and $D$ have ful support on $[0, B]$, 
\[
\lim_{\lambda\to\infty}G(\lambda)
=
\mathbb E\left[(v_{(N)}-d^O)\1\{v_{(N)}>d^O\}\right]>0,
\]
so \(\{\lambda>0:G(\lambda)\ge0\}\) is nonempty. The remaining claims
follow from continuity and strict monotonicity.
\end{proof}

For the MD-\(h\) update, write \(y_{i,t}:=h'(\lambda_{i,t}) \in (-\infty, +\infty) \) for the mirror
coordinate of bidder \(i\)'s multiplier. By first-order optimality, the mirror-descent update rule is 
\begin{equation}
y_{i,t+1}=y_{i,t}-\alpha g_{i,t}.
\label{eq:md-y-update}
\end{equation}

\subsection*{Step 1: Coordinated value converges to \(V_{(N)}(\lambda_\star)\)}

In this step all quantities refer to the coordinated process. Let
\(H_{t-1}\) denote the joint history before round \(t\), and define the drift in
mirror coordinates by
\[ \phi(y):=\tfrac1N G_{(N)}\left((h')^{-1}(y)\right), \quad \forall y\in\mathbb R.\]
Under coordination, bidder \(i\) is active if and only if it is the
highest-value bidder in the coalition. Since the values are i.i.d., each bidder is the highest-value bidder with probability $1/N$. So conditioning
on the history \(H_{t-1}\),
\begin{equation}
\mathbb E\big[ g_{i,t}\mid H_{t-1}] = \tfrac1N G_{(N)}(\lambda_{i,t}) = \phi(y_{i,t}).
\label{eq:coord-drift}
\end{equation}
Also, in the coordinated process, \(g_{i,t}\in[-B,B]\) almost surely.

\begin{lemma}[Interior occupation bound]
\label{lem:coord-interior}
Assume \(\lambda_\star>0\). Let \(y_\star:=h'(\lambda_\star)\). For every
bidder \(i\in[N]\) and every \(\delta>0\),
\[
\frac1T\sum_{t=1}^T \mathbb P \big(|y_{i,t}-y_\star|\ge\delta \big) \le \frac{(y_{i,1}-y_\star)^2+B^2}{2\delta c_\delta}\cdot\frac1{\sqrt T},
\]
where
\[
c_\delta := \min\{-\phi(y_\star - \delta),\,\phi(y_\star+\delta)\}>0.
\]
\end{lemma}

\begin{proof}
Since \(\phi\) is strictly increasing and \(\phi(y_\star)=0\), \(c_\delta>0\).
For all \(y\) with \(|y-y_\star|\ge\delta\),
\begin{equation}
(y-y_\star)\phi(y)\ge \delta c_\delta.
\label{eq:drift-away}
\end{equation}
Let \(W_t:=(y_{i,t}-y_\star)^2\). By Equation (\ref{eq:md-y-update}),
\[
W_{t+1}
=
W_t-2\alpha(y_{i,t}-y_\star)g_{i,t}
+\alpha^2g_{i,t}^2.
\]
Taking conditional expectation and using Equation (\ref{eq:coord-drift}) and
\(g_{i,t}^2\le B^2\),
\[
\mathbb E[W_{t+1}\mid H_{t-1}]
\le
W_t-2\alpha(y_{i,t}-y_\star)\phi(y_{i,t})+\alpha^2B^2.
\]
Summing over \(t=1,\ldots,T\), using \(W_{T+1}\ge0\), and using
\(\alpha^2T=1\), we obtain
\[
2\alpha \sum_{t=1}^T \mathbb E\big[(y_{i,t}-y_\star)\phi(y_{i,t})\big] \le W_1+B^2.
\]
By Equation (\ref{eq:drift-away}),
\[
\mathbb E\big[(y_{i,t}-y_\star)\phi(y_{i,t})\big] \ge \delta c_\delta\, \mathbb P\big(|y_{i,t}-y_\star|\ge\delta\big).
\]
Dividing by \(2\alpha T\) gives the claim.
\end{proof}

\begin{lemma}[Boundary occupation bound]
\label{lem:coord-boundary}
Assume \(\lambda_\star=0\). For every bidder \(i\in[N]\) and any
\(\delta>0\),
\[
\frac1T\sum_{t=1}^T \mathbb P\big(\lambda_{i,t}\ge\delta \big) = O(T^{-1/2}).
\]
\end{lemma}

\begin{proof}
Since \(\lambda_\star=0\), Lemma~\ref{lem:G-monotone} implies \(G_{(N)}(\lambda)>0\) for all \(\lambda>0\). Hence \(\phi(y)>0\) for any \(y\in\mathbb R\).

Fix \(M\in\mathbb R\). It suffices to prove
\[
\frac1T\sum_{t=1}^T \mathbb P \big(y_{i,t}\ge M \big) = O(T^{-1/2}).
\]
Let \(W_t:=\exp(y_{i,t})\). By Equation (\ref{eq:md-y-update}), \(W_{t+1}=W_t\exp(-\alpha g_{i,t})\). Using \(e^{-x}\le 1-x+x^2e^{|x|}/2\) and \(|g_{i,t}|\le B\), we have
$$
\exp(-\alpha g_{i,t})-1\le -\alpha g_{i,t} + \frac{1}{2}\alpha^2 g_{i,t}^2 e^{\alpha|g_{i,t}|}\le -\alpha g_{i,t} + \frac{B^2e^B}{2}\alpha^2
$$
Therefore,
$$
\mathbb E\big[W_{t+1}-W_t\mid H_{t-1} \big] = W_t \mathbb E\big[\exp(- \alpha g_{i, t}) - 1 \mid H_{t-1} \big] \le - \alpha \phi(y_{i,t}) W_t + \frac{B^2e^B}{2}\alpha^2W_t.
$$
On \(\{y_{i,t}\ge M\}\), monotonicity gives \(\phi(y_{i,t})\ge\phi(M)>0\), while outside this event \(\phi(y_{i,t})>0\). Thus
\begin{equation}
\mathbb E\big[W_{t+1}-W_t\big] \le - \alpha\phi(M)\mathbb E\big[W_t\1\{y_{i,t}\ge M\}\big] + \frac{B^2e^B}{2}\alpha^2\mathbb E\big[W_t\big] .
\label{eq:exp-potential}
\end{equation}
Hoeffding's lemma and Equation (\ref{eq:coord-drift}) give
$$
\mathbb E\big[\exp(-\alpha g_{i,t})\mid H_{t-1}\big] \le \exp\left( -\alpha\phi(y_{i,t})+\frac{\alpha^2B^2}{2} \right) \le \exp\left(\frac{\alpha^2B^2}{2}\right).
$$

Therefore, since $\alpha^2 T = 1$,
$$
\sup_{t\le T+1} \mathbb E[W_t] \le \mathbb E[W_1] \exp\left(\frac{B^2}{2}\right). 
$$


Summing (\ref{eq:exp-potential}), using
\(W_{T+1}\ge0\), and using \(\alpha^2T=1\), we get
$$
\alpha\phi(M) \sum \nolimits_{t=1}^T \mathbb E\big[W_t\1\{y_{i,t}\ge M\}\big] \le \mathbb E[W_1]+\frac{B^2e^B}{2} \mathbb E[W_1] \exp\left(\frac{B^2}{2}\right).
$$
Since \(W_t\ge e^{M}\) on \(\{y_{i,t}\ge M\}\),
$$
\sum \nolimits_{t=1}^T\mathbb P\big(y_{i,t}\ge M\big)\le\frac{e^{-M}}{\alpha \phi(M)}\left(\mathbb E[W_1]+\frac{B^2e^B}{2} E[W_1] \exp\left(\frac{B^2}{2}\right)\right)=O(\alpha^{-1})=O(\sqrt T) . 
$$
Dividing by \(T\) gives the desired occupation bound. Finally, taking
\(M=h'(\delta)\) proves the lemma.
\end{proof}

\begin{lemma}[Coordinated value limit]
\label{lem:coord-value-limit}
Under coordinated bidding,
\[
\lim_{T\to\infty}\mathbb E\big[ V_T^{C,A} \big] = V_{(N)}(\lambda_\star).
\]
\end{lemma}

\begin{proof}
Let \(i_t^\star:=\operatorname*{arg\,max}_{i\in[N]}v_{i,t}\) be the highest-value bidder in round $t$. Under coordination,
\[
V_t^{C,A} = v_{(N),t} \cdot \1 \left\{(1+\tfrac1{\lambda_{i_t^\star,t}})v_{(N),t}>d_t^O\right\}.
\]
Hence the expected $T$-round total value is 
\begin{equation}
\mathbb E\big[ V_T^{C,A} \big] = \mathbb E\left[
\frac1T\sum\nolimits_{t=1}^T
V_{(N)}(\lambda_{i_t^\star,t})
\right]. 
\label{eq:coord-value-as-average}
\end{equation}

- First suppose \(\lambda_\star>0\). Fix \(\varepsilon>0\). By continuity of
\(V_{(N)}\) at \(\lambda_\star\), choose \(\delta>0\) such that
\[
\left|V_{(N)}((h')^{-1}(y))-V_{(N)}(\lambda_\star)\right|
\le \varepsilon
\quad\text{whenever } |y-y_\star|<\delta.
\]
Using \(0\le V_{(N)}(\cdot)\le B\),
\[
\left|\mathbb E\big[ V_T^{C,A} \big] - V_{(N)}(\lambda_\star)\right| \le \varepsilon + 2B\,
\mathbb E\left[ \frac1T\sum \nolimits_{t=1}^T \1\{|y_{i_t^\star,t}-y_\star|\ge\delta\} \right].
\]
Conditional on \(H_{t-1}\), the index \(i_t^\star\) is uniform on \([N]\).
Thus
\[
\mathbb E\big[
\1\{|y_{i_t^\star,t}-y_\star|\ge\delta\}
\mid H_{t-1}\big] = \frac1N\sum \nolimits_{i=1}^N
\1\{|y_{i,t}-y_\star|\ge\delta\}.
\]
Lemma~\ref{lem:coord-interior} implies that the second term vanishes as
\(T\to\infty\). Letting \(\varepsilon\downarrow0\) proves the interior case.

- Then suppose \(\lambda_\star=0\). Fix \(\varepsilon>0\). By the definition of
\(V_{(N)}(0^+)\), choose \(\delta>0\) such that
\[
|V_{(N)}(\lambda)-V_{(N)}(0^+)| \le\varepsilon \qquad \forall \lambda\in(0,\delta].
\]
Then, by Equation (\ref{eq:coord-value-as-average}),
\[
\left|\mathbb E\big[V_T^{C,A}\big] - V_{(N)}(0^+)\right| \le \varepsilon + 2B\,
\mathbb E\left[
\frac1T\sum \nolimits_{t=1}^T
\1\{\lambda_{i_t^\star,t}>\delta\}
\right].
\]
Again, conditional on \(H_{t-1}\),
\[
\mathbb E\big[
\1\{\lambda_{i_t^\star,t}>\delta\}
\mid H_{t-1}\big] = \frac1N\sum \nolimits_{i=1}^N
\1\{\lambda_{i,t}>\delta\}.
\]
Lemma~\ref{lem:coord-boundary} implies that the second term vanishes as $T\to\infty$. Since
\(\varepsilon\) is arbitrary, the boundary case follows.
\end{proof}

\subsection*{Step 2: Virtual RoS feasibility under independent bidding}

In this step all quantities refer to the independent bidding case, and we sometimes omit the superscript \(I,A\). Define
\[
\bar b_t:=\max_{i\in[N]} b_{i,t},
\qquad
\bar V_t:=v_{(N),t}\1\{\bar b_t>d_t^O\},
\qquad
\bar g_t:=(v_{(N),t}-d_t^O)\1\{\bar b_t>d_t^O\}.
\]
We view \(\bar b_t\) as the bid of a virtual bidder who has value
\(v_{(N),t}\), competes against the outside bid $d_t^O$ only, and obtains value $\bar V_t$ and utility $\bar g_t$. Ties have probability zero under the continuity assumptions, and
all pointwise statements below are understood almost surely.

For each bidder \(i \in [N]\), define the clipped realized utility
\[
\widetilde g_{i,t}:=\max\{g_{i,t},-B\}.
\]

\begin{lemma}[Virtual utility dominates clipped realized utilities]
\label{lem:virtual-clipped-domination}
For every round \(t\), almost surely,
\[
V_t^{I,A}\le \bar V_t, \qquad \bar g_t\ge \sum \nolimits_{i=1}^N \widetilde g_{i,t}.
\]
\end{lemma}

\begin{proof}
Fix a round \(t\). If \(\bar b_t\le d_t^O\), then no coalition bidder wins
against the outside bid. Hence \(V_t^{I,A}=0\), \(g_{i,t}=0\) for all \(i\),
and \(\bar V_t=\bar g_t=0\).

Now suppose \(\bar b_t>d_t^O\). Let \(j_t\) be the coalition winner, and let
\(b^{(2)}_t\) be the second-highest coalition bid. Then
\[
V_t^{I,A}=v_{j_t,t}\le v_{(N),t}=\bar V_t.
\]
Only bidder \(j_t\) has nonzero realized utility, and
\[
g_{j_t,t} = v_{j_t,t}-\max\{d_t^O,b^{(2)}_t\}.
\]
Since \(v_{j_t,t}\le v_{(N),t}\) and
\(\max\{d_t^O,b^{(2)}_t\}\ge d_t^O\),
\[
g_{j_t,t}\le v_{(N),t}-d_t^O=\bar g_t.
\]
Also \(\bar g_t\ge -B\), because \(v_{(N),t},d_t^O\in[0,B]\). Therefore
\[
\sum \nolimits_{i=1}^N\widetilde g_{i,t} = \max\{g_{j_t,t},-B\} \le \bar g_t.
\]
\end{proof}

\begin{lemma}[Self-bounding of clipped utilities]
\label{lem:clipped-self-bound}
For every bidder \(i\) and every round \(t\),
\[
-B\le \widetilde g_{i,t}\le B, \qquad \widetilde g_{i,t}\ge -\tfrac{B}{\lambda_{i,t}},
\]
and in mirror coordinates,
\[
y_{i,t+1}\ge y_{i,t}-\alpha \widetilde g_{i,t}.
\]
\end{lemma}

\begin{proof}
The lower bound \(\widetilde g_{i,t}\ge -B\) holds by definition. Also
\(g_{i,t}\le v_{i,t}x_{i,t}\le B\), so \(\widetilde g_{i,t}\le B\).

If bidder \(i\) loses, then \(g_{i,t}=0\), and hence
\(\widetilde g_{i,t}=0\ge -B/\lambda_{i,t}\). If bidder \(i\) wins, then the competing bid is at most its bid:
\[
d_{i,t}\le b_{i,t} = \left(1+\tfrac1{\lambda_{i,t}}\right)v_{i,t}.
\]
Thus
\[
g_{i,t} = v_{i,t}-d_{i,t} \ge v_{i,t}- \left(1+\tfrac1{\lambda_{i,t}}\right)v_{i,t} = -\tfrac{v_{i,t}}{\lambda_{i,t}} \ge -\tfrac{B}{\lambda_{i,t}}.
\]
Since \(\widetilde g_{i,t}\ge g_{i,t}\), this proves $\widetilde g_{i, t} \ge - \tfrac{B}{\lambda_{i, t}}$.

Recall that the MD update is \(y_{i,t+1}=y_{i,t}-\alpha g_{i,t}\). Because
\(\widetilde g_{i,t}\ge g_{i,t}\),
\[
y_{i,t+1}=y_{i,t}-\alpha g_{i,t}
\ge
y_{i,t}-\alpha\widetilde g_{i,t}.
\]
\end{proof}

\begin{lemma}[Deterministic clipped violation bound]
\label{lem:clipped-violation}
Fix bidder \(i\). For any \(L>\lambda_{i,1}\),
\[
-\sum \nolimits_{t=1}^T \widetilde g_{i,t} \le \frac{h'(L)-h'(\lambda_{i,1})}{\alpha} + B + \frac{BT}{L}.
\]
\end{lemma}

\begin{proof}
Define
\[
S_t:=-\sum \nolimits_{s=1}^t \widetilde g_{i,s}, \qquad S_0:=0.
\]
By Lemma~\ref{lem:clipped-self-bound},
\[
y_{i,t+1}\ge y_{i,t}-\alpha \widetilde g_{i,t}.
\]
Iterating gives
\[
y_{i,t}\ge y_{i,1}+\alpha S_{t-1}
\qquad\forall t\ge1.
\tag{6}
\label{eq:y-lower-by-S}
\]
Let
\[
A_L:=\frac{h'(L)-h'(\lambda_{i,1})}{\alpha}.
\]
If \(S_T\le A_L\), the claim is immediate. Otherwise, define the last time at
which the cumulative clipped violation is still below \(A_L\):
\[
\tau:=\max \big\{t\in\{0,1,\ldots,T\}:S_t\le A_L \big\}.
\]
Then \(\tau\le T-1\), \(S_\tau\le A_L\), and 
\[
S_{t-1}>A_L\qquad\forall t\ge \tau+2.
\]
For every \(t\ge\tau+2\), Inequality (\ref{eq:y-lower-by-S}) gives
\[
y_{i,t} > y_{i,1}+\alpha A_L = h'(\lambda_{i,1})+h'(L)-h'(\lambda_{i,1}) = h'(L).
\]
Since \(h'\) is strictly increasing, \(\lambda_{i,t}>L\). Hence, by
Lemma~\ref{lem:clipped-self-bound},
\[
-\widetilde g_{i,t}\le \frac{B}{\lambda_{i,t}}<\frac{B}{L} \qquad \forall t\ge\tau+2.
\]
Using also \(-\widetilde g_{i,\tau+1}\le B\), we get
\[
S_T = S_\tau + (-\widetilde g_{i,\tau+1}) + \sum \nolimits_{t=\tau+2}^T(-\widetilde g_{i,t}) \le A_L+B+\frac{BT}{L}.
\]
Since \(S_T=-\sum_{t=1}^T\widetilde g_{i,t}\), the proof is complete.
\end{proof}

\begin{lemma}[Virtual bidder's RoS violation]
\label{lem:virtual-ros}
There exists a deterministic sequence \(R_T=o(T)\) such that, pathwise,
\[
\sum \nolimits_{t=1}^T \bar g_t\ge -N R_T.
\]
\end{lemma}

\begin{proof}
Because \(\lambda_{i,1}=1\) and \(\alpha=1/\sqrt T\), define
\[
L_T:=(h')^{-1}\left(h'(1)+T^{1/4}\right),
\qquad
R_T:= \frac{h'(L_T) - h'(1)}{\alpha} + B + \frac{BT}{L_T} = T^{3/4}+B+\frac{BT}{L_T}. 
\]
Since \(h'\) is onto and strictly increasing, \(L_T\to\infty\) as $T\to\infty$, and hence
\[
\frac{R_T}{T} = T^{-1/4}+\frac{B}{T}+\frac{B}{L_T} \to 0.
\]
Thus \(R_T=o(T)\).

Applying Lemma~\ref{lem:clipped-violation} with \(L=L_T\) gives, for every
bidder \(i\),
\[
\sum \nolimits_{t=1}^T \widetilde g_{i,t}\ge -R_T.
\]
By Lemma~\ref{lem:virtual-clipped-domination},
\[
\sum \nolimits_{t=1}^T \bar g_t \ge
\sum \nolimits_{t=1}^T\sum_{i=1}^N\widetilde g_{i,t} = \sum \nolimits_{i=1}^N\sum \nolimits_{t=1}^T\widetilde g_{i,t} \ge - NR_T.
\]
\end{proof}

\subsection*{Step 3: Dual-envelope upper bound on independent value}

For \(\lambda>0\), define
\[
\Phi_\lambda(v,b) := \mathbb E_{d^O\sim D}
\left[ v\1\{b>d^O\} + \lambda(v-d^O)\1\{b>d^O\} \right], 
\]
\[
\Psi_\lambda(v):=\sup_{b\ge0}\Phi_\lambda(v,b),
\qquad \mathcal D(\lambda):=\mathbb E_{v\sim F_{(N)}}[\Psi_\lambda(v)].
\]

\begin{lemma}[Dual envelope]
\label{lem:dual-envelope}
For every \(\lambda>0\),
\[
\mathcal D(\lambda)=V_{(N)}(\lambda)+\lambda G_{(N)}(\lambda).
\]
Moreover,
\[
\inf_{\lambda>0}\mathcal D(\lambda)=V_{(N)}(\lambda_\star).
\]
\end{lemma}

\begin{proof}
Fix \(v\in[0,B]\), \(\lambda>0\), and set
\[
c_\lambda(v):=\left(1+\tfrac1\lambda\right)v.
\]
We can rewrite
\[
\Phi_\lambda(v,b) = \mathbb E_{d^O\sim D}
\left[ \lambda(c_\lambda(v)-d^O)\1\{d^O<b\} \right].
\]
Increasing \(b\) up to \(c_\lambda(v)\) only adds nonnegative contributions,
whereas increasing \(b\) beyond \(c_\lambda(v)\) only adds nonpositive
contributions. Hence \(b=c_\lambda(v)\) maximizes \(\Phi_\lambda(v,b)\), and
\[
\Psi_\lambda(v)=\Phi_\lambda(v,c_\lambda(v)).
\]
Taking expectation over \(v\sim F_{(N)}\) yields
\[
\mathcal D(\lambda)=V_{(N)}(\lambda)+\lambda G_{(N)}(\lambda).
\]
It remains to prove the infimum identity. Consider the single-bidder primal
problem
\[
\mathrm{OPT}_{\mathrm{single}} := \sup_{\pi}
\big\{ \mathbb E[V^\pi]:
\mathbb E[g^\pi]\ge0
\big\},
\]
where the supremum is over measurable bidding rules \(\pi:[0,B]\to\mathbb R_+\), and
\[
V^\pi:=v_{(N)}\1\{\pi(v_{(N)})>d^O\},
\qquad
g^\pi:=(v_{(N)}-d^O)\1\{\pi(v_{(N)})>d^O\}.
\]
For any feasible \(\pi\) and any \(\lambda>0\),
\[
\mathbb E[V^\pi]\le\mathbb E[V^\pi+\lambda g^\pi]\le\mathcal D(\lambda).
\]
Thus
\[
\mathrm{OPT}_{\mathrm{single}}\le\inf_{\lambda>0}\mathcal D(\lambda).
\]
If \(\lambda_\star>0\), then \(G_{(N)}(\lambda_\star)=0\). The bidding rule
\[
\pi_\star(v):=\big(1+\tfrac1{\lambda_\star}\big)v
\]
is feasible and attains value \(V_{(N)}(\lambda_\star)\). Also
\[
\mathcal D(\lambda_\star) = V_{(N)}(\lambda_\star)+\lambda_\star G_{(N)}(\lambda_\star) = V_{(N)}(\lambda_\star).
\]
Therefore
\[
V_{(N)}(\lambda_\star) \le \mathrm{OPT}_{\mathrm{single}} \le \inf_{\lambda>0}\mathcal D(\lambda) \le \mathcal D(\lambda_\star) = V_{(N)}(\lambda_\star),
\]
so equality holds throughout.

If \(\lambda_\star=0\), then \(G_{(N)}(\lambda)\ge0\) for every \(\lambda>0\), so \(G_{(N)}(0^+):=\lim_{\lambda\downarrow0}G_{(N)}(\lambda)\ge0\). The always-win rule \(\pi_0(v)\equiv 2B\) is feasible and attains value \(\mathbb E[v_{(N)}]=V_{(N)}(0^+)\). Hence
\[
V_{(N)}(0^+)
\le
\mathrm{OPT}_{\mathrm{single}}
\le
\inf_{\lambda>0}\mathcal D(\lambda).
\]
On the other hand, \(|G_{(N)}(\lambda)|\le B\), and therefore
\[
\inf_{\lambda>0}\mathcal D(\lambda) \le \lim_{\lambda\downarrow0} \left(V_{(N)}(\lambda)+\lambda G_{(N)}(\lambda)\right) = V_{(N)}(0^+).
\]
Thus \(\inf_{\lambda>0}\mathcal D(\lambda)=V_{(N)}(0^+)\), which is \(V_{(N)}(\lambda_\star)\) by definition.
\end{proof}

\begin{lemma}[Independent upper bound]
\label{lem:independent-upper}
Under independent bidding,
\[
\limsup_{T\to\infty}\mathbb E\big[ V_T^{I,A} \big] \le V_{(N)}(\lambda_\star).
\]
\end{lemma}

\begin{proof}
By Lemma~\ref{lem:virtual-clipped-domination}, \(V_t^{I,A}\le\bar V_t\) almost surely, so it suffices to upper bound the time-average value of the virtual
bidder.

Fix \(\lambda>0\), and let
\(\mathcal G_t:=\sigma(H_{t-1},v_{1,t},\ldots,v_{N,t})\) be the information after coalition values are realized but before the outside bid is drawn. Conditioning on \(\mathcal G_t\), both \(v_{(N),t}\) and \(\bar b_t\) are fixed, while \(d_t^O\sim D\) is fresh and independent. Hence
\[
\mathbb E\big[\bar V_t+\lambda\bar g_t\mid\mathcal G_t\big] = \Phi_\lambda(v_{(N),t},\bar b_t) \le \Psi_\lambda(v_{(N),t}).
\]
Taking expectations and using \(v_{(N),t}\sim F_{(N)}\),
\[
\mathbb E\big[ \bar V_t+\lambda\bar g_t \big]\le \mathcal D(\lambda).
\]
Summing over \(t=1,\ldots,T\) and dividing by \(T\),
\[
\mathbb E\left[\frac1T\sum \nolimits_{t=1}^T\bar V_t\right] \le \mathcal D(\lambda) - \lambda\, \mathbb E\left[\frac1T\sum\nolimits_{t=1}^T\bar g_t\right].
\]
By Lemma~\ref{lem:virtual-ros},
\[
\sum \nolimits_{t=1}^T\bar g_t\ge -NR_T \qquad\text{pathwise},
\]
with \(R_T=o(T)\). Therefore
\[
\mathbb E\left[\frac1T\sum \nolimits_{t=1}^T\bar V_t\right] \le \mathcal D(\lambda)+\lambda\frac{NR_T}{T}.
\]
Taking \(\limsup_{T\to\infty}\) gives
\[
\limsup_{T\to\infty} \mathbb E\left[\frac1T\sum\nolimits_{t=1}^T\bar V_t\right] \le \mathcal D(\lambda)
\qquad \forall\lambda>0.
\]
Taking the infimum over \(\lambda>0\) and using
Lemma~\ref{lem:dual-envelope},
\[
\limsup_{T\to\infty} \mathbb E\left[\frac1T\sum \nolimits_{t=1}^T\bar V_t\right] \le \inf_{\lambda>0}\mathcal D(\lambda) = V_{(N)}(\lambda_\star).
\]
Since \(V_t^{I,A}\le\bar V_t\) almost surely for every \(t\), the same upper
bound holds for \(V_T^{I,A}\).
\end{proof}

\subsection*{Step 4: Conclusion}

Combining Lemma~\ref{lem:coord-value-limit} and
Lemma~\ref{lem:independent-upper}, we obtain
\[
\liminf_{T\to\infty}
\mathbb E\left[
V_T^{C,A}-V_T^{I,A}
\right]
\ge
V_{(N)}(\lambda_\star)-V_{(N)}(\lambda_\star)
=
0.
\]
This proves Theorem \ref{thm:value-wo-ass}.

\subsection{Proof of Theorem~\ref{thm:value}}

\label{app:proof-of-value}

\paragraph{\bf Reduction to a single-bidder scenario.}
To prove Theorem \ref{thm:value}, we first draw a connection between the $N$-bidder coordination scenario and the scenario where a \emph{single} bidder, whose value equals the highest value among the $N$ bidders, competes against the outside bid.

Formally, let $v_{(N)} = \max\{v_1, \ldots, v_N\}$ denote the highest value among $N$ i.i.d.~samples from $F$.  Let $F_{(N)}$ be the distribution of $v_{(N)}$. 
Let $G_{(N)}(\lambda)$ be the expected utility of a single bidder with value $v_{(N)} \sim F_{(N)}$ and bidding using multiplier $\lambda > 0$, competing against the outside bid $d^O \sim D$:  
\begin{align*}
    G_{(N)}(\lambda)=\E_{v_{(N)}\sim F_{(N)}, d^O\sim D}\Big[(v_{(N)} - d^O) \cdot \1\big[ (1+\tfrac{1}{\lambda}) v_{(N)} > d^O\big]\Big]. 
\end{align*}

Let $G_i^C(\lambda_i)$ be the expected utility of bidder $i$ with value $v_i\sim F$ and multiplier $\lambda_i > 0$ in the $N$-bidder coordinated scenario: 
\begin{align*}
    G_i^C(\lambda_i) = \E_{v_1, \ldots, v_N \sim F, d^O\sim D}\big[ (v_i - d^O) x_i(\bm v, \lambda_i) \big]
\end{align*}
where $x_i(\bm v, \lambda_i) = 1$ if $v_i = \max_{j \in [N]} \{v_j\}$ and $(1+1/\lambda_i) v_i > d^O$, and $0$ otherwise.  

We show that the $N$-bidder coordinated scenario is ``equivalent'' to the single-bidder scenario in the following sense:
\begin{observation}
\label{observation:utility-1/N}
    $G_i^C(\lambda_i) = \frac{1}{N} G_{(N)}(\lambda_i)$. 
\end{observation}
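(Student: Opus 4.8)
The plan is to unfold the expectation defining $G_i^C(\lambda_i)$ directly, conditioning on the realized value $v_i$ and exploiting two facts: the independence of the other bidders' values $\{v_j\}_{j\neq i}$ and of the outside bid $d^O$ from $v_i$, and the symmetry of the i.i.d.\ draws. The crucial algebraic identity is that the joint ``density'' of the event \emph{``$v_i=v$ and $v_i$ is the largest value''} equals $f(v)\,F(v)^{N-1}$, which is exactly $\tfrac{1}{N}$ times the density $f_{(N)}(v)=N f(v)F(v)^{N-1}$ of the maximum $v_{(N)}\sim F_{(N)}$.

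First I would fix $\lambda_i$ and write the allocation as $x_i(\bm v,\lambda_i)=\1[v_i=\max_{j} v_j]\cdot\1[(1+\tfrac{1}{\lambda_i})v_i>d^O]$, separating the intra-coalition competition factor $\1[v_i=\max_j v_j]$ from the factor that depends only on $v_i$ and $d^O$. Conditioning on $v_i=v$, the event $\{v_i=\max_j v_j\}$ becomes $\{v_j\le v \text{ for all } j\neq i\}$, which by independence has probability $F(v)^{N-1}$; since $F$ is continuous, ties occur with probability zero and no boundary corrections are needed. Because $d^O$ is drawn independently of $\bm v$, the conditional expectation of $(v_i-d^O)\,\1[(1+\tfrac{1}{\lambda_i})v_i>d^O]$ given $v_i=v$ is precisely the single-bidder integrand $\E_{d^O\sim D}\big[(v-d^O)\,\1[(1+\tfrac{1}{\lambda_i})v>d^O]\big]$.

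Combining these, I would express
\[
G_i^C(\lambda_i)=\int_0^B f(v)\,F(v)^{N-1}\,\E_{d^O\sim D}\big[(v-d^O)\,\1[(1+\tfrac{1}{\lambda_i})v>d^O]\big]\,\dd v,
\]
and then substitute $f(v)F(v)^{N-1}=\tfrac{1}{N}f_{(N)}(v)$. Recognizing the remaining integral against $f_{(N)}$ as the definition of $G(\lambda_i)$ immediately yields $G_i^C(\lambda_i)=\tfrac{1}{N}G(\lambda_i)$, as claimed.

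I do not expect a genuine obstacle here, as the statement is a routine order-statistics computation; the only point requiring care is that the factor $\tfrac{1}{N}$ be traced precisely to the relation $f_{(N)}(v)=N f(v)F(v)^{N-1}$ between the density of one coordinate being the maximum and the density of the maximum itself. Equivalently, one may phrase the argument probabilistically: by symmetry each bidder is the unique coalition maximizer with probability $\tfrac{1}{N}$, and the conditional law of $v_i$ given that it is the maximizer coincides with $F_{(N)}$ (its conditional density is $\tfrac{f(v)F(v)^{N-1}}{1/N}=f_{(N)}(v)$); this reproduces the same factorization and the same constant $\tfrac{1}{N}$.
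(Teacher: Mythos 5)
Your proposal is correct and takes essentially the same approach as the paper: the paper's proof is exactly the probabilistic phrasing you give at the end (bidder $i$ is the maximizer with probability $\tfrac{1}{N}$ by symmetry, and conditional on that event $v_i$ has law $F_{(N)}$), while your explicit density computation $f(v)F(v)^{N-1}=\tfrac{1}{N}f_{(N)}(v)$ is just a more detailed rendering of the same conditioning step.
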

\begin{proof}
In the coordinated scenario, bidder $i$ obtains utility $0$ if its value is not the highest among the $N$ bidders. The probability that bidder $i$ has the highest value is $1/N$ because the $N$ bidders' values are i.i.d. So, 
\begin{align*}
    G_i^C(\lambda_i) & = 0 + \tfrac{1}{N}\E_{v_i \sim F| v_i=v_{(N)}}\big[ (v_i - d^O)\1[(1+1/\lambda_i) v_i > d^O] \big] \\
    & = \tfrac{1}{N}\E_{v_{(N)} \sim F_{(N)}}\big[ (v_{(N)} - d^O)\1[(1+1/\lambda_i) v_{(N)} > d^O] \big] \\
    & = \tfrac{1}{N} G_{(N)}(\lambda_i). \qedhere 
\end{align*}
\end{proof}

Since the $N$-bidder coordinated scenario is equivalent to a single-bidder scenario, we can now analyze the single-bidder scenario. 

In addition to Lemma~\ref{lemma:mono}, we present the following useful lemma regarding the expected utility function $G_{(N)}(\lambda)$ of the single bidder, under Assumption~\ref{ass:highest-value-assumption}.

\begin{lemma}
\label{lem:utility-at-least-Delta}
Under Assumption \ref{ass:highest-value-assumption}, 
$G_{(N)}(\lambda) \ge \Delta > 0$. 
\end{lemma}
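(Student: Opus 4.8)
The plan is to establish the pointwise bound $G(\lambda)\ge \E[v_{(N)}-d^O]$ for \emph{every} $\lambda>0$, and then to show $\E[v_{(N)}-d^O]\ge\Delta$; since Assumption~\ref{ass:highest-value-assumption} gives $\Delta\ge 0$ (with $\Delta>0$ in the nondegenerate case relevant to the paper's examples), this chains into $G(\lambda)\ge\Delta>0$. Fix $\lambda>0$ and abbreviate $c:=1+\tfrac1\lambda>1$, so the single bidder bids $c\,v_{(N)}$ and wins exactly when $c\,v_{(N)}>d^O$.

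For the first step I would split the win event according to the sign of $v_{(N)}-d^O$. On $\{d^O<v_{(N)}\}$ the win condition $c\,v_{(N)}>v_{(N)}>d^O$ holds automatically, contributing the nonnegative surplus $(v_{(N)}-d^O)_+$. The only remaining wins occur on the ``overbidding loss'' event $\{v_{(N)}\le d^O<c\,v_{(N)}\}$, each contributing $-(d^O-v_{(N)})$; since this event is contained in $\{v_{(N)}\le d^O\}$, its total contribution is bounded below by $-\E[(d^O-v_{(N)})_+]$. Collecting terms,
\[
G(\lambda)=\E\big[(v_{(N)}-d^O)_+\big]-\E\big[(d^O-v_{(N)})\,\1\{v_{(N)}\le d^O<c\,v_{(N)}\}\big]\ \ge\ \E\big[(v_{(N)}-d^O)_+\big]-\E\big[(d^O-v_{(N)})_+\big]=\E[v_{(N)}-d^O].
\]
Equivalently, one may cite the monotonicity from Lemma~\ref{lemma:mono} and pass to the limit $\lambda\downarrow 0$, where the win indicator increases to $1$ and $G(\lambda)$ decreases to $\E[v_{(N)}-d^O]$.

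For the second step I would use the pointwise domination $v_{(N)}\ge v_{(N-1)}$, which gives $(v_{(N)}-d^O)_+\ge (v_{(N-1)}-d^O)_+$ and hence $\E[(v_{(N)}-d^O)_+]\ge \E[(v_{(N-1)}-d^O)_+]$. Therefore
\[
\E[v_{(N)}-d^O]=\E\big[(v_{(N)}-d^O)_+\big]-\E\big[(d^O-v_{(N)})_+\big]\ \ge\ \E\big[(v_{(N-1)}-d^O)_+\big]-\E\big[(d^O-v_{(N)})_+\big]=\Delta,
\]
which is precisely the quantity in Assumption~\ref{ass:highest-value-assumption}. Combining the two steps yields $G(\lambda)\ge\Delta$ for all $\lambda>0$, and $\Delta\ge 0$ by the assumption.

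The only real subtlety is the first step: one must correctly account for the surplus sacrificed by overbidding, i.e.\ verify that the wins made while losing money (the event $\{v_{(N)}\le d^O<c\,v_{(N)}\}$) contribute a negative amount dominated by $-\E[(d^O-v_{(N)})_+]$, which is what makes the lower bound uniform in $\lambda$. The strict positivity $\Delta>0$ is a minor point and comes from the strict form of the assumption; I would also remark that under full support the event $\{d^O\ge c\,v_{(N)}\}$ has positive probability, so in fact $G(\lambda)>\E[v_{(N)}-d^O]$ strictly, giving some slack.
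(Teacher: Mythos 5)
Your proof is correct and follows essentially the same route as the paper: lower-bound $G(\lambda)$ by $\E[v_{(N)}-d^O]$, then use the pointwise domination $v_{(N)}\ge v_{(N-1)}$ to reach $\Delta$. The only cosmetic difference is that you obtain the first bound by a direct decomposition of the win event (correctly isolating the overbidding-loss event $\{v_{(N)}\le d^O<c\,v_{(N)}\}$), whereas the paper invokes the monotonicity of $G$ from Lemma~\ref{lemma:mono} and passes to the limit $\lambda\downarrow 0$ --- an alternative you yourself note is equivalent.
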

\begin{proof}
By Lemma \ref{lemma:mono}, $G_{(N)}(\lambda)$ is increasing in $\lambda$. So, 
\begin{align*}
    G_{(N)}(\lambda) & \ge \lim_{\mu \to 0^+} G_{(N)}(\mu) \\
    & = \E_{v_{(N)} \sim F_{(N)}, d^O \sim D}\big[ v_{(N)} - d^O \big] \\
    & = \E\big[ (v_{(N)} - d^O)_+ \big] - \E\big[ (d^O - v_{(N)})_+ \big] \\
    & \ge \E\big[ (v_{(N - 1)} - d^O)_+ \big] - \E\big[ (d^O - v_{(N)})_+ \big] \\
    & \ge \Delta > 0
\end{align*} 
where the last ``$\ge$'' follows from Assumption \ref{ass:highest-value-assumption}. 
\end{proof}

Intuitively, Lemma \ref{lem:utility-at-least-Delta} shows that the expected utility obtained by the highest-value coalition bidder is always positive (by a margin of $\Delta > 0$). This means that the bidder's multiplier $\lambda_t$, following mirror descent updates, will decrease in expectation every round and eventually converge to $0$.  We formalize this intuition below. 

\paragraph{\bf Coordinated multipliers converge to $0$.}
Using the above characterizations, we show an interesting fact: the multipliers $\lambda_{i, t}^C$ of auto-bidders running Algorithm \ref{alg:general-mirror-descent} in the coordinated scenario \emph{converge to $0$ as $t\to\infty$}.
This means that the coalition will eventually submit arbitrarily high bids $b_{i, t}^C = (1+\frac{1}{\lambda_{i, t}^C}) v_{i, t} \to \infty$.

\begin{lemma}\label{lemma:mul-conv}
Under Assumption \ref{ass:highest-value-assumption}, for any $i \in [N], t \in[T]$, with probability at least $1 - \exp\big(-\frac{\Delta^2 t}{32 B^2 N^2}\big)$, $h'(\lambda_{i, t+1}^C) \le h'(\lambda_{i, 1})-\frac{\Delta t}{2N \sqrt T}$. 
\end{lemma}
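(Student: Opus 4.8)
The plan is to reduce the claim to a one-sided martingale concentration bound on the cumulative realized utility. First I would pass to the derivative coordinates $y_{i,t}=h'(\lambda_{i,t})$, where the mirror-descent update \eqref{eq:yupdate} reads $y_{i,t+1}=y_{i,t}-\alpha g_{i,t}$. Telescoping from round $1$ to round $t$ gives
\[
h'(\lambda_{i,t+1}^C)=h'(\lambda_{i,1})-\alpha\sum_{s=1}^t g_{i,s}.
\]
Since $\alpha=1/\sqrt{T}$, the target inequality $h'(\lambda_{i,t+1}^C)\le h'(\lambda_{i,1})-\tfrac{\Delta t}{2N\sqrt T}$ is exactly equivalent to the event $\big\{\sum_{s=1}^t g_{i,s}\ge \tfrac{\Delta t}{2N}\big\}$. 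Thus it suffices to show that this cumulative-utility event holds with probability at least $1-\exp\big(-\tfrac{\Delta^2 t}{32B^2N^2}\big)$.

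The key ingredient is a uniform, state-independent lower bound on the conditional drift. By \eqref{eq:coord-drift} (equivalently Observation~\ref{observation:utility-1/N}), under coordination $\E[g_{i,s}\mid H_{s-1}]=\tfrac{1}{N}G(\lambda_{i,s})$, and by Lemma~\ref{lem:utility-at-least-Delta} Assumption~\ref{ass:highest-value-assumption} forces $G(\lambda)\ge\Delta$ for \emph{every} $\lambda>0$. Hence $\E[g_{i,s}\mid H_{s-1}]\ge \Delta/N$ regardless of the current multiplier, which is precisely the feature that lets me avoid tracking the state. I would then split each term as $g_{i,s}=\E[g_{i,s}\mid H_{s-1}]+M_s$, where $M_s:=g_{i,s}-\E[g_{i,s}\mid H_{s-1}]$ is a martingale-difference sequence adapted to the history filtration. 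This yields $\sum_{s=1}^t g_{i,s}\ge \tfrac{\Delta t}{N}+\sum_{s=1}^t M_s$, so the desired event is implied by $\big\{\sum_{s=1}^t M_s\ge -\tfrac{\Delta t}{2N}\big\}$.

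Finally I would apply Azuma--Hoeffding to $\sum_s M_s$. Each realized utility satisfies $g_{i,s}\in[-B,B]$ (a win of the active bidder yields $v_{(N),s}-d_s^O\in[-B,B]$ and a loss yields $0$), so $|M_s|\le 2B$, and Azuma with increments bounded by $2B$ over $t$ steps gives
\[
\Pr\Big[\sum_{s=1}^t M_s\le -\tfrac{\Delta t}{2N}\Big]\le \exp\Big(-\frac{(\Delta t/2N)^2}{2t(2B)^2}\Big)=\exp\Big(-\frac{\Delta^2 t}{32B^2N^2}\Big),
\]
which is exactly the stated failure probability; taking complements and chaining back through the equivalences proves the lemma. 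Because every step other than the concentration bound is a deterministic rearrangement, there is no serious obstacle here. The only subtle point requiring care is that the drift bound $G(\lambda)\ge\Delta$ is \emph{uniform} in $\lambda$, so the drift never weakens even as $\lambda_{i,s}$ drifts toward $0$; this uniformity is what makes a single application of Azuma suffice and produces the constant $32$ matching the statement.
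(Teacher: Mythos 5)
Your proposal is correct and follows essentially the same route as the paper's proof: telescope the update $y_{i,t+1}=y_{i,t}-\alpha g_{i,t}$ in the derivative coordinates, lower-bound the conditional drift by $\Delta/N$ via Observation~\ref{observation:utility-1/N} and Lemma~\ref{lem:utility-at-least-Delta}, and apply Azuma's inequality with deviation $\tfrac{t\Delta}{2N}$ and increments bounded by $2B$, which yields exactly the exponent $\tfrac{\Delta^2 t}{32B^2N^2}$. The only cosmetic difference is that you explicitly split off the martingale-difference part $M_s$ before invoking Azuma, whereas the paper applies the inequality directly to $\sum_s g_{i,s}$; the content is identical.
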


\begin{proof}
Let $y_{i,t}:= h'(\lambda_{i,t}^C)$. According to Algorithm \ref{alg:general-mirror-descent} and Equation (\ref{eq:md-y-update}),
\begin{equation}
y_{i,t+1}=y_{i,t}-\alpha g_{i,t}
\end{equation}
where each $g_{i, t}$ is an unbiased estimator of bidder $i$'s expected utility $G_i^C(\lambda_{i, t}^C) = \E[ g_{i, t}]$ at round $t$. 
If bidder $i$ is not the highest-value bidder in round $t$, then $g_{i, t} = 0$, meaning that the bidder's multiplier $\lambda_{i, t}^C$ is not updated in that round. 
According to Observation \ref{observation:utility-1/N} and Lemma \ref{lem:utility-at-least-Delta}, bidder $i$'s expected utility is at least 
\[
    \E[ g_{i, t'}] = G_i^C(\lambda_{i, t'}^C) = \tfrac{1}{N} G_{(N)}(\lambda_{i, t'}) \ge \tfrac{\Delta}{N}. 
\]
Because values and outside bids are bounded by $B$, the realized utility $g_{i, t'} \in [-B, B]$. So, by Azuma's inequality:
\begin{align*}
\Pr\left[\sum \nolimits_{t'=1}^t g_{i, t'}<t \cdot \tfrac{\Delta}{N} - \eps \right]
\le \exp\Big(-\tfrac{\eps^2}{8B^2t}\Big). 
\end{align*}
Let $\eps = \frac{t\Delta}{2N}$. We have with probability at least $1 - \exp( - \frac{t \Delta^2}{32 B^2 N^2})$,
\begin{align*}
    y_{i, t+1} \le y_{i, 1} -\alpha \left( t\cdot \tfrac{\Delta}{N}-\eps\right)  = y_{i, 1}  -\tfrac{t\Delta}{2 N \sqrt T}.
\end{align*}
where we plugged in $\alpha =1 / \sqrt{T}$. 
\end{proof}

\paragraph{\bf Coordinated total value is larger than independent total value.}
Using the fact that coordinated multipliers converge to $0$, we show that the total value obtained by coordinated auto-bidders is larger than that under independent bidding in the limit $T\to\infty$.

Let $V_{(N)}(\lambda)$ be the expected value obtained by a single bidder whose value is the highest value $v_{(N)}$ among $N$ samples from $F$ and who uses multiplier $\lambda > 0$ to bid against outside bid $d^O \sim D$: 
\begin{equation}
V_{(N)}(\lambda) ~ := ~ \mathbb{E}_{v_{(N)} \sim F_{(N)}, d^O\sim D}\Bigl[v_{(N)}\1\left[ (1+ \tfrac{1}{\lambda} )v_{(N)} > d^O\right]\Bigr]. \label{eq:Vdef}
\end{equation}
Because $V_{(N)}(\lambda)$ is monotone in $\lambda$ and bounded, the limit
\[
    V_{(N)}(0^+) ~ := ~ \lim_{\lambda \to 0^+} V_{(N)}(\lambda)
\]
exists. Note that the expected total value of the $N$ coordinated bidders with multipliers $\lambda_1^C, \ldots, \lambda_N^C$ is
\begin{align*}
    V^C(\lambda_1^C, \ldots, \lambda_N^C) ~ = ~ \E\Big[ v_{i^*} \1\big[ (1+ \tfrac{1}{\lambda_{i^*}}) v_{i^*} > d^O \big]\Big] ~ = ~ \E\big[ V_{(N)}(\lambda_{i^*}^C )\big]
\end{align*}
where $i^* = \argmax_{i\in[N]} v_i$. This means that the expected average total value of the coordinated bidders during $T$ rounds is
\begin{align}
    \E \Big[ \frac{1}{T} \sum_{t=1}^T V_t^{C,A} \Big] ~ = ~ \E \Big[ \frac{1}{T} \sum_{t=1}^T V_{(N)}(\lambda_{i^*_t, t}^C) \Big]  
\end{align}
where $i^*_t\in[N]$ is the highest-value bidder in round $t$.

\begin{lemma}
\label{lemma:lim_ge_ind}
For any coordination mechanism $\mathcal G$ and any round $t$,
\[
\E\!\left[V_t^{\mathcal G}\right] ~ \le ~ V_{(N)}(0^+) ~ = ~ \E_{v_{(N)}\sim F_{(N)}}\big[v_{(N)}\big].
\]
Consequently, for every $T\ge 1$,
\[
\E\Big[\frac{1}{T}\sum_{t=1}^T V_t^{\mathcal G}\Big] ~ \le ~ V_{(N)}(0^+).
\]
\end{lemma}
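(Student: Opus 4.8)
The plan is to establish the pointwise bound $V_t^{\mathcal G} \le v_{(N),t}$ for \emph{every} realization of the values and outside bid, and then take expectations. The key structural observation is that each round sells a \emph{single} item through a second-price auction, so at most one participant can win it. Consequently, among the $N$ coalition members at most one receives the item in round $t$, which means the coalition's total value $V_t^{\mathcal G} = \sum_{i=1}^N v_{i,t}\, x_{i,t}^{\mathcal G}$ collapses to the value of that single winner (or to $0$ if the outside bidder wins). Since the winner's value can be no larger than the largest value in the coalition, we obtain $0 \le V_t^{\mathcal G} \le v_{(N),t}$, and crucially this holds no matter how $\mathcal G$ selects its bids and no matter how it conditions on history. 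In effect, the best any mechanism can do in a round is to award the item to its highest-value member, yielding value $v_{(N),t}$.

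First I would make the allocation argument precise: in a single-item auction the winning indicators satisfy $\sum_{i\in[N]} x_{i,t}^{\mathcal G} \le 1$, so whenever some coalition bidder $i^*$ wins we have $V_t^{\mathcal G} = v_{i^*,t} \le \max_i v_{i,t} = v_{(N),t}$, and whenever no coalition bidder wins we have $V_t^{\mathcal G} = 0 \le v_{(N),t}$. Taking expectations over the i.i.d.\ draws then gives $\E[V_t^{\mathcal G}] \le \E[v_{(N),t}] = \E_{v_{(N)}\sim F_{(N)}}[v_{(N)}]$.

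Next I would identify $V_{(N)}(0^+)$ with $\E_{v_{(N)}\sim F_{(N)}}[v_{(N)}]$, starting from the definition in Eq.~(\ref{eq:Vdef}). As $\lambda \downarrow 0$ we have $1 + 1/\lambda \to \infty$; since $F$ is continuous, $v_{(N)} > 0$ almost surely, so $(1+1/\lambda)v_{(N)} \to \infty$ and the indicator $\1[(1+\tfrac{1}{\lambda})v_{(N)} > d^O] \to 1$ almost surely. Because $v_{(N)} \le B$ is bounded, bounded convergence yields $V_{(N)}(0^+) = \E_{v_{(N)}\sim F_{(N)}}[v_{(N)}]$, which matches the right-hand side of the claimed inequality and completes the per-round bound $\E[V_t^{\mathcal G}] \le V_{(N)}(0^+)$.

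Finally, the time-average statement is immediate: averaging the per-round inequality over $t = 1, \ldots, T$ and using linearity of expectation gives $\E[\tfrac{1}{T}\sum_{t=1}^T V_t^{\mathcal G}] \le V_{(N)}(0^+)$. I do not expect a substantive obstacle in this proof; the only point demanding care is the at-most-one-winner argument, which must be valid uniformly over \emph{all} coordination mechanisms, including those that bid arbitrarily high or depend arbitrarily on history. This uniformity is guaranteed purely by the single-item second-price auction format and not by any property of $\mathcal G$, which is exactly why the bound applies to every mechanism at once.
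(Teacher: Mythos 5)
Your proposal is correct and follows essentially the same route as the paper: the pointwise single-item bound $V_t^{\mathcal G}\le v_{(N),t}$ for every realization and every mechanism, followed by identifying $V_{(N)}(0^+)=\E[v_{(N)}]$ via the indicator tending to $1$ as $\lambda\downarrow 0$ and bounded convergence. The only cosmetic difference is that you justify the limit via $v_{(N)}>0$ almost surely while the paper appeals to boundedness of $d^O$; both are valid.
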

\begin{proof}
Fix a round $t$ and write $v_{(N),t}\coloneqq \max_{i\in[N]} v_{i,t}$.
In a single-item auction, the coalition's realized value is either $0$ (if the outside bid wins) or equals the realized value of the winning coalition bidder, which is always at most $v_{(N),t}$. Hence
$V_t^{\mathcal G} \le v_{(N),t}$ almost surely, and taking expectations gives
$\E[V_t^{\mathcal G}] \le \E[v_{(N),t}]$.
Finally, since bids and $d^O$ are bounded, $(1+\tfrac{1}{\lambda})v_{(N)} \to +\infty$ as $\lambda\downarrow 0$, so the single bidder wins with probability $1$ in the limit and thus
$V_{(N)}(0^+) = \E[v_{(N)}]$.
\end{proof}



\begin{proof}[\bf Proof of Theorem \ref{thm:value}]
By Lemma~\ref{lemma:mul-conv}, with constant
$c \coloneqq \frac{\Delta^2}{32B^2N^2}$, for any fixed $i$ and $t\ge 1$, the coordinated multiplier $\lambda_{i, t+1}^C$ satisfies 
\begin{equation}\label{eq:pt}
\Pr\!\left[h'(\lambda_{i,t+1}^C) \le h'(\lambda_{i,1})-\frac{t\Delta}{2N\sqrt{T}}\right]\;\ge\;1-e^{-ct}.
\end{equation}
Fix $\varepsilon>0$. Since $V_{(N)}(0^+)$ exists, there is $\delta=\delta(\varepsilon)>0$ such that
\begin{equation}\label{eq:cont}
\big| V_{(N)}(0^+)-V_{(N)}(\lambda) \big| <\varepsilon \quad \text{for all } \lambda \in (0, \delta].
\end{equation}
Since $h'$ is strictly increasing and $\lim_{\lambda\downarrow 0}h'(\lambda)=-\infty$, we can define $b(\delta):=(h')^{-1}(\delta)$.

Define
\begin{equation*}\label{eq:mT}
m_T \;\coloneqq\; \left\lceil \frac{2N\sqrt{T}}{\Delta}(h'(\lambda_{i,1})-b(\delta))\right\rceil ~ = ~ O\Big(\sqrt{T} (-b(\delta))\Big) ,
\end{equation*}
so for all $t\ge m_T$,
\[
h'(\lambda_{i,1})-\frac{t\Delta}{2N\sqrt{T}}\le
h'(\lambda_{i,1})-\frac{m_T\Delta}{2N\sqrt{T}}\le b(\delta) .
\]
Let $E_{i,t}$ be the event inside the probability in (\ref{eq:pt}). A union bound over $i=1,\dots,N$ and $t\ge m_T$ gives
\begin{equation}\label{eq:hp}
\Pr\!\Big[\bigcap_{i=1}^N\bigcap_{t=m_T}^{T} E_{i,t}\Big]
\; \ge \; 1 - N T e^{-cm_T}
\;\eqqcolon\; 1-p_T .
\end{equation}
On this event, for all $t\ge m_T$ we have $h'(\lambda_{i_t^*,t}^C)\le b(\delta)$, which is equivalent to $\lambda^C_{i^*_t,t}\le\delta$, hence by Eq.~(\ref{eq:cont}),
\[
\big| V_{(N)}(0^+)-V_{(N)}(\lambda^C_{i^*_t,t}) \big| <\varepsilon .
\]
For the first $m_T$ rounds we have $| V_{(N)}(0^+)-V_{(N)}(\cdot) |\le B$ due to bounded value. Therefore, conditioning on the event in (\ref{eq:hp}),
\begin{equation*}\label{eq:hp-gap}
\Big| V_{(N)}(0^+) - \frac{1}{T}\sum_{t=1}^T V_{(N)}(\lambda^C_{i^*_t,t}) \Big|
\;\le\; \frac{m_T}{T}\,B + \frac{T-m_T}{T}\,\varepsilon. 
\end{equation*}
%
%
Taking expectations and using that the per-round gap is $\le B$ on the complement of the event in (\ref{eq:hp}), we have
\begin{align*}
& \E\bigg[\Big| V_{(N)}(0^+) - \frac{1}{T} \sum_{t=1}^T V_{(N)}(\lambda^C_{i^*_t,t}) \Big| \bigg]\\
& \;\le\; \frac{m_T}{T}\,B + \frac{T-m_T}{T}\,\varepsilon + B\,p_T \\
& \; \le\; \frac{O(\sqrt{T} (-b(\delta)))}{T}\,B + \varepsilon + B NT e^{-c O(\sqrt{T} (-b(\delta)))} ~ \le ~ 3\eps
\end{align*}
for sufficiently large $T$.  So we obtain
\[
\lim_{T\to\infty} \E\Big[ \frac{1}{T}\sum_{t=1}^T V_{(N)}(\lambda^C_{i^*_t,t})\Big] \, =\, V_{(N)}(0^+).
\]
Combining with Lemma~\ref{lemma:lim_ge_ind}, we conclude that
\begin{align*}
    V_{(N)}(0^+) \; \ge \; \E\Big[ \frac{1}{T} \sum_{t=1}^T V_t^{\mathcal G}\Big]  , 
\end{align*}
which proves Theorem~\ref{thm:value}. 
\end{proof}

\section{Omitted Proofs in Section \ref{sec:asymmetric}}

\subsection{Proof of Theorem~\ref{thm:asym-violation}}
\label{app:thm-asym-vio}
The argument parallels the proof of Theorem~\ref{thm:utility/RoS-constraint}.

Let $i^*_t = \arg\max_{i \in [N]} v_{i,t}$ denote the index of the highest-value bidder in round~$t$. 
Conditioned on the history $H_{t-1}$, we analyze the expected difference $\Delta_{i,t} = u_{i,t}^{C,A} -u_{i,t}^{\truth}$ for bidder~$i$:
\begin{align*}
& \E_{\bm v_t \sim F,\, d_t^O \sim D}\big[\Delta_{i,t} \mid H_{t-1}\big]\\
&= \E_{\bm v_t,\, d_t^O}\big[\1[i^*_t = i] \cdot \Delta_{i,t} \mid H_{t-1}\big] \\
&= \E\Big[\1[i^*_t = i] \cdot \1[v_{i,t} \ge d_t^O] \cdot (d_t^{\truth} - d_t^O)\Big] + \E\Big[\1[i^*_t = i] \cdot \1[v_{i,t} \le d_t^O \le b_{i,t}^{C,A}] \cdot (v_{i,t} - d_t^O)\Big]\\
&= \E\Big[\1[i^*_t = i] \cdot (d_t^{\truth} - d_t^O)\Big] + \E\Big[\1[i^*_t = i] \cdot \1[v_{i,t} \le d_t^O \le b_{i,t}^{C,A}] \cdot (v_{i,t} - d_t^O)\Big]\\
&\ge \E\Big[\1[i^*_t = i] \, \big(\max_{j \ne i} v_{j,t} - d_t^O \big)_+\Big] + \E\Big[\1[i^*_t = i] \cdot \1[v_{i,t} \le d_t^O]\cdot (v_{i,t} - d_t^O)\Big].
\end{align*}

Summing over all bidders $i \in [N]$, the indicator $\1[i^*_t = i]$ ensures that only the highest-value bidder contributes in each round. 
Hence,
\begin{align*}
\E\Big[ \sum_{i=1}^N \Delta_{i,t} \; \big| \;  H_{t-1} \Big]
&\ge \E\big[(v_{(N-1)} - d_t^O)_+ \big] - \E\big[(d_t^O - v_{(N)})_+\big] \\
&\ge \Delta,
\end{align*}
where the second inequality follows from Assumption~\ref{ass:highest-value-assumption}.
Summing over $t \in [T]$ gives $\E[\sum_{i=1}^N U_{i, T}^{C, A} - \sum_{i=1}^N U_{i, T}^{I, A}] \ge  \Delta T$. 

\subsection{Proof of Theorem~\ref{thm:asym-value}}
\label{app:thm-asym-value}

The argument parallels the proof of Theorem~\ref{thm:value}. 

Let $G_i^C(\lambda_i)$ denote bidder $i$’s expected utility when $v_i\sim F_i$ and bidder $i$ uses multiplier $\lambda_i>0$ in the coordinated scenario:
\begin{align*}
    G_i^C(\lambda_i)
    \;=\;
    \E_{v_1\sim F_1,\dots,v_N\sim F_N,\; d^O\sim D}
    \big[(v_i-d^O)\,x_i(\bm v,\lambda_i)\big],
\end{align*}
where $x_i(\bm v,\lambda_i)=1$ if $v_i=\max_{j\in[N]}\{v_j\}$ and $(1+\tfrac{1}{\lambda_i})v_i>d^O$, and $0$ otherwise.

By reusing the proof of Lemma~\ref{lemma:mono} with $D$ replaced by the distribution of $\max\{d^O,\max_{j\neq i} v_j\}$, we claim that $G_i^C(\lambda_i)$ is \emph{increasing} in $\lambda_i$. Hence,
\begin{align*}
    G_i^C(\lambda_i) ~ \ge ~
    \lim_{\mu\to 0^+} G_i^C(\mu) ~ = ~ E\Big[(v_i-d^O)\,\1\big\{v_i=\max_{j\in[N]}\{v_j\}\big\}\Big] ~ =: ~ \Delta_i,
\end{align*}
where the limit exists because $G_i^C(\cdot)$ is bounded and monotone.

\medskip
We now present the parallel version of Lemma~\ref{lemma:mul-conv}.

\begin{lemma}\label{lemma:asym-mul-conv}
Under Assumption~\ref{ass:non-iid-value}, for any $i\in[N]$ and $t\ge 1$, with probability at least $1-\exp\!\big(-\tfrac{\Delta_i^2 t}{32B^2}\big)$,
\[ h'(\lambda_{i, t+1}^C) \le h'(\lambda_{i, 1})-\frac{\Delta_i t}{2 \sqrt T}. \]
\end{lemma}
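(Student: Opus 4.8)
The plan is to transcribe the proof of Lemma~\ref{lemma:mul-conv} almost verbatim, changing only the per-round drift constant from $\Delta/N$ to $\Delta_i$. As there, I would work in the derivative coordinate $y_{i,t} := h'(\lambda_{i,t}^C)$, in which the mirror-descent update of Algorithm~\ref{alg:general-mirror-descent} becomes the additive recursion $y_{i,t+1} = y_{i,t} - \alpha g_{i,t}$ of equation~(\ref{eq:yupdate}), so that $y_{i,t+1} = y_{i,1} - \alpha \sum_{t'=1}^t g_{i,t'}$. Recall that $g_{i,t'} = 0$ in every round where bidder $i$ is not the selected (highest-value) coalition member, so the multiplier is frozen there, and that conditionally on the history $\E[g_{i,t'}\mid H_{t'-1}] = G_i^C(\lambda_{i,t'}^C)$.

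The first substantive ingredient is a uniform lower bound on the conditional drift. Just above the lemma it is established, by re-running the monotonicity argument of Lemma~\ref{lemma:mono} with $D$ replaced by the distribution of $\max\{d^O,\max_{j\neq i}v_j\}$, that $G_i^C$ is increasing with $\lim_{\mu\downarrow 0}G_i^C(\mu)=\Delta_i$. Under Assumption~\ref{ass:non-iid-value} this yields $\E[g_{i,t'}\mid H_{t'-1}] = G_i^C(\lambda_{i,t'}^C) \ge \Delta_i > 0$ for every round $t'$, independently of the current multiplier, since the infimum of an increasing function over $(0,\infty)$ is its limit at $0^+$. This is the only place the assumption is used, and it is precisely what removes the $1/N$ loss present in the i.i.d.\ analysis: there $G_i^C = \tfrac1N G$, whereas here $\Delta_i$ already incorporates the probability that $i$ is the selected bidder.

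The concentration step then applies Azuma's inequality to the martingale $\sum_{t'=1}^t \big(g_{i,t'} - \E[g_{i,t'}\mid H_{t'-1}]\big)$, whose increments have magnitude at most $2B$ because $g_{i,t'}\in[-B,B]$. This gives $\Pr\big[\sum_{t'=1}^t g_{i,t'} < t\Delta_i - \eps\big] \le \exp(-\eps^2/(8B^2 t))$. Choosing $\eps = t\Delta_i/2$ produces, with probability at least $1 - \exp(-t\Delta_i^2/(32B^2))$, the cumulative bound $\sum_{t'=1}^t g_{i,t'} \ge t\Delta_i/2$; substituting this into the recursion and plugging in $\alpha = 1/\sqrt{T}$ gives $y_{i,t+1} \le y_{i,1} - t\Delta_i/(2\sqrt{T})$, which is exactly the asserted inequality $h'(\lambda_{i,t+1}^C) \le h'(\lambda_{i,1}) - \tfrac{\Delta_i t}{2\sqrt{T}}$.

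I do not anticipate a genuine obstacle, since the argument is a direct adaptation of Lemma~\ref{lemma:mul-conv} with updated bookkeeping constants. The one point meriting care is confirming that the drift bound $G_i^C(\lambda)\ge\Delta_i$ holds for \emph{all} $\lambda>0$ rather than only in the limit; this is secured by the monotonicity of $G_i^C$ together with $\Delta_i$ being its infimum at $0^+$. Verifying that monotonicity in turn requires checking that the effective competing-bid distribution $\max\{d^O,\max_{j\neq i}v_j\}$ still admits a density with full support, so that the Myerson-identity derivative computation in Lemma~\ref{lemma:mono} carries over unchanged; this holds because $d^O$ and each $F_j$ are continuous with full support on $[0,B]$.
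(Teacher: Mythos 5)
Your proof is correct and follows the paper's argument exactly: the paper likewise bounds the conditional drift by $G_i^C(\lambda_{i,t'})\ge\Delta_i$ (using the monotonicity of $G_i^C$ established just before the lemma) and then applies Azuma's inequality verbatim from Lemma~\ref{lemma:mul-conv}, with the $1/N$ factor absorbed into $\Delta_i$. Your added care about why the drift bound holds for all $\lambda>0$ and why the monotonicity argument transfers is a sensible elaboration of what the paper leaves implicit.
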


\begin{proof}
The expected utility $\E[g_{i,t'}]$ is lower bounded by
\[
\E[g_{i,t'}] \;=\; G_i^C(\lambda_{i,t'}) \;\ge\; \Delta_i.
\]
Applying Azuma’s inequality (exactly as in the proof of Lemma~\ref{lemma:mul-conv}) completes the argument.
\end{proof}

Write $v_{(N)}\coloneqq \max_{j\in[N]} v_j$ with $v_j\sim F_j$ mutually independent. 
Its CDF is $F^{\max}(x)=\Pr[v_{(N)}\le x]=\prod_{j=1}^N F_j(x)$. 
For $\lambda>0$, define
\begin{align*}
V_{(N)}(\lambda) & ~ \coloneqq ~
\E_{\,v_{(N)}\sim F^{\max},\, d^O\sim D}\Big[\,v_{(N)}\cdot \1\big\{(1+\tfrac{1}{\lambda})v_{(N)} > d^O\big\}\Big],\\
V_{(N)}(0^+) & ~ \coloneqq ~ \lim_{\lambda\to 0^+} V_{(N)}(\lambda) ~ = ~ \E_{\,v_{(N)}\sim F^{\max}} \big[\,v_{(N)} \big].
\end{align*}
The limit $V_{(N)}(0^+) = \lim_{\lambda\to 0^+} V_{(N)}(\lambda)$ exists because $V_{(N)}(\lambda)$ is monotone and bounded. 
In the i.i.d.\ special case $F_1=\cdots=F_N=F$, $F^{\max}$ coincides with the order-statistic distribution $F_{(N)}$, recovering the original definition.

In the non-i.i.d.~case, Lemma \ref{lemma:lim_ge_ind} still holds: 
\begin{lemma}[copy of Lemma \ref{lemma:lim_ge_ind}]
\label{lemma:lim_ge_ind-non-iid}
For any coordination mechanism $\mathcal G$ and any round $t$,
\[
\E\!\left[V_t^{\mathcal G}\right] ~ \le ~ V_{(N)}(0^+) ~ = ~ \E_{v_{(N)}\sim F^{\max}}\big[v_{(N)}\big].
\]
Consequently, for every $T\ge 1$,
\[
\E\Big[\frac{1}{T}\sum_{t=1}^T V_t^{\mathcal G}\Big] ~ \le ~ V_{(N)}(0^+).
\]
\end{lemma}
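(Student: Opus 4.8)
The plan is to reuse the proof of Lemma~\ref{lemma:lim_ge_ind} essentially verbatim, observing that its only input is a pointwise, structural bound on the coalition's realized value in a single-item auction, and that this bound never invokes the values being identically distributed. First I would fix a round $t$ and write $v_{(N),t} := \max_{j\in[N]} v_{j,t}$, where now $v_{j,t}\sim F_j$ are independent but non-identically distributed, so that $v_{(N),t}\sim F^{\max}$ with $F^{\max}(x)=\prod_{j=1}^N F_j(x)$. The key observation is that in a single-item auction at most one coalition bidder can win: if the outside bid wins, the coalition's realized value is $0$, and if some coalition bidder wins, its realized value is at most $v_{(N),t}$. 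Hence, for \emph{any} coordination mechanism $\mathcal G$ and any realization of $(\bm v_t, d_t^O)$, the pointwise bound $V_t^{\mathcal G}\le v_{(N),t}$ holds almost surely. Taking expectations gives $\E[V_t^{\mathcal G}]\le \E[v_{(N),t}] = \E_{v_{(N)}\sim F^{\max}}[v_{(N)}]$.

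Next I would identify this upper bound with $V_{(N)}(0^+)$. From the definition of $V_{(N)}(\lambda)$ in the non-i.i.d.~setting and the boundedness of $d^O$ by $B$, as $\lambda\downarrow 0$ the effective bid $(1+\tfrac{1}{\lambda})v_{(N)}\to+\infty$ on the event $\{v_{(N)}>0\}$ (which has probability $1$ by full support and continuity of the $F_j$), so the single bidder wins with probability $1$ in the limit; by bounded convergence $V_{(N)}(0^+)=\E_{v_{(N)}\sim F^{\max}}[v_{(N)}]$, which is exactly the displayed equality in the statement. Combining with the previous paragraph yields the per-round bound $\E[V_t^{\mathcal G}]\le V_{(N)}(0^+)$. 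The time-average consequence then follows immediately by averaging this inequality over $t=1,\dots,T$ and using linearity of expectation.

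I expect no substantive obstacle here, since the argument is structural and distribution-free. The only point requiring care is to confirm that no step of the i.i.d.~proof silently used identical distributions: it does not, because the bound $V_t^{\mathcal G}\le v_{(N),t}$ is a pure single-item feasibility fact (at most one winner, whose value is dominated by the coalition maximum), and the computation of $V_{(N)}(0^+)$ relies only on boundedness of the outside bid together with monotonicity of $V_{(N)}(\cdot)$, both of which carry over unchanged. Thus the lemma holds with the order-statistic distribution $F_{(N)}$ simply replaced by $F^{\max}$.
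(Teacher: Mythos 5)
Your argument is correct and is essentially identical to the paper's proof of Lemma~\ref{lemma:lim_ge_ind}, which the paper simply transplants to the non-i.i.d.\ setting without modification: the pointwise bound $V_t^{\mathcal G}\le v_{(N),t}$ from single-item feasibility, followed by the identification $V_{(N)}(0^+)=\E[v_{(N)}]$ via the bid blowing up as $\lambda\downarrow 0$. Your explicit check that no step relies on identical distributions is exactly the (implicit) content of the paper's ``copy'' remark.
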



\begin{proof}[\bf Proof of Theorem~\ref{thm:asym-value}]
Fix $\varepsilon>0$ and choose $\delta=\delta(\varepsilon)>0$ as in Equation (\ref{eq:cont}). 
Let
\[
\Delta_{\min}\coloneqq \min_{i\in[N]}\Delta_i > 0 \qquad 
c_{\min}\coloneqq \frac{\Delta_{\min}^2}{32B^2} > 0. 
\]
Define $b(\delta):=(h')^{-1}(\delta)$ and
\[
m_T \;\coloneqq\; \Big\lceil \frac{2\sqrt{T}}{\Delta_{\min}}(h'(\lambda_{i,1})-b(\delta))\Big\rceil .
\]
By Lemma~\ref{lemma:asym-mul-conv}, for each $i$ and every $t\ge m_T$,
\(
\lambda_{i,t+1}^C \le \delta
\)
holds with probability at least $1-\exp(-c_{\min} t)$.
A union bound over $i=1,\dots,N$ and $t=m_T,\dots,T$ yields
\[
\Pr\!\Big[\bigcap_{i=1}^N\bigcap_{t=m_T}^{T}\{\lambda_{i,t+1}^C\le \delta\}\Big]
\;\ge\; 1 - N T\, e^{-c_{\min} m_T}
\;\eqqcolon\; 1-p_T .
\]
On this event, for all $t\ge m_T$ we have $\lambda^C_{i^*_t,t}\le\delta$, hence by (\ref{eq:cont}),
\[
\big| V_{(N)}(0^+)-V_{(N)}(\lambda^C_{i^*_t,t}) \big| < \varepsilon,
\]
where $i^*_t=\arg\max_{i\in[N]} v_{i,t}$. For each of the first $m_T$ rounds, the absolute gap is $\le B$. Therefore,
\[
\Big| V_{(N)}(0^+) - \frac{1}{T}\sum_{t=1}^T V_{(N)}(\lambda^C_{i^*_t,t}) \Big|
\;\le\; \frac{m_T}{T}\,B + \frac{T-m_T}{T}\,\varepsilon
\]
on the above high-probability event; on its complement the gap is $\le B$. 
Taking expectations and using $p_T \le N T e^{-c_{\min} m_T}$ gives
\[
\E\Big[\Big| V_{(N)}(0^+) - \frac{1}{T}\sum_{t=1}^T V_{(N)}(\lambda^C_{i^*_t,t}) \Big|\Big]
\;\le\; \frac{m_T}{T}B + \varepsilon + B\,p_T \;\xrightarrow[T\to\infty]{}\; 0.
\]
Hence
\[
\lim_{T\to\infty}\E\Big[\frac{1}{T}\sum_{t=1}^T V_{(N)}(\lambda^C_{i^*_t,t})\Big]
~ = ~ V_{(N)}(0^+).
\]
Finally, combining with Lemma~\ref{lemma:lim_ge_ind-non-iid} yields the desired inequality against the independent-bidding total value, completing the proof.
\end{proof}

\section{Additional Experiment Results}
\label{app:fig}
This appendix presents additional figures in the experiments: Figure \ref{fig:iid} for the symmetric setting on synthetic data; Figure \ref{fig:non-iid} for the asymmetric setting on synthetic data; Figure \ref{fig:REAL-MERGE-K5} for the real-world dataset. 

\begin{figure}[H]
  \centering
  \begin{subfigure}[b]{0.48\textwidth}
    \centering
    \includegraphics[width=\textwidth]{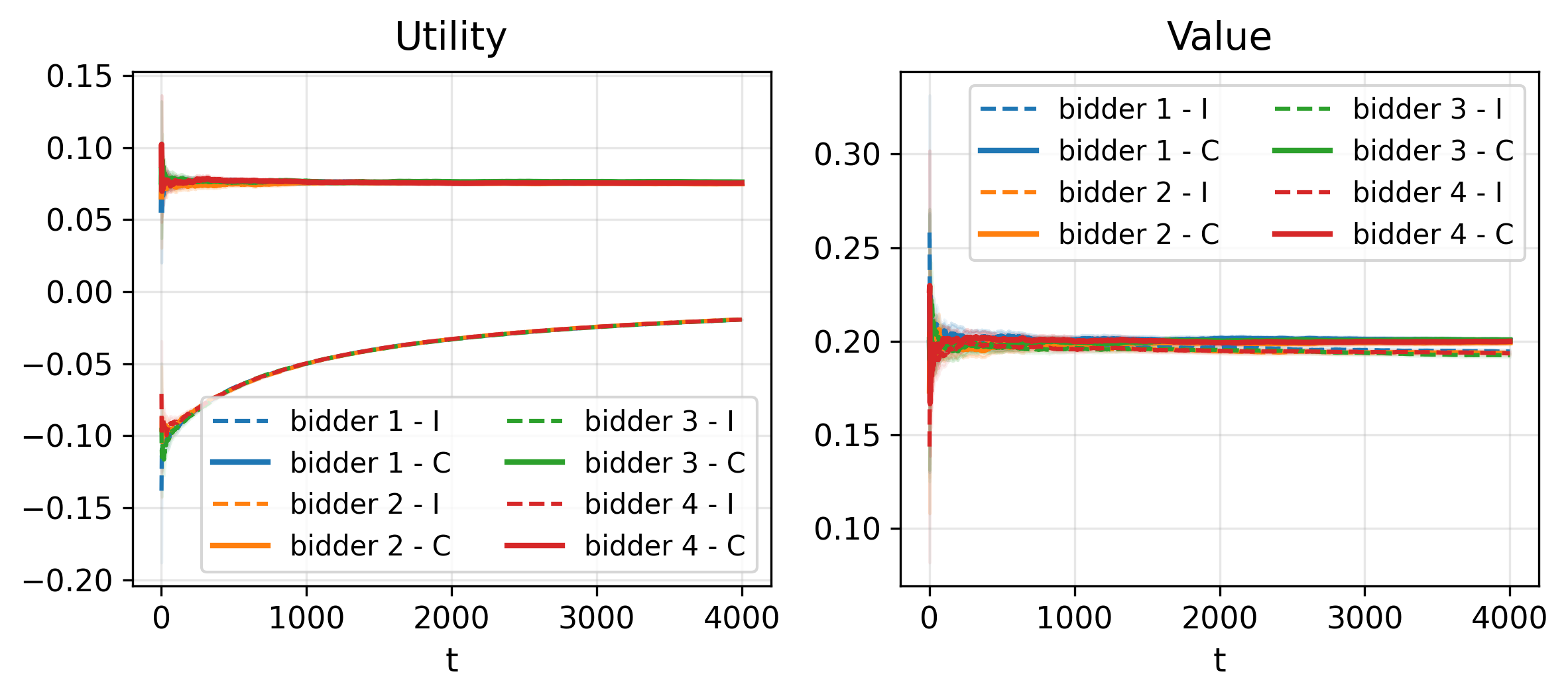}
    \caption{$N=4, F=U[0,1], D=U[0,1]$}
    \label{fig:I2}
  \end{subfigure}
  \hfill
  \begin{subfigure}[b]{0.48\textwidth}
    \centering
    \includegraphics[width=\textwidth]{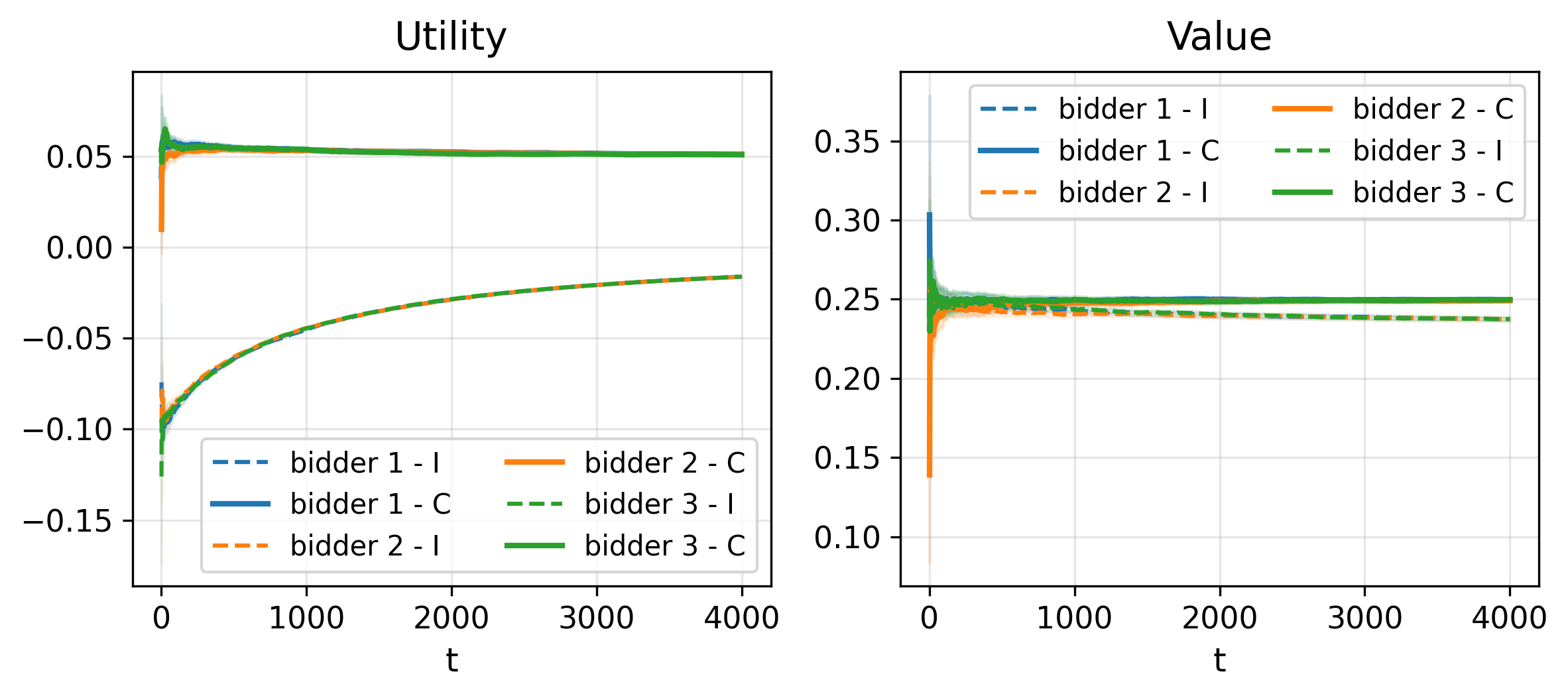}
    \caption{$N=3, F=U[0,1], D=\mathrm{Beta}(3,2)$}
    \label{fig:I3}
  \end{subfigure}
  \caption{Experiments under i.i.d auto-bidders.}
  \label{fig:iid}
\end{figure}

\begin{figure}[H]
  \centering
  \begin{subfigure}[b]{0.48\textwidth}
    \centering
    \includegraphics[width=\textwidth]{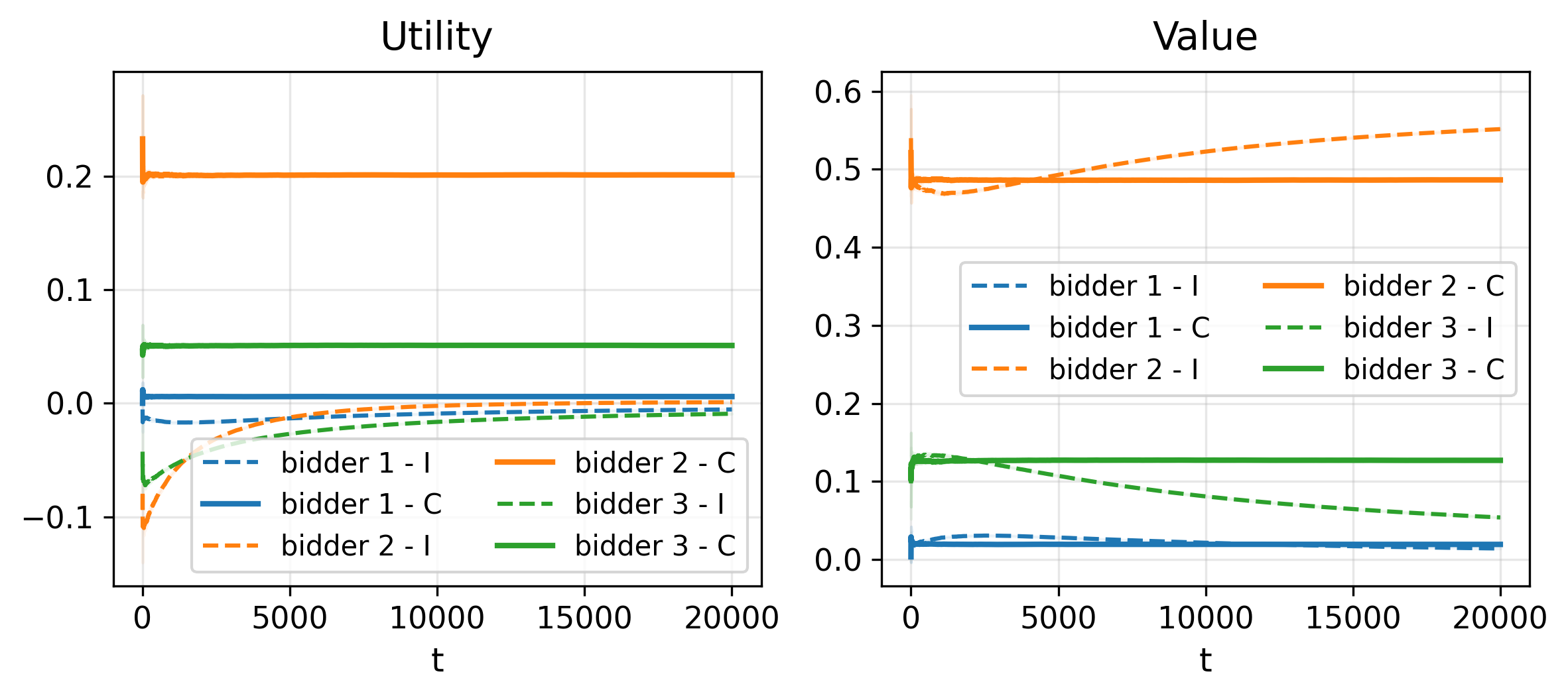}
    \caption{$N=3,F_1=\mathrm{Beta}(2, 6),F_2=\mathcal{N}_{\mathrm{cap}=[0,1]}(0.6,0.15),F_3=\mathcal{N}_{\mathrm{cap}=[0,1]}(0.4,0.2),D=\mathrm{Beta}(3,5)$}
    \label{fig:NI2}
  \end{subfigure}
  \hfill
  \begin{subfigure}[b]{0.48\textwidth}
    \centering
    \includegraphics[width=\textwidth]{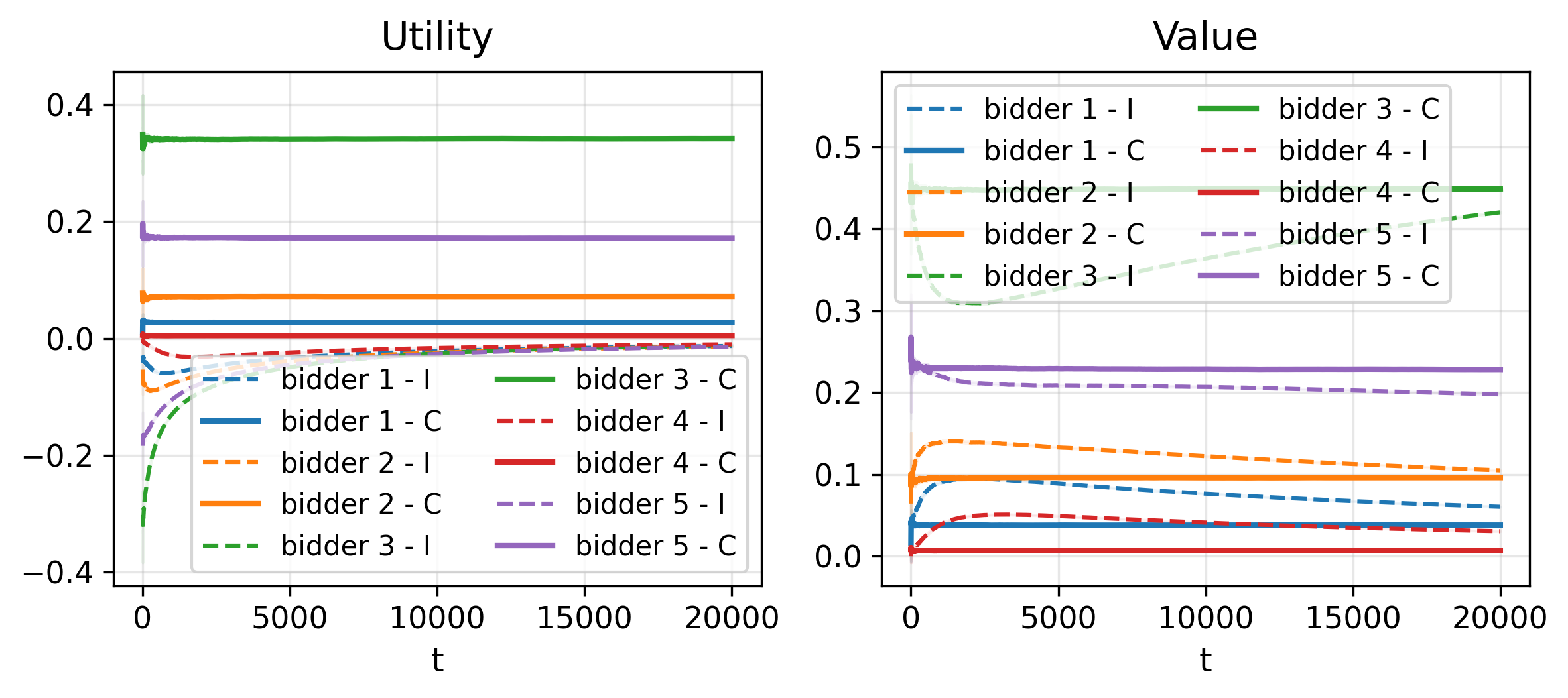}
    \caption{$N=5, F_1=U[0.2,0.8],F_2=\mathrm{Beta}(4,3), F_3=\mathrm{Beta}(6,2), F_4=\mathcal{N}_{\mathrm{cap}=[0,1]}(0.5,0.1),F_5=\mathcal{N}_{\mathrm{cap}=[0,1]}(0.7,0.12), D=\mathrm{Beta}(2,8)$}
    \label{fig:NI3}
  \end{subfigure}
  \caption{Experiments under non-i.i.d auto-bidders.}
  \label{fig:non-iid}
\end{figure}

\begin{figure}[H]
  \centering
  \includegraphics[width=0.48\textwidth]{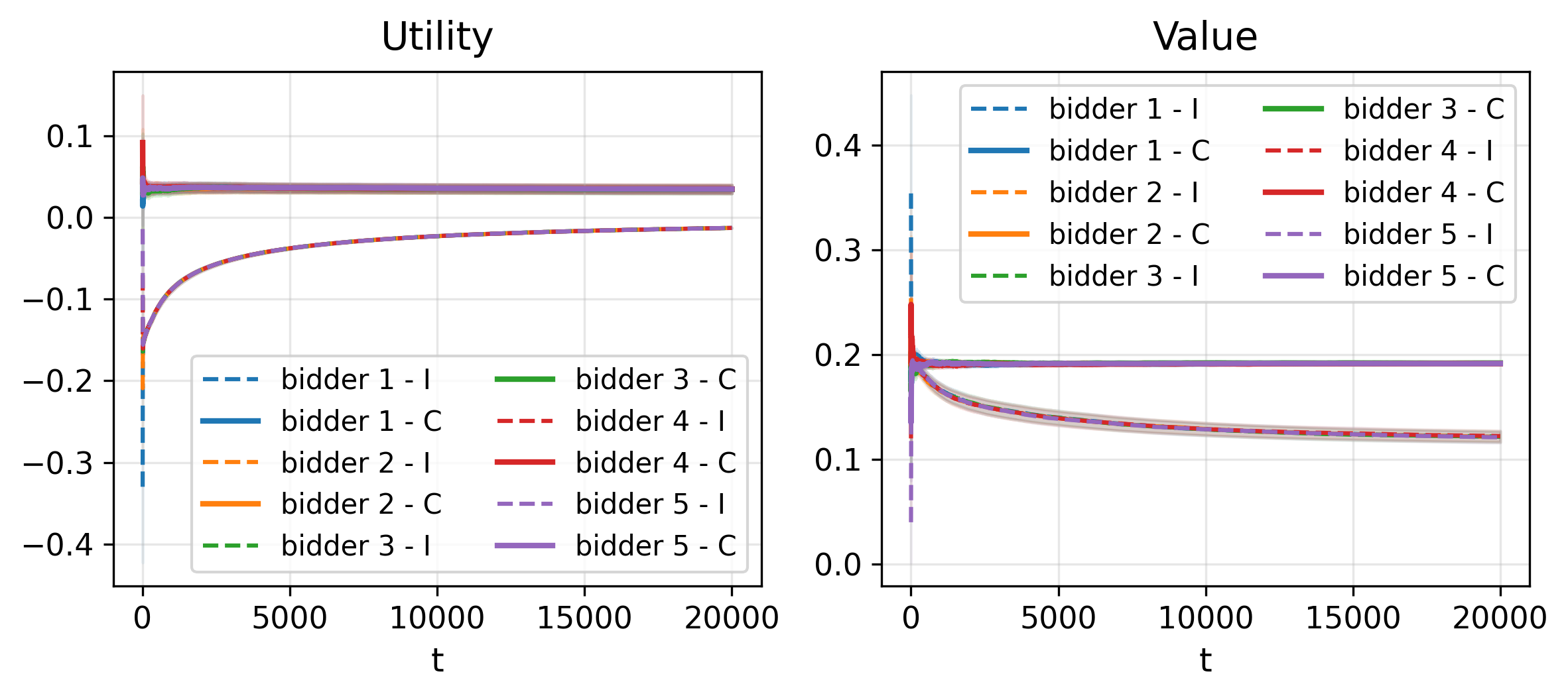}
  \caption{Experiments in real-world datasets. ($N=5$).}
  \label{fig:REAL-MERGE-K5}
\end{figure}

\end{document}